\numberwithin{equation}{section}
\def\d{\delta}
\def\re{\mathbb{R}}
\def\co{\mathbb{C}}
\def\ze{\mathbb{Z}}
\newcommand{\Slash}[1]{{\ooalign{\hfil#1\hfil\crcr\raise.167ex\hbox{/}}}}
\newtheorem{thm}{Theorem}
\newtheorem{lem}[thm]{Lemma}
\newtheorem{prop}[thm]{Proposition}
\newtheorem{cor}[thm]{Corollary}
\theoremstyle{definition}
\newtheorem{defn}{Definition}
\theoremstyle{remark}
\newtheorem{rem}{Remark\!}
\begin{document}

\title[Topological Current in Fractional Chern Insulators]{Topological Current\\
in Fractional Chern Insulators}
\author[T. Koma]{Tohru Koma}
\address[Koma]{Department of Physics, Gakushuin University, 
Mejiro, Toshima-ku, Tokyo 171-8588, JAPAN}
\email{tohru.koma@gakushuin.ac.jp}
\date{}

\begin{abstract}
We consider interacting fermions in a magnetic field on a two-dimensional lattice 
with the periodic boundary conditions. 
In order to measure the Hall current, we apply an electric potential with a compact support. 
Then, due to the Lorentz force, the Hall current appears along the equipotential line. 
Introducing a local current operator at the edge of the potential, we derive the Hall conductance 
as a linear response coefficient. For a wide class of the models, 
we prove that if there exists a spectral gap above the degenerate ground state, 
then the Hall conductance of the ground state is fractionally quantized 
without averaging over the fluxes. This is an extension of the topological argument for 
the integrally quantized Hall conductance in noninteracting fermion systems on lattices. 
\end{abstract}

\maketitle 

\section{Introduction}
\label{sec:intro}


The quantum Hall effect \cite{KDP,KawajiWakabayashi} is one of the most astonishing phenomena 
in condensed matter physics. 
In fact, the quantization of the Hall conductance is 
surprisingly insensitive to disorder and interactions.\footnote{A mathematical argument was 
given in \cite{Koma3}.} 
This robustness of the quantization reflects the topological nature of the Hall conductance formula. 
Namely, the topological structure never changes 
for continuously deforming disorder or interactions. 

The first discovery was that Thouless, Kohmoto, Nightingale and den Nijs \cite{TKNN} found 
that the Hall conductance is quantized to a nontrivial integer in a two-dimensional noninteracting electron 
system in a periodic potential and a magnetic field.    
Later, Kohmoto \cite{Kohmoto} realized that the integral quantization of the Hall conductance is due to 
the topological nature of the Hall conductance formula. Namely, the integer is nothing but 
the Chern number which is the winding number for the quantum mechanical $U(1)$ phases 
of the wavefunctions on the magnetic Brillouin zone which can be identified with a two-dimensional torus.  
More precisely, the $U(1)$ phases are nontrivially twisted on the torus, and the winding number is given by  
the number of times when the phases rotate on the $U(1)$ circle, and must be an integer.

Although realistic systems have disorder and interactions, they treated 
translationally invariant noninteracting systems only.  
Instead of resorting to the topological argument in usual differential geometry, 
the method of noncommutative geometry \cite{Connes} was found to be useful to show 
the integral quantization of the Hall conductance for 
noninteracting systems which do not necessarily require translation invariance \cite{BVS}.
(See also \cite{ASS,AG,ElSch,Koma2}.)  
However, its extension to interacting systems still remains an open problem. 

Another tricky approach which we will focus on the present paper was 
introduced by Niu, Thouless and Wu \cite{NTW}. They considered a generic quantum Hall system 
which is allowed to have disorder and interactions.   
They imposed the twisted boundary conditions with angles $\phi_1$ and $\phi_2$ 
at the two boundaries in a two dimensional system, and assumed 
that the system exhibits a nonvanishing uniform spectral gap above a $q$-hold degenerate ground state.   
Under the assumptions, they showed that the Hall conductance averaged over the two angles 
is fractionally quantized. (See also \cite{Koma1}.)
 
Clearly, their method is artificial, and their result does not implies 
that the Hall conductance for fixed twisted angles is fractionally quantized. 
But one can expect that the twisted boundary conditions do not affect the quantization of 
the Hall conductance. Quite recently, Hastings and Michalakis \cite{HastingsMichalakis} 
treated an interacting lattice electron system which does not necessarily require translation invariance, 
and showed that the Hall conductance is quantized to an integer irrespective of the twisted angles 
under the assumptions that the system shows a nonvanishing spectral gap above 
the unique ground state. 
 
In the present paper, we consider interacting fermions in a magnetic field on a two-dimensional lattice 
with the periodic boundary conditions. As is well known, in generic lattice systems, 
a constant electric field is known to be useless in order to measure the conductance 
because the absolutely continuous spectrum of the unperturbed Hamiltonian often changes to 
a pure point spectrum. Therefore, one cannot expect that there appears an electric flow 
due to the constant electric field. 

In the present paper, we overcome the difficulty as follows: 
In order to measure the Hall current, we apply an electric potential with a compact support, 
instead of the linear electric potential or the time dependent vector potential which 
yield the constant electric field. 
Then, due to the Lorentz force, the Hall current appears along the equipotential line. 
Introducing a local current operator at the edge of the potential, we derive the Hall conductance 
as a linear response coefficient. For a wide class of the models, 
we prove that if there exists a spectral gap above the degenerate ground state, 
then the Hall conductance of the ground state is fractionally quantized 
without averaging over the fluxes. This is an extension of the topological argument for 
the integrally quantized Hall conductance in noninteracting fermion systems on lattices.

The present paper is organized as follows. In Sec.~\ref{sec:Model}, we describe the model, 
and state our main Theorem~\ref{thm:main}. In Sec.~\ref{sec:LCO}, we define current operators.  
Using these expressions, we derive the Hall conductance formula as a linear response coefficient 
in Sec.~\ref{sec:LR}. Following the idea by Niu, Thouless and Wu, we twist the phases of 
the hopping amplitudes of the model, and we present Theorem~\ref{thm:twistDegeneGap} below 
about the stability of the spectral gap above the degenerate ground state against twisting the boundary 
conditions in Sec.~\ref{sec:TP}. In Sec.~\ref{sec:TopoInv}, 
the Hall conductance averaged over the phases is shown to be quantized to a fraction by using 
the standard topological argument. 
In Sec.~\ref{sec:topocurr}, we consider deformation of the current operators, and 
prove that the statement of our main Theorem~\ref{thm:main} is still valid 
for the deformed current operators so obtained. In other words, 
the fractional quantization of the Hall condudctance is robust against such deformation of 
the current operators. 
Sections~\ref{CurrentCurrentCorr}, \ref{sec:TPDHallcond} and \ref{sec:proofmain} are devoted to 
the proof of the main theorem. The proof of Theorem~\ref{thm:twistDegeneGap} 
about the stability of the spectral gap 
is fairly lengthy, and therefore given in Appendix~\ref{ProoftwistDegeneGap}.  
Appendices~\ref{CorrHallcon} and \ref{Proofa0tdiffinequal}--\ref{ProofTopoInvUnch} are devoted 
to technical estimates.

\section{Lattice Fermions in Two Dimensions}
\label{sec:Model}

Let us describe the model which we will treat in the present paper. 
Consider a rectangular box, 
\[
\Lambda:=\biggl[-\frac{L^{(1)}}{2},\frac{L^{(1)}}{2}\biggr]\times
\biggl[-\frac{L^{(2)}}{2},\frac{L^{(2)}}{2}\biggr],
\]
which is a finite subset of the two-dimensional square lattice $\mathbb{Z}^2$. 
Here both of $L^{(1)}$ and $L^{(2)}$ are taken to be a positive even integer.  
We consider interacting fermions on the lattice $\Lambda$ with the periodic boundary 
conditions. The Hamiltonian is given by  
\begin{equation}
H_0^{(\Lambda)}=\sum_{x,y\in\Lambda}t_{x,y}c_x^\dagger c_y
+\sum_{I\ge 1}\sum_{x_1,x_2,\ldots,x_I\in\Lambda}U_{x_1,x_2,\ldots,x_I} n_{x_1}n_{x_2}\cdots n_{x_I},
\label{ham}
\end{equation}
where $c_x^\dagger, c_x$ are, respectively, 
the creation and annihilation fermion operators at the site $x\in\Lambda$, 
the hopping amplitudes $t_{x,y}$ are complex numbers which satisfy the Hermitian conditions, 
\[
t_{y,x}=t_{x,y}^\ast, 
\]
and the coupling constants $U_{x_1,x_2,\ldots,x_I}$ of the interactions are real; 
$n_x=c_x^\dagger c_x$ is the number operator of the fermion at the site $x\in\Lambda$. 
We assume that both of the hopping amplitudes and the the interactions are 
of finite range in the sense of the graph theoretic distance.   

Clearly, the Hamiltonian $H_0^{(\Lambda)}$ of (\ref{ham}) commutes with the total number 
operator $\sum_{x\in\Lambda}n_x$ of the fermions for a finite volume $|\Lambda|<\infty$. 
We denote by $H_0^{(\Lambda,N)}$ the restriction of $H_0^{(\Lambda)}$ onto 
the eigenspace of the total number operator with the eigenvalue $N$. 

We require the existence of a ``uniform gap" above the sector of the 
ground state of the Hamiltonian $H_0^{(\Lambda,N)}$. 
Since we will take the infinite-volume limit $\Lambda\rightarrow \ze^2$, 
we keep the filling factor $\nu=N/|\Lambda|$ to be a nonzero value in the limit. 
The precise definition of the ``uniform gap" is:  

\begin{defn}
\label{definition}
We say that there is a uniform gap above the sector of the ground state
if the spectrum $\sigma(H_0^{(\Lambda,N)})$ of the Hamiltonian $H_0^{(\Lambda,N)}$ 
satisfies the following conditions: 
The ground state of the Hamiltonian $H_0^{(\Lambda,N)}$ is 
$q$-fold (quasi)degenerate in the sense that there are $q$ eigenvalues, 
$E_{0,1}^{(N)},\ldots, E_{0,q}^{(N)}$, in the sector of the ground state 
at the bottom of the spectrum of $H_0^{(\Lambda,N)}$ such that 
\[
\d E:=\max_{m,m'}\bigl\{\bigl|E_{0,m}^{(N)}-E_{0,m'}^{(N)}\bigr|\bigr\}
\rightarrow 0\quad\mbox{as }\
|\Lambda|\rightarrow\infty.
\label{deltaE} 
\]
Further the distance between the spectrum, $\{E_{0,1}^{(N)},\ldots, E_{0,q}^{(N)}\}$,   
of the ground state and the rest of the spectrum is larger than 
a positive constant $\Delta E$ which is independent of the volume $|\Lambda|$. 
Namely there is a spectral gap $\Delta E$ above the sector of the ground state. 
\end{defn}
 
We prove: 

\begin{thm}
\label{thm:main}
We assume that the ground state of the Hamiltonian $H_0^{(\Lambda,N)}$ is $q$-fold 
(quasi)degenerate, and there is a uniform gap above the sector of the ground state. 
Further, we assume that, even when changing the boundary conditions, 
there is no other infinite-volume ground state except the infinite-volume ground states 
which are derived from the $q$ ground-state vectors of the Hamiltonian $H_0^{(\Lambda,N)}$. 
Then, the Hall conductance $\sigma_{12}$ is fractionally quantized as 
\begin{equation}
\label{FQHC}
\sigma_{12}=\frac{1}{2\pi}\frac{p}{q}
\end{equation}
with some integer $p$ in the infinite-volume limit. 
\end{thm}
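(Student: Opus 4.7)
My plan is to follow the Niu--Thouless--Wu strategy, but to upgrade its conclusion from the averaged to the pointwise Hall conductance by exploiting the locality of the current operator that the paper has just introduced and its deformation invariance.

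First, I would use the linear-response machinery of Sections~\ref{sec:LCO}--\ref{sec:LR} to write $\sigma_{12}$ as a Kubo-type current--current correlation in the ground state sector, involving local current operators supported at the edge of the compactly-supported electric potential. The $q$-fold degeneracy complicates matters only mildly: one replaces the one-dimensional ground-state projection by the $q$-dimensional spectral projection $P_0^{(\Lambda,N)}$ onto $\{E_{0,1}^{(N)},\ldots,E_{0,q}^{(N)}\}$, and the uniform gap $\Delta E$ makes the resolvent on the orthogonal complement well-defined with norm bounded by $1/\Delta E$, so the Kubo formula is a bona fide bounded functional of $P_0^{(\Lambda,N)}$ and a fixed pair of local currents.

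Second, I would introduce twisted boundary conditions with angles $(\phi_1,\phi_2)\in[0,2\pi]^2$, as in Section~\ref{sec:TP}. By the stability Theorem~\ref{thm:twistDegeneGap}, combined with the hypothesis that no new infinite-volume ground states appear when the boundary conditions are varied, the $q$-fold degeneracy and the spectral gap persist \emph{uniformly} for all $(\phi_1,\phi_2)$. This yields a smooth family of ground-state projections $P(\phi_1,\phi_2)$ of constant rank $q$, i.e.\ a smooth rank-$q$ Hermitian vector bundle over the torus $\mathbb{T}^2$ of twist angles. Its first Chern number is a well-defined integer $p$. Applying the standard topological argument of Section~\ref{sec:TopoInv} to the Berry curvature of this bundle then identifies the twist-averaged Hall conductance with this Chern number:
\[
\frac{1}{(2\pi)^2}\int_{\mathbb{T}^2}\sigma_{12}(\phi_1,\phi_2)\,d\phi_1\,d\phi_2
=\frac{1}{2\pi}\cdot\frac{p}{q},
\]
the factor $1/q$ arising because the bundle has rank $q$ and the Kubo formula is normalized by the trace over ground states.

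The crucial and nonstandard final step is to remove the averaging. The key observation is that a change of twist angles $(\phi_1,\phi_2)\to(\phi_1',\phi_2')$ can be implemented by a gauge transformation that, away from the two seams where the boundary identifications are made, is locally trivial; its effect on the Kubo integrand is therefore equivalent to a local deformation of the current operators that enter the formula. Because in the present setup the current operator is genuinely local (attached to the edge of a compactly-supported potential), the deformation invariance established in Section~\ref{sec:topocurr} applies and gives $\sigma_{12}(\phi_1,\phi_2)=\sigma_{12}(0,0)$. The pointwise value therefore equals its average, which is $p/(2\pi q)$, proving \eqref{FQHC}.

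I expect this last step to be the main obstacle: one must show rigorously that the quasi-local tails distinguishing a genuine boundary twist from a compactly supported current-operator deformation contribute a vanishing amount to $\sigma_{12}$ in the infinite-volume limit, and this must hold \emph{uniformly} in $(\phi_1,\phi_2)$ so that $\sigma_{12}(\phi_1,\phi_2)$ is truly constant. The uniform spectral gap provided by Theorem~\ref{thm:twistDegeneGap}, together with the Lieb--Robinson-type quasi-locality estimates that underlie it, should yield exponential decay sharp enough to close the argument via the technical estimates assembled in Sections~\ref{CurrentCurrentCorr}--\ref{sec:proofmain}.
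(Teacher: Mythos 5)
Your overall architecture is the same as the paper's: establish the NTW topological quantization of the twist-averaged conductance (Section~\ref{sec:TopoInv}), show that the Kubo response formula built from the local current operators converges to the finite-volume Berry-curvature expression (Section~\ref{CurrentCurrentCorr}), and then argue that the conductance is in fact asymptotically \emph{independent} of the twist angles, so the pointwise value coincides with the average. On these points you and the paper agree.

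Where the proposal goes off is in the mechanism you propose for removing the averaging, and this is not a cosmetic issue. You assert that a change of twist angles $(\phi_1,\phi_2)\to(\phi'_1,\phi'_2)$ can be ``implemented by a gauge transformation that, away from the two seams, is locally trivial,'' and that its effect on the Kubo integrand is therefore ``equivalent to a local deformation of the current operators,'' to which Lemma~\ref{lem:TopoInvUnch} of Section~\ref{sec:topocurr} would apply. Neither of these steps is correct. A twist angle is a global flux: it cannot be gauged away, and two Hamiltonians $H_0^{(\Lambda)}(\phi_1;\phi_2)$ and $H_0^{(\Lambda)}(\phi'_1;\phi'_2)$ with $\phi_j\neq\phi'_j$ are not unitarily conjugate — only the \emph{position} of the seam may be moved by the local unitaries discussed in Section~\ref{sec:TP}, not its \emph{strength}. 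Consequently the twist change is not equivalent to any deformation of the local current operator, and Lemma~\ref{lem:TopoInvUnch} (which addresses changing the step function defining $J^{(j)}$, and which in fact plays no role in the proof of Theorem~\ref{thm:main}) does not apply. If one tried to push through the argument as you state it, one would find no unitary implementing the angle change, and the claimed equivalence would collapse.

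The paper instead carries out the de-averaging directly in Section~\ref{sec:TPDHallcond}. One first replaces the full $\Lambda$-time-evolution of $J(\Gamma_\mathcal{M})$ by the time evolution generated by a restricted Hamiltonian $H_0^{(\Omega)}$ on a region $\Omega$ that contains the supports of the local currents but is macroscopically far both from $\partial\Lambda$ and from the twist seam; this replacement is controlled by the Lieb--Robinson inequality (\ref{timeevolbound}) and gives (\ref{diffsigmadiffLambdaOmega}). One then shows that the resulting finite-volume Kubo expectation is a differentiable function of $\boldsymbol{\phi}$ whose derivative $\partial_{\phi_j}\omega_0^{(\Lambda,N)}(A;\boldsymbol{\phi})$ is a gapped current-current correlation between the local operator $A$ and the boundary derivative $B_j(\boldsymbol{\phi})=\partial_{\phi_j}H_0^{(\Lambda)}(\boldsymbol{\phi})$; the uniform spectral gap of Theorem~\ref{thm:twistDegeneGap} plus exponential clustering then bound this derivative by $\mathcal{O}(e^{-{\rm const}\cdot L})$, uniformly in $\boldsymbol{\phi}$, yielding (\ref{diffsigmadiffphi}) and hence $\tilde\sigma_{12}(\boldsymbol{\phi})\to\tilde\sigma_{12}(0,0)$. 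This is exactly the ``quasi-local tails'' estimate you correctly anticipate at the end of your proposal as the main obstacle — but it is an estimate on the $\boldsymbol{\phi}$-dependence of the Kubo formula itself, not a consequence of deformation invariance of the current operator or of a gauge argument. You should replace the ``gauge transformation $\Rightarrow$ local deformation $\Rightarrow$ Lemma~\ref{lem:TopoInvUnch}'' chain by the direct clustering estimate of Section~\ref{sec:TPDHallcond}.
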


The precise definition of the Hall conductance $\sigma_{12}$ is given by (\ref{sigma12}) 
in Sec.~\ref{sec:LR} below. 

\begin{rem}
We can relax the assumption on the number of the ground states so that 
when twisting the phases of the hopping amplitudes at the boundaries, 
there appears no other infinite-volume ground state at energies lower than or close to 
the energies of the $q$-fold ground state,    
except the infinite-volume ground states 
which are derived from the $q$ ground-state vectors of the Hamiltonian $H_0^{(\Lambda,N)}$. 

If one includes the charge $e$ of the electron and the Planck constant $h$ in the computations, 
then the Hall conductance (\ref{FQHC}) is written in the usual form, 
$$
\sigma_{12}=\frac{e^2}{h}\frac{p}{q}. 
$$

The possibility of the spectral gap above the $q$-fold degenerate ground state 
in quantum Hall systems was treated by \cite{Koma4} in a mathematical manner. 
The numerical evidences of the existence of nontrivial fractional 
Chern insulators were shown in \cite{SGSS,NSCM,RB,WBR}.  
\end{rem}

\section{Local Current Operators}
\label{sec:LCO}

In order to measure the Hall current, we must define current operators 
by relying on the Ehrenfest's theorem in quantum mechanics. 
More precisely, the velocity of the charged particle which gives the current 
can be determined by differentiating the expectation value of the center of mass 
with respect to time $t$.

\subsection{Current operators for a single particle}

Consider the time evolution of a wavepacket of 
the single particle on the infinite-volume lattice $\ze^2$. 
The single-body Hamiltonian $\mathcal{H}_0$ corresponding to the present system is given by  
\[
(\mathcal{H}_0\psi)(x)=\sum_{y\in\ze^2}t_{x,y}\psi(y)+U_x\psi(x)
\]
for $\psi\in\ell^2(\ze^2)$. The Schr\"odinger equation which determines the time evolution 
of the wavepacket is given by 
\[
i\frac{d}{dt}\psi_t=\mathcal{H}_0\psi_t
\]
for the wavefunction $\psi_t$ at the time $t$. The expectation value of an observable $a$ is 
\[
\langle a\rangle_t:=\frac{\langle \psi_t, a\psi_t\rangle}{\langle \psi_t,\psi_t\rangle},
\]
where the inner product is defined by 
\[
\langle \varphi,\psi\rangle:=\sum_{x\in\ze^2}\varphi(x)^\ast \psi(x) 
\]
for $\varphi,\psi\in\ell^2(\ze^2)$. 

The current due to the motion of the particle is determined by the velocity of the center of mass. 
The center of mass is nothing but the expectation value of the position operator. 
The position operator $X=(X^{(1)},X^{(2)})$ is defined by 
\[
(X^{(j)}\psi)(x)=x^{(j)}\psi(x),\quad j=1,2,
\]
where we have written $x=(x^{(1)},x^{(2)})\in\ze^2$. 
The Ehrenfest's theorem for the position operator is expressed as  
\[
\frac{d}{dt}\langle X^{(j)}\rangle_t=\langle i[\mathcal{H}_0,X^{(j)}]\rangle_t, 
\]
where $[A,B]$ is the commutator of two observables, $A$ and $B$. Therefore, the natural 
definition of the current operator must be 
\begin{equation}
\mathcal{I}^{(j)}=i[\mathcal{H}_0,X^{(j)}].
\label{currentopsingle}
\end{equation}
Clearly, this is an infinite sum. We want to decompose the current operator into 
the sum of local current operators which are much more useful us 
to obtain the Hall conductance formula below. 

In order to define local current operators, we introduce 
an approximate position operator as   
\[
\bigl(X_\ell^{(j)}\psi\bigr)(x)=f_\ell^{(j)}(x)\psi(x),
\]
where the function $f_\ell^{(j)}$ on $\ze^2$ is given by  
\[
f_\ell^{(j)}(x):=\sum_{k=1}^{2\ell}\theta^{(j)}(x;k-\ell)-\ell
\]
with the step function, 
\[
\theta^{(j)}(x;k):=\begin{cases} 1, & \text{$x^{(j)}\ge k$};\\
0, & \text{$x^{(j)}<k$}.
\end{cases}
\]
Clearly, we have $\Vert (X^{(j)}-X_\ell^{(j)})\psi\Vert\rightarrow 0$ as $\ell\rightarrow\infty$ 
for $(X^{(j)}\psi)\in\ell^2(\ze^2)$. 
By replacing $X^{(j)}$ with  $X_\ell^{(j)}$ in the right-hand side of (\ref{currentopsingle}), 
we obtain the approximate current operator as 
\begin{align*}
\mathcal{I}_\ell^{(j)}&=i[\mathcal{H}_0,X_\ell^{(j)}] \\
&=\sum_{k=1}^{2\ell}i[\mathcal{H}_0,\theta^{(j)}(\cdots;k-\ell)].
\end{align*}
The summand in the right-hand side is interpreted as the local current operator. 
Namely, the local current operator $J^{(j)}(k)$ across the $k$-th site in the $j$ direction 
is given by the commutator of the Hamiltonian $\mathcal{H}_0$ and 
the step function $\theta^{(j)}(\cdots;k)$ as 
\[
J^{(j)}(k)=i[\mathcal{H}_0,\theta^{(j)}(\cdots;k)]. 
\]
\subsection{Current operators for many fermions}

The step function is written 
\begin{equation}
\theta^{(j)}(k)=\sum_{x\in\ze^2}\theta^{(j)}(x;k)\; n_x
\label{stepfunc}
\end{equation}
in terms of the number operator $n_x$ of the fermions. 
Therefore, the local current operator $J^{(j)}(k)$ is formally written 
\begin{align}
\label{localcurrent}
J^{(j)}(k)&=i[H_0^{(\ze^2)},\theta^{(j)}(k)]\\ \nonumber
&=i\sum_{x,y\in\ze^2}\sum_{\substack{z\in\ze^2\\ z^{(j)}\ge k}}t_{x,y}[c_x^\dagger c_y,n_z]\\ \nonumber
&=i\sum_{\substack{x\in\ze^2\\ x^{(j)}<k}}\;\sum_{\substack{y\in\ze^2\\ y^{(j)}\ge k}}
(t_{xy}c_x^\dagger c_y-t_{yx}c_y^\dagger c_x),
\end{align}
where we have used the expression (\ref{ham}) of the Hamiltonian, 
and we have written $x=(x^{(1)},x^{(2)})$. 
Although this is clearly an infinite sum, the expression can be justified 
for a localized wavefunction of many fermions.   
For the present finite-volume  lattice $\Lambda$, we can define 
the local current operator $J^{(j)}(k)$ as 
\[
J^{(j)}(k):=i\sum_{\substack{x\in\Lambda\\ x^{(j)}<k}}
\ \sum_{\substack{y\in\Lambda\\ y^{(j)}\ge k}}(t_{xy}c_x^\dagger c_y-t_{yx}c_y^\dagger c_x)
\]
with the periodic boundary conditions because the range of the hopping amplitudes is finite. 
In order to measure the Hall current, we introduce an approximate local current operator as 
\begin{multline}
J_\mathcal{N}^{(1)}(k,\ell)\\ 
:=i\sum_{x^{(1)}<k}\ \sum_{\ell-\mathcal{N}\le x^{(2)}\le \ell+\mathcal{N}}\  
\sum_{y^{(1)}\ge k}\ \sum_{\ell-\mathcal{N}\le y^{(2)}\le \ell+\mathcal{N}}
(t_{xy}c_x^\dagger c_y-t_{yx}c_y^\dagger c_x),
\label{approJ1}
\end{multline}
where $\mathcal{N}$ is a positive integer and $\ell$ is an integer. 
Namely, $\mathcal{N}$ is the cutoff in the second direction.

\section{Linear Response}
\label{sec:LR}

In order to measure the Hall current, we apply an electric potential to the present system. 
For this purpose, we introduce a $2\mathcal{M}\times 2\mathcal{M}$ square box as 
\[
\Gamma_\mathcal{M}(k,\ell):=\{x=(x^{(1)},x^{(2)})\; |\; k-\mathcal{M}\le x^{(1)}\le k+\mathcal{M},
\ell\le x^{(2)}\le \ell+2\mathcal{M}\}
\]
for a positive integer $\mathcal{M}$, and the sum of the number operators $n_x$ on the box as  
\begin{equation}
\label{chiGamma}
\chi(\Gamma_\mathcal{M}(k,\ell)):=\sum_{x\in\Gamma_\mathcal{M}(k,\ell)}n_x. 
\end{equation}
The latter yields the unit voltage difference from the outside of the region. 
The current operator segment $J_\mathcal{N}^{(1)}(k,\ell)$ of (\ref{approJ1})  
goes across the bottom side of the square box $\Gamma_\mathcal{M}(k,\ell)$. 
When we change the potential energy inside the box by using 
the potential operator $\chi(\Gamma_\mathcal{M}(k,\ell))$, 
the Hall current is expected to appear along the boundary of the box $\Gamma_\mathcal{M}(k,\ell)$ 
due to the Lorentz force. The Hall current is measured by the observable $J_\mathcal{N}^{(1)}(k,\ell)$. 
More precisely, we use a time-dependent electric potential.  
The Hamiltonian having the electric potential on the region $\Gamma_\mathcal{M}(k,\ell)$ is given by 
\begin{equation}
H^{(\Lambda)}(t):=H_0^{(\Lambda)}+\lambda W(t)
\end{equation}
with the perturbed Hamiltonian, 
\begin{equation}
W(t)=e^{\eta t}\chi(\Gamma_\mathcal{M}(k,\ell)). 
\label{Wt}
\end{equation}
Here the voltage difference $\lambda$ is a real parameter, 
and the adiabatic parameter $\eta$ is a small positive number.  
We switch on the electric potential at the initial time $t=-T$ 
with a large positive $T$, and measure the Hall current at the final time $t=0$. 

The time-dependent Schr\"odinger equation is given by  
\[
i\frac{d}{dt}\Psi^{(N)}(t)=H^{(\Lambda)}(t)\Psi^{(N)}(t)
\]
for the wavefunction $\Psi^{(N)}(t)$ for the $N$ fermions. 
We denote the time evolution operator for the unperturbed Hamiltonian $H_0^{(\Lambda)}$ by 
\begin{equation}
U_0^{(\Lambda)}(t,s):=\exp\bigl[-i(t-s)H_0^{(\Lambda)}\bigr]\quad
\text{for \ } 
t,s\in\re.
\label{U0}
\end{equation}
We choose the initial vector $\Psi^{(N)}(-T)$ at $t=-T$ as 
\[
\Psi^{(N)}(-T)=U_0^{(\Lambda)}(-T,0)\Phi^{(N)}
\] 
with a vector $\Phi^{(N)}$. Then, the final vector $\Psi^{(N)}=\Psi^{(N)}(t=0)$ 
is obtained as 
\begin{equation}
\Psi^{(N)}=\Phi^{(N)}-i\lambda\int_{-T}^0ds\; U_0^{(\Lambda)}(0,s)
W(s)U_0^{(\Lambda)}(s,0)\Phi^{(N)}+o(\lambda)
\label{PsiN}
\end{equation}
by using a perturbation theory \cite{Koma1}, where $o(\lambda)$ denotes a vector $\Psi^{(N)}_R$ 
with the norm $\Vert\Psi_R^{(N)}\Vert$ satisfying $\Vert\Psi_R^{(N)}\Vert/\lambda\rightarrow 0$ 
as $\lambda\rightarrow 0$.  

We denote the $q$ ground-state vectors of the unperturbed Hamiltonian $H_0^{(\Lambda,N)}$ 
by $\Phi_{0,m}^{(N)}$ with the energy eigenvalue $E_{0,m}^{(N)}$, $\ m=1,2,\ldots,q$. 
We choose the initial vector $\Phi^{(N)}=\Phi_{0,m}^{(N)}$ with the norm $1$, and 
write the corresponding final vector at $t=0$ as $\Psi^{(N)}=\Psi_{0,m}^{(N)}$.  
Then, the ground-state expectation value of the approximate local current operator 
$J_\mathcal{N}^{(1)}(k,\ell)$ is given by 
\begin{equation}
\bigl\langle J_\mathcal{N}^{(1)}(k,\ell)\bigr\rangle:=
\frac{1}{q}\sum_{m=1}^q \bigl\langle\Psi_{0,m}^{(N)},J_\mathcal{N}^{(1)}(k,\ell)
\Psi_{0,m}^{(N)}\bigr\rangle.
\label{expApproJ1}
\end{equation}

Let $P_0^{(\Lambda,N)}$ be the projection onto the sector of the ground state 
of $H_0^{(\Lambda,N)}$, and we denote the ground-state expectation by 
\[
\omega_0^{(\Lambda,N)}(\cdots):=\frac{1}{q}\;{\rm Tr}\ (\cdots)P_0^{(\Lambda,N)},  
\]
where $q$ is the degeneracy of the ground state.
Using the linear perturbation (\ref{PsiN}), the expectation value of (\ref{expApproJ1}) is decomposed into 
three parts as  
\[
\bigl\langle J_\mathcal{N}^{(1)}(k,\ell)\bigr\rangle=
\bigl\langle J_\mathcal{N}^{(1)}(k,\ell)\bigr\rangle_0+
\bigl\langle J_\mathcal{N}^{(1)}(k,\ell)\bigr\rangle_1+o(\lambda),
\]
where 
\[
\bigl\langle J_\mathcal{N}^{(1)}(k,\ell)\bigr\rangle_0
=\omega_0^{(\Lambda,N)}\bigl(J_\mathcal{N}^{(1)}(k,\ell)\bigr),
\]
and
\begin{multline}
\bigl\langle J_\mathcal{N}^{(1)}(k,\ell)\bigr\rangle_1\\
=-i\lambda\int_{-T}^0 ds\; \omega_0^{(\Lambda,N)}\bigl(J_\mathcal{N}^{(1)}(k,\ell)
U_0^{(\Lambda)}(0,s)W(s)U_0^{(\Lambda)}(s,0)\bigr)+{\rm c}.{\rm c}.
\label{expApproJ11} 
\end{multline}
The first term $\bigl\langle J_\mathcal{N}^{(1)}(k,\ell)\bigr\rangle_0$ 
is the persistent current which is usually vanishing. We are interested 
in the second term $\bigl\langle J_\mathcal{N}^{(1)}(k,\ell)\bigr\rangle_1$ 
which gives the linear response coefficient, i.e., the Hall conductance.  
We write 
\begin{equation}
\chi^{(\Lambda)}(\Gamma_\mathcal{M}(k,\ell);s)
:=U_0^{(\Lambda)}(0,s)\chi(\Gamma_\mathcal{M}(k,\ell))U_0^{(\Lambda)}(s,0).
\label{defchiGammas}
\end{equation}
Using this and the definition (\ref{Wt}) of $W(t)$, 
the contribution (\ref{expApproJ11}) of the expectation value of the local current is written 
\begin{equation}
\bigl\langle J_\mathcal{N}^{(1)}(k,\ell)\bigr\rangle_1
=i\lambda\int_{-T}^0ds \; e^{\eta s}\omega_0^{(\Lambda,N)}
\bigl([\chi^{(\Lambda)}(\Gamma_\mathcal{M}(k,\ell);s),J_\mathcal{N}^{(1)}(k,\ell)]\bigr).
\label{expApproJ11C}
\end{equation}
Note that, by using integral by parts, we have 
\begin{multline}
\label{corlinearres}
\int_{-T}^0ds \; e^{\eta s}\omega_0^{(\Lambda,N)}
\bigl([\chi^{(\Lambda)}(\Gamma_\mathcal{M}(k,\ell);s),J_\mathcal{N}^{(1)}(k,\ell)]\bigr)\\ 
=\left[se^{\eta s}\omega_0^{(\Lambda,N)}
\bigl([\chi^{(\Lambda)}(\Gamma_\mathcal{M}(k,\ell);s),J_\mathcal{N}^{(1)}(k,\ell)]\bigr)\right]_{-T}^0\\ 
-\int_{-T}^0ds \; \eta s e^{\eta s}\omega_0^{(\Lambda,N)}
\bigl([\chi^{(\Lambda)}(\Gamma_\mathcal{M}(k,\ell);s),J_\mathcal{N}^{(1)}(k,\ell)]\bigr)\\ 
-\int_{-T}^0ds \; s e^{\eta s}\omega_0^{(\Lambda,N)}
\left(\biggl[\frac{d}{ds}\chi^{(\Lambda)}(\Gamma_\mathcal{M}(k,\ell);s),J_\mathcal{N}^{(1)}(k,\ell)\biggr]
\right).
\end{multline}
Clearly, the first term in the right-hand side is vanishing as $T\rightarrow \infty$. 
The second term is also vanishing as $\eta\rightarrow 0$ after taking the limit $T\rightarrow\infty$. 
We show this in Appendix~\ref{CorrHallcon}. 
The third term in the right-hand side is written 
\[
\int_{-T}^0ds \; s e^{\eta s}\omega_0^{(\Lambda,N)}
\bigl(\bigl[J_\mathcal{N}^{(1)}(k,\ell),J^{(\Lambda)}(\Gamma_\mathcal{M}(k,\ell);s)\bigr]\bigr), 
\]
where 
\begin{multline*}
J^{(\Lambda)}(\Gamma_\mathcal{M}(k,\ell);s)\\
:=\frac{d}{ds}\chi^{(\Lambda)}(\Gamma_\mathcal{M}(k,\ell);s)=U_0^{(\Lambda)}(0,s)
i[H_0^{(\Lambda)},\chi(\Gamma_\mathcal{M}(k,\ell))]U_0^{(\Lambda)}(s,0).
\end{multline*}
For getting the second equality in the right-hand side, we have used the definition (\ref{U0}) of the time 
evolution operator $U_0^{(\Lambda)}(t,s)$ and the definition (\ref{defchiGammas}) of the operator 
$\chi^{(\Lambda)}(\Gamma_\mathcal{M}(k,\ell);s)$. 
Clearly, the above operator $J^{(\Lambda)}(\Gamma_\mathcal{M}(k,\ell);s)$ 
is interpreted as the current across the boundary 
of the region $\Gamma_\mathcal{M}(k,\ell)$ at the time $s$. From these observations, 
we define the Hall conductance for the finite lattice $\Lambda$ as 
\begin{multline}
\label{tildesigma12}
{\tilde\sigma}_{12}^{(\Lambda,N)}(\eta,T,\mathcal{N},\mathcal{M})\\
:=i\int_{-T}^0ds \; s e^{\eta s}\omega_0^{(\Lambda,N)}
\bigl(\bigl[J_\mathcal{N}^{(1)}(k,\ell),J^{(\Lambda)}(\Gamma_\mathcal{M}(k,\ell);s)\bigr]\bigr)
\end{multline}
because $\lambda$ is the voltage difference. 
The Hall conductance in the infinite-volume limit is given by 
\begin{equation}
\sigma_{12}:=\lim_{\mathcal{N}\rightarrow\infty}\lim_{\mathcal{M}\rightarrow\infty}
\lim_{\eta\rightarrow 0}\lim_{T\rightarrow\infty}\lim_{\Lambda\nearrow\ze^2}
{\tilde\sigma}_{12}^{(\Lambda,N)}(\eta,T,\mathcal{N},\mathcal{M}).
\label{sigma12}
\end{equation}
\smallskip

\section{Twisting the Phases}
\label{sec:TP}

Consider 
\[
H_0^{(\ze^2)}(\phi_j,k):=\exp[-i\phi_j\theta^{(j)}(k)]H_0^{(\ze^2)}
\exp[i\phi_j\theta^{(j)}(k)],\quad \text{for \ } \phi_j\in\re, 
\]
where $\theta^{(j)}(k)$ is the step function of (\ref{stepfunc}). 
This transformation changes the hopping amplitudes as 
\[
t_{xy}\rightarrow t_{xy}e^{i\phi_j}\quad \text{for \ } x^{(j)}<k,\; y^{(j)}\ge k, 
\]
and the opposite hopping amplitude $t_{yx}$ is determined 
by the Hermitian condition $t_{yx}=t_{xy}^\ast$. 
The rest of the hopping amplitudes do not change. 
In both of the first and the second directions, we define the twist by 
\begin{multline*}
H_0^{(\ze^2)}(\phi_1,k_1;\phi_2,k_2)\\
:=\exp\bigl[-i\bigl(\phi_1\theta^{(1)}(k_1)+\phi_2\theta^{(2)}(k_2)\bigr)\bigr]
H_0^{(\ze^2)}
\exp\bigl[i\bigl(\phi_1\theta^{(1)}(k_1)+\phi_2\theta^{(2)}(k_2)\bigr)\bigr].
\end{multline*}
This transformation is justified for localized wavefunctions. 

Relying on this rule, we can twist the phases of the hopping amplitudes 
of the finite-volume Hamiltonian $H_0^{(\Lambda)}$.
We denote by $H_0^{(\Lambda)}(\phi_1,k_1;\phi_2,k_2)$ 
the corresponding finite-volume Hamiltonian on $\Lambda$. 
When both of $k_1$ and $k_2$ are placed at the boundaries of the lattice $\Lambda$, 
the boundary conditions of the system become the usual twisted boundary conditions.  
We write $H_0^{(\Lambda)}(\phi_1;\phi_2)$ for the Hamiltonian 
with the twisted boundary conditions for short. 
Further, one has 
\[
H_0^{(\ze^2)}(\phi_j,k-1)=\exp\biggl[-i\phi_j\sum_{\substack{x\in\ze^2\\ x^{(j)}=k-1}}n_x\biggr]
H_0^{(\ze^2)}(\phi_j,k)
\exp\biggl[i\phi_j\sum_{\substack{x\in\ze^2\\ x^{(j)}=k-1}}n_x\biggr].
\]
Thus, one can change the position of the twisted hopping amplitudes 
by using the unitary transformation which is local in the $j$-th direction.  

Similarly, we can consider 
\begin{multline*}
J^{(j)}(k;\phi_1,k_1;\phi_2,k_2)\\
:=\exp\bigl[-i\bigl(\phi_1\theta^{(1)}(k_1)+\phi_2\theta^{(2)}(k_2)\bigr)\bigr]J^{(j)}(k)
\exp\bigl[i\bigl(\phi_1\theta^{(1)}(k_1)+\phi_2\theta^{(2)}(k_2)\bigr)\bigr]
\end{multline*}
for the local current operator $J^{(j)}(k)$. Therefore, 
in the same way as in the case of the Hamiltonian, 
one can change the position of the twisted phases as   
\begin{multline*}
J^{(j)}(k;\phi_1,k_1-1;\phi_2,k_2)\\
=\exp\biggl[-i\phi_1\sum_{\substack{x\in \ze^2\\ x^{(1)}=k_1-1}}n_x\biggr]
J^{(j)}(k;\phi_1,k_1;\phi_2,k_2)\exp\biggl[i\phi_1\sum_{\substack{x\in\ze^2 \\ x^{(1)}=k_1-1}}n_x\biggr].
\end{multline*}
Further, one has 
\begin{align*}
\frac{\partial}{\partial\phi_j}H_0^{(\ze^2)}(\phi_j,k)&=
\exp[-i\phi_j\theta^{(j)}(k)]i[H_0^{(\ze^2)},\theta^{(j)}(k)]
\exp[i\phi_j\theta^{(j)}(k)]\\
&=\exp[-i\phi_j\theta^{(j)}(k)]J^{(j)}(k)\exp[i\phi_j\theta^{(j)}(k)].
\end{align*}
Namely, we can obtain the local current by differentiating the Hamiltonian. 
Thus, one has
\begin{equation}
\label{H0J}
\frac{\partial}{\partial \phi_j}H_0^{(\Lambda)}(\phi_1,k_1;\phi_2,k_2)
=J^{(j)}(k_j;\phi_1,k_1;\phi_2,k_2),\quad j=1,2,  
\end{equation}
for the finite-volume Hamiltonian $H_0^{(\Lambda)}$. 

\begin{thm}
\label{thm:twistDegeneGap}
Under the same assumptions as in Theorem~\ref{thm:main}, we have that 
the ground state of the Hamiltonian $H_0^{(\Lambda)}(\phi_1,k_1;\phi_2,k_2)$ 
having the twisted phases $\phi_1$ and $\phi_2$ at the positions $k_1$ and $k_2$  
in the first and the second directions, respectively, 
has the same q-fold (quasi)degeneracy as that of the Hamiltonian $H_0^{(\Lambda)}$ 
without the twisted phases. Further, there exists a uniform spectral gap 
above the sector of the ground state for any $\phi_1$ and $\phi_2$, 
and there exists a positive lower bound of the spectral gap 
such that the bound is independent of the phases $\phi_1, \phi_2$. 
\end{thm}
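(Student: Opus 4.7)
The plan is to prove that the spectral gap above the $q$-fold (quasi)degenerate ground-state sector persists under the twist by adapting the quasi-adiabatic spectral flow technique of Hastings and Hastings--Michalakis to this setting. The two ingredients that make this possible are already prepared in the excerpt: the gauge identities just above the theorem statement, and the key formula (\ref{H0J}) expressing the $\phi_j$-derivative of the twisted Hamiltonian as the local current.

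First, I would use the gauge-transformation identities to slide the ``twist line'' from the interior position $k_j$ to any parallel line, in particular to the boundary of $\Lambda$. This reduces the claim to the standard twisted-boundary Hamiltonian $H_0^{(\Lambda)}(\phi_1;\phi_2)$, so that the positions $k_j$ play no role in the spectrum, and we may treat $(\phi_1,\phi_2)\in[0,2\pi]^2$ as a compact parameter space on which the Hamiltonian depends analytically.

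Next, I would exploit (\ref{H0J}), which says that the derivative of the twisted Hamiltonian is the local current operator $J^{(j)}$ supported on a one-dimensional ``seam'' across the lattice. From this one constructs the quasi-adiabatic generator $\mathcal{D}_j(\phi_1,\phi_2)$ as the weighted Heisenberg integral of $J^{(j)}$ under a filter function tuned to the gap $\Delta E$; by the Lieb--Robinson bound together with the finite range of the hopping and of the interactions, $\mathcal{D}_j$ is quasi-local near the seam. The associated spectral flow $V(\phi_1,\phi_2)$ then intertwines $P_0^{(\Lambda,N)}(0,0)$ with $P_0^{(\Lambda,N)}(\phi_1,\phi_2)$ up to an error that vanishes in the infinite-volume limit, which preserves the dimension $q$ of the ground-state sector. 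A standard openness/closedness argument on the compact and connected domain $[0,2\pi]^2$ then upgrades the local persistence of the gap to a uniform lower bound independent of the phases.

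The hardest step is controlling the gap in the face of two linked difficulties: the perturbation $J^{(j)}$ is \emph{extensive} (a sum of order $L^{(j')}$ local terms along the seam), and the ground-state sector is only quasi-degenerate ($\delta E\to 0$, not identically zero), so both standard spectral-projection and first-order perturbation theory fail outright. The Hastings--Michalakis machinery circumvents the first issue by using the exponential clustering that follows from the existence of the gap, so that the action of $\mathcal{D}_j$ on the ground-state subspace is controlled by the local structure of $J^{(j)}$ rather than by its global operator norm. The auxiliary hypothesis of Theorem~\ref{thm:main}, that no additional infinite-volume ground state appears under any change of boundary conditions, is used precisely here to rule out level crossings between the $q$ low-lying eigenvalues and the rest of the spectrum as $\Lambda\nearrow\ze^2$. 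I expect that making these estimates quantitative (with uniform constants in $\Lambda$ and in $(\phi_1,\phi_2)$) is what accounts for the ``fairly lengthy'' proof postponed to Appendix~\ref{ProoftwistDegeneGap}.
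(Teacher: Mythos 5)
Your proposal takes a genuinely different route from the one in Appendix~\ref{ProoftwistDegeneGap}, and it has a circularity problem that the paper's strategy is specifically designed to avoid. Quasi-adiabatic continuation (the Hastings--Michalakis flow) requires that the gap already be open along the path $(\phi_1,\phi_2)$ in order to define the filter and the generator $\mathcal{D}_j$; what you get out of the flow is a unitary that intertwines ground-state projections \emph{given} the gap, not an independent proof that the gap stays open. Your ``openness/closedness'' argument handles openness (continuity of spectrum) but not closedness: nothing in the quasi-adiabatic machinery prevents the gap from shrinking continuously to zero at some $\phi^\ast$, at which point the flow ceases to exist. Preserving the \emph{dimension} of the ground-state sector via the intertwiner is automatic wherever the gap is open and therefore adds no information at the point where the argument is actually at risk.

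What the paper does instead is a variational, local-excitation argument. In Appendix~\ref{ConstLEE}, Proposition~\ref{prop:expexcit} shows (using the extra assumption of Theorem~\ref{thm:main}, that no new infinite-volume ground states appear when boundary conditions are changed) that a Gaussian energy filter applied to a strictly local observable $a_0$ produces a local observable $a$ whose Rayleigh quotient $\omega_0(a^\ast(1-P_0)[H,a])/\omega_0(a^\ast(1-P_0)a)$ saturates the gap to within $2\delta E + 2\varepsilon$. In Appendix~\ref{subsecTPDEE} it is then shown, via Lieb--Robinson bounds, contour integrals, and Gaussian filtering, that because the twist is supported on a seam far from $\mathrm{supp}\,a$, each of the three pieces of this Rayleigh quotient changes only by $\mathcal{O}(e^{-cL})$ as $\phi$ varies, yielding the uniform lower bound $\Delta E(\phi_0)\ge \Delta E(0)-2\delta E(\phi_0)-3\varepsilon$ in (\ref{Delta Ephi0Lbound}). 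Finally Appendix~\ref{subSec:TPDavE} and the last subsection use an averaged-energy bound plus a contradiction argument (tracking a nested splitting of the low-energy sector as $\phi$ is decreased back to $0$) to rule out any lifting of the $q$-fold quasi-degeneracy. The role of the ``no extra ground states'' hypothesis is thus precise: it guarantees the local-excitation condition (\ref{excitationcond}), i.e.\ the gap is probed by operators with supports away from the twist line, so that the twist is invisible up to exponentially small corrections. Your intuition about Lieb--Robinson and exponential clustering being the engine is right, but you would need this variational ``gap is locally detectable'' lemma, not the quasi-adiabatic flow, as the backbone of the argument.
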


The proof is given in Appendix~\ref{ProoftwistDegeneGap}. 

\begin{rem}
As mentioned in Sec.~\ref{sec:intro}, the method of twisting the phases at the boundaries was first 
introduced by \cite{NTW} for the quantum Hall systems. 
Later, the method was extended to strongly correlated quantum systems such as 
quantum spin systems in \cite{Hatsugai,HKH1,HKH}, in order to measure the local 
topological properties such as a local singlet pair of two spins. 
\end{rem}

\section{A Topological Invariant}
\label{sec:TopoInv}

Consider the Hamiltonian $H_0^{(\Lambda)}(\phi_1,k;\phi_2,\ell)$. 
Relying on Theorem~\ref{thm:twistDegeneGap}, we denote the $q$ vectors of 
the ground state by $\Phi_{0,m}^{(N)}(\phi_1,k;\phi_2,\ell)$ 
with the energy eigenvalue $E_{0,m}^{(N)}(\phi_1;\phi_2)$, 
$m=1,2,\ldots,q$, and denote the excited-state vectors $\Phi_n^{(N)}(\phi_1,k;\phi_2,\ell)$ with 
the energy eigenvalue $E_n^{(N)}(\phi_1;\phi_2)$, $n\ge 1$.
We write $\boldsymbol{\phi}=(\phi_1,\phi_2)$, 
and $\tilde{\boldsymbol{\phi}}=(\phi_1,k;\phi_2,\ell)$ for short. 
We define 
\begin{multline}
\label{hatsigma12tildephi}
{\hat\sigma}_{12}^{(\Lambda,N)}(\tilde{\boldsymbol{\phi}})\\
:=\frac{i}{q}\sum_{m=1}^q 
\sum_{n\ge 1}\left[
\frac{
\langle \Phi_{0,m}^{(N)}(\tilde{\boldsymbol{\phi}}), J^{(1)}(k;\tilde{\boldsymbol{\phi}})
\Phi_n^{(N)}(\tilde{\boldsymbol{\phi}})\rangle
\langle\Phi_n^{(N)}(\tilde{\boldsymbol{\phi}}),J^{(2)}(\ell;\tilde{\boldsymbol{\phi}})
\Phi_{0,m}^{(N)}(\tilde{\boldsymbol{\phi}})\rangle}
{(E_{0,m}^{(N)}(\boldsymbol{\phi})-E_n^{(N)}(\boldsymbol{\phi}))^2}\right.\\
-\left.\frac{
\langle \Phi_{0,m}^{(N)}(\tilde{\boldsymbol{\phi}}), J^{(2)}(\ell;\tilde{\boldsymbol{\phi}})
\Phi_n^{(N)}(\tilde{\boldsymbol{\phi}})\rangle
\langle\Phi_n^{(N)}(\tilde{\boldsymbol{\phi}}),J^{(1)}(k;\tilde{\boldsymbol{\phi}})
\Phi_{0,m}^{(N)}(\tilde{\boldsymbol{\phi}})\rangle}
{(E_{0,m}^{(N)}(\boldsymbol{\phi})-E_n^{(N)}(\boldsymbol{\phi}))^2}\right].
\end{multline}
We write $\Phi_{0,m}^{(N)}(\boldsymbol{\phi})$ for the vectors of the ground state of 
the Hamiltonian $H_0^{(\Lambda)}(\phi_1;\phi_2)$ with the twisted boundary conditions 
with the angles $\phi_1$ and $\phi_2$, and $\Phi_n^{(N)}(\boldsymbol{\phi})$ for 
the vectors of the excited states. We define  
\begin{multline}
\label{hatsigma12phi}
{\hat\sigma}_{12}^{(\Lambda,N)}(\boldsymbol{\phi})\\
:=\frac{i}{q}\sum_{m=1}^q 
\sum_{n\ge 1}\left[
\frac{
\langle \Phi_{0,m}^{(N)}(\boldsymbol{\phi}), J^{(1)}(k)
\Phi_n^{(N)}(\boldsymbol{\phi})\rangle
\langle\Phi_n^{(N)}(\boldsymbol{\phi}),J^{(2)}(\ell)
\Phi_{0,m}^{(N)}(\boldsymbol{\phi})\rangle}
{(E_{0,m}^{(N)}(\boldsymbol{\phi})-E_n^{(N)}(\boldsymbol{\phi}))^2}\right.\\
-\left.\frac{
\langle \Phi_{0,m}^{(N)}(\boldsymbol{\phi}), J^{(2)}(\ell)
\Phi_n^{(N)}(\boldsymbol{\phi})\rangle
\langle\Phi_n^{(N)}(\boldsymbol{\phi}),J^{(1)}(k)
\Phi_{0,m}^{(N)}(\boldsymbol{\phi})\rangle}
{(E_{0,m}^{(N)}(\boldsymbol{\phi})-E_n^{(N)}(\boldsymbol{\phi}))^2}\right].
\end{multline}
This is the standard form of the Hall conductance for the degenerate ground state 
with the twisted boundary conditions. Since we can change the position of 
the twisted hopping amplitudes by using the unitary transformation as shown in the preceding section, 
this conductance ${\hat\sigma}_{12}^{(\Lambda,N)}(\boldsymbol{\phi})$ is equal to 
the above ${\hat\sigma}_{12}^{(\Lambda,N)}(\tilde{\boldsymbol{\phi}})$ 
of (\ref{hatsigma12tildephi}).  

Niu, Thouless and Wu \cite{NTW} showed the following: 
When averaging the conductance ${\hat\sigma}_{12}^{(\Lambda,N)}(\boldsymbol{\phi})$ over 
the phases $\phi_1$ and $\phi_2$, the averaged Hall conductance is fractionally quantized 
as in (\ref{FAQHC}) below. Along the lines of \cite{Koma1}, we shall give the proof. 

Using a contour integral, the projection $P_0^{(\Lambda,N)}(\tilde{\boldsymbol{\phi}})$ 
onto the sector of the ground state is written 
\[
P_0^{(\Lambda,N)}(\tilde{\boldsymbol{\phi}})=\frac{1}{2\pi i}\oint dz 
\frac{1}{z-H_0^{(\Lambda,N)}(\tilde{\boldsymbol{\phi}})}
\]
because the spectral gap exists above the sector of the ground state 
as we proved in Theorem~\ref{thm:twistDegeneGap}. 
Note that
\begin{align*}
P_{0,1}^{(\Lambda,N)}(\tilde{\boldsymbol{\phi}})
&:=\frac{\partial}{\partial \phi_1}P_0^{(\Lambda,N)}(\tilde{\boldsymbol{\phi}})\\
&=\frac{1}{2\pi i}\oint dz 
\frac{1}{z-H_0^{(\Lambda,N)}(\tilde{\boldsymbol{\phi}})}
\left[\frac{\partial H_0^{(\Lambda,N)}(\tilde{\boldsymbol{\phi}})}{\partial\phi_1}\right]
\frac{1}{z-H_0^{(\Lambda,N)}(\tilde{\boldsymbol{\phi}})}\\
&=\frac{1}{2\pi i}\oint dz 
\frac{1}{z-H_0^{(\Lambda,N)}(\tilde{\boldsymbol{\phi}})}
J^{(1)}(k;\tilde{\boldsymbol{\phi}})
\frac{1}{z-H_0^{(\Lambda,N)}(\tilde{\boldsymbol{\phi}})},
\end{align*}
where we have used (\ref{H0J}). Similarly, one has 
\begin{align*}
P_{0,2}^{(\Lambda,N)}(\tilde{\boldsymbol{\phi}})
&:=\frac{\partial}{\partial \phi_2}P_0^{(\Lambda,N)}(\tilde{\boldsymbol{\phi}})\\
&=\frac{1}{2\pi i}\oint dz 
\frac{1}{z-H_0^{(\Lambda,N)}(\tilde{\boldsymbol{\phi}})}
J^{(2)}(\ell;\tilde{\boldsymbol{\phi}})
\frac{1}{z-H_0^{(\Lambda,N)}(\tilde{\boldsymbol{\phi}})}.
\end{align*}
Using these, we have 
\begin{equation}
\label{hatsigma12tildephiTrP}
{\hat\sigma}_{12}^{(\Lambda,N)}(\tilde{\boldsymbol{\phi}})
=\frac{i}{q}\;{\rm Tr}\;P_0^{(\Lambda,N)}(\tilde{\boldsymbol{\phi}})
\left[P_{0,1}^{(\Lambda,N)}(\tilde{\boldsymbol{\phi}}),
P_{0,2}^{(\Lambda,N)}(\tilde{\boldsymbol{\phi}})\right].
\end{equation}

\begin{rem}
In the infinite-volume limit, $\Lambda\nearrow\ze^2$ and $N\nearrow\infty$, one formally has 
$$
\hat{\sigma}_{12}^{(\ze^2,\infty)}
=-\frac{i}{q}\;{\rm Tr}\;P_0^{(\ze^2,\infty)}
\left[\big[P_0^{(\ze^2,\infty)},\theta^{(1)}(k_1)\big],
\big[P_0^{(\ze^2,\infty)},\theta^{(2)}(k_2)\big]\right],
$$
where we have used 
$$
P_{0,j}^{(\ze^2,\infty)}=i\big[P_0^{(\ze^2,\infty)},\theta^{(j)}(k_j)\big]\quad \mbox{for } j=1,2,
$$
which are derived from (\ref{localcurrent}).  
This expression of the Hall conductance $\hat{\sigma}_{12}^{(\ze^2,\infty)}$ 
has the same form as that for formally applying the method of noncommutative geometry \cite{Connes,BVS,AG} to 
an interacting fermion system. Actually, the expression can be obtained 
by replacing the single-fermion step function\footnote{See, e.g., \cite{ASS,ElSch,Koma2}.}  
with that for many fermions.    
But, justifying the expression in a mathematical rigorous manner remains a challenge 
because the many-body step function $\theta^{(j)}(k_j)$ is expressed 
in terms of the infinite sum of the number operators. 
\end{rem}

The following proposition is essentially due to Kato~\cite{Kato}. 
See also Proposition~5.1 in \cite{Koma1}.

\begin{prop}
\label{prop:Kato}
There exist orthonormal vectors $\hat{\Phi}_{0,m}^{(N)}(\tilde{\boldsymbol{\phi}})$, $m=1,2,\ldots,q$ 
such that the sector of the ground state of $H_0^{(\Lambda,N)}(\tilde{\boldsymbol{\phi}})$ is 
spanned by the $q$ vectors $\hat{\Phi}_{0,m}^{(N)}(\tilde{\boldsymbol{\phi}})$, and that 
all the vectors $\hat{\Phi}_{0,m}^{(N)}(\tilde{\boldsymbol{\phi}})$ are infinitely 
differentiable with respect to the phase parameters $\boldsymbol{\phi}\in[0,2\pi]\times[0,2\pi]$. 
\end{prop}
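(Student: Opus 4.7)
The strategy is the standard Kato parallel-transport construction adapted to a two-parameter family. The first step is to note that, thanks to the uniform spectral gap $\Delta E$ provided by Theorem~\ref{thm:twistDegeneGap}, one may choose a fixed contour $C$ in the complex plane that separates the ground-state sector from the rest of the spectrum uniformly in $\boldsymbol{\phi}\in[0,2\pi]^2$. Since the twisted phases enter $H_0^{(\Lambda,N)}(\tilde{\boldsymbol{\phi}})$ through conjugation by the unitary $\exp[-i(\phi_1\theta^{(1)}(k)+\phi_2\theta^{(2)}(\ell))]$, acting on the finite-dimensional $N$-particle sector, the Hamiltonian depends real-analytically on $\boldsymbol{\phi}$, so the contour representation
\[
P_0^{(\Lambda,N)}(\tilde{\boldsymbol{\phi}}) = \frac{1}{2\pi i}\oint_C\frac{dz}{z - H_0^{(\Lambda,N)}(\tilde{\boldsymbol{\phi}})}
\]
yields a $C^\infty$ (in fact real-analytic) projection-valued map on $[0,2\pi]^2$.

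Next I would introduce Kato's skew-Hermitian transport generators
\[
A_j(\boldsymbol{\phi}) := \bigl[\partial_{\phi_j}P_0^{(\Lambda,N)}(\tilde{\boldsymbol{\phi}}),\; P_0^{(\Lambda,N)}(\tilde{\boldsymbol{\phi}})\bigr],\quad j=1,2,
\]
which are smooth and, by differentiating $P_0^2=P_0$, satisfy the intertwining identity $\partial_{\phi_j}P_0 = [A_j,P_0]$. Because the parameter domain is the contractible rectangle $[0,2\pi]^2$, I would then construct a smooth unitary $W(\boldsymbol{\phi})$ by two successive ODEs: first solve $\partial_{\phi_1}W(\phi_1,0) = A_1(\phi_1,0)W(\phi_1,0)$ with $W(0,0)=I$, and then, for each fixed $\phi_1$, solve $\partial_{\phi_2}W(\boldsymbol{\phi}) = A_2(\boldsymbol{\phi})W(\boldsymbol{\phi})$ with initial condition $W(\phi_1,0)$. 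Standard ODE theory on the finite-dimensional fermion space gives a $C^\infty$ unitary-valued $W$.

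A short calculation then shows $\partial_{\phi_j}(W^* P_0 W)=0$ along each leg, using the intertwining identity together with the skew-Hermiticity of $A_j$, so $W(\boldsymbol{\phi})^* P_0^{(\Lambda,N)}(\tilde{\boldsymbol{\phi}}) W(\boldsymbol{\phi})$ is constant on the rectangle and equal to the projection at $\boldsymbol{\phi}=0$. Choosing any orthonormal basis $\{\Phi_{0,m}^{(N)}\}_{m=1}^q$ of the range of that constant projection and setting
\[
\hat{\Phi}_{0,m}^{(N)}(\tilde{\boldsymbol{\phi}}) := W(\boldsymbol{\phi})\,\Phi_{0,m}^{(N)}
\]
produces the desired $C^\infty$ orthonormal frame of the ground-state sector.

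The only technical point that needs care is the smoothness of $P_0^{(\Lambda,N)}(\tilde{\boldsymbol{\phi}})$ uniformly on the closed rectangle, which reduces to the requirement that the contour $C$ can be chosen independently of $\boldsymbol{\phi}$; this is exactly what Theorem~\ref{thm:twistDegeneGap} provides, since the lower bound on the spectral gap is uniform in $(\phi_1,\phi_2)$. Everything else is a textbook application of Kato's adiabatic framework.
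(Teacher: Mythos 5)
Your construction is the standard Kato transformation-function argument (Kato, \emph{Perturbation Theory for Linear Operators}, Sec.~II.4), which is precisely what the paper cites for this proposition: the skew-Hermitian generator $A_j=[\partial_{\phi_j}P_0,P_0]$ satisfying $\partial_{\phi_j}P_0=[A_j,P_0]$, the unitary $W(\boldsymbol\phi)$ obtained by integrating the ODE first along $\phi_2=0$ and then vertically, the constancy of $W^\ast P_0 W$, and the resulting smooth orthonormal frame $W\Phi_{0,m}^{(N)}$. The two-leg integration together with smooth dependence of ODE solutions on initial data gives joint smoothness of $W$, so the core argument is correct and matches the cited source.

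There is, however, one genuinely wrong claim that you should correct even though it does not ultimately break the proof. You assert that the finite-volume twisted Hamiltonian $H_0^{(\Lambda,N)}(\tilde{\boldsymbol\phi})$ is obtained from $H_0^{(\Lambda,N)}$ by \emph{conjugation} with $\exp[-i(\phi_1\theta^{(1)}(k)+\phi_2\theta^{(2)}(\ell))]$. This is false on the torus: that conjugation would twist the hoppings crossing $x^{(j)}=k_j$ by $e^{i\phi_j}$ and the hoppings that wrap around the periodic boundary by the opposite phase $e^{-i\phi_j}$, so the net flux would cancel and the spectrum of $H_0^{(\Lambda,N)}(\tilde{\boldsymbol\phi})$, as well as $P_0^{(\Lambda,N)}(\tilde{\boldsymbol\phi})$ up to a trivializing unitary, would be $\boldsymbol\phi$-independent — which would force the Chern number of Section~6 to vanish identically and render the main theorem vacuous. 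The paper's definition twists \emph{only} the hoppings across $x^{(j)}=k_j$, leaving the wrap-around hoppings alone, so the finite-volume twist genuinely inserts flux and is not a unitary conjugation. The conclusion you actually need, namely analyticity of $H_0^{(\Lambda,N)}(\tilde{\boldsymbol\phi})$ in $\boldsymbol\phi$, is nonetheless true for a more elementary reason: the matrix entries of the finite-dimensional Hamiltonian are polynomials in $e^{\pm i\phi_1}$ and $e^{\pm i\phi_2}$. Combined with the $\boldsymbol\phi$-uniform gap from Theorem~\ref{thm:twistDegeneGap} and the contour representation, this gives the needed analytic projection, and the rest of your Kato construction goes through unchanged.
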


Relying on this proposition, the right-hand side of (\ref{hatsigma12tildephiTrP}) is written 
\begin{multline*}
{\rm Tr}\;P_0^{(\Lambda,N)}(\tilde{\boldsymbol{\phi}})
\left[P_{0,1}^{(\Lambda,N)}(\tilde{\boldsymbol{\phi}}),
P_{0,2}^{(\Lambda,N)}(\tilde{\boldsymbol{\phi}})\right]\\
=\sum_{m=1}^q \left[\big\langle\hat{\Phi}_{0,m}^{(N)}(\tilde{\boldsymbol{\phi}}), 
P_{0,1}^{(\Lambda,N)}(\tilde{\boldsymbol{\phi}})
P_{0,2}^{(\Lambda,N)}(\tilde{\boldsymbol{\phi}})\hat{\Phi}_{0,m}^{(N)}(\tilde{\boldsymbol{\phi}})\big\rangle\right.\\
-\left.\big\langle\hat{\Phi}_{0,m}^{(N)}(\tilde{\boldsymbol{\phi}}),
P_{0,2}^{(\Lambda,N)}(\tilde{\boldsymbol{\phi}})P_{0,1}^{(\Lambda,N)}(\tilde{\boldsymbol{\phi}})
\hat{\Phi}_{0,m}^{(N)}(\tilde{\boldsymbol{\phi}})\big\rangle
\right]\\
=\sum_{m=1}^q \left[\big\langle P_{0,1}^{(\Lambda,N)}(\tilde{\boldsymbol{\phi}})
\hat{\Phi}_{0,m}^{(N)}(\tilde{\boldsymbol{\phi}}), 
P_{0,2}^{(\Lambda,N)}(\tilde{\boldsymbol{\phi}})\hat{\Phi}_{0,m}^{(N)}(\tilde{\boldsymbol{\phi}})\big\rangle\right.\\
-\left.\big\langle P_{0,2}^{(\Lambda,N)}(\tilde{\boldsymbol{\phi}})\hat{\Phi}_{0,m}^{(N)}(\tilde{\boldsymbol{\phi}}),
P_{0,1}^{(\Lambda,N)}(\tilde{\boldsymbol{\phi}})
\hat{\Phi}_{0,m}^{(N)}(\tilde{\boldsymbol{\phi}})\big\rangle
\right]
\end{multline*}
except for the factor $i/q$. Note that 
\begin{align*}
\frac{\partial}{\partial \phi_j}\hat{\Phi}_{0,m}^{(N)}(\tilde{\boldsymbol{\phi}})
&=\frac{\partial}{\partial \phi_j}P_0^{(\Lambda,N)}(\tilde{\boldsymbol{\phi}})
\hat{\Phi}_{0,m}^{(N)}(\tilde{\boldsymbol{\phi}})\\
&=P_{0,j}^{(\Lambda,N)}(\tilde{\boldsymbol{\phi}})
\hat{\Phi}_{0,m}^{(N)}(\tilde{\boldsymbol{\phi}})+P_0^{(\Lambda,N)}(\tilde{\boldsymbol{\phi}})
\frac{\partial}{\partial \phi_j}\hat{\Phi}_{0,m}^{(N)}(\tilde{\boldsymbol{\phi}})
\end{align*}
for $j=1,2$. Therefore, one has 
\begin{equation}
\label{diffPhiid}
\left[1-P_0^{(\Lambda,N)}(\tilde{\boldsymbol{\phi}})\right]
\frac{\partial}{\partial \phi_j}\hat{\Phi}_{0,m}^{(N)}(\tilde{\boldsymbol{\phi}})=
P_{0,j}^{(\Lambda,N)}(\tilde{\boldsymbol{\phi}})
\hat{\Phi}_{0,m}^{(N)}(\tilde{\boldsymbol{\phi}})\quad \mbox{for\ } j=1,2. 
\end{equation}
Using these identities, the above trace is written 
\begin{align*}
&{\rm Tr}\;P_0^{(\Lambda,N)}(\tilde{\boldsymbol{\phi}})
\left[P_{0,1}^{(\Lambda,N)}(\tilde{\boldsymbol{\phi}}),
P_{0,2}^{(\Lambda,N)}(\tilde{\boldsymbol{\phi}})\right]\\
&=\sum_{m=1}^q \left\{\big\langle\big[1-P_0^{(\Lambda,N)}(\tilde{\boldsymbol{\phi}})\big]
\frac{\partial}{\partial \phi_1}\hat{\Phi}_{0,m}^{(N)}(\tilde{\boldsymbol{\phi}}),
\big[1-P_0^{(\Lambda,N)}(\tilde{\boldsymbol{\phi}})\big]
\frac{\partial}{\partial \phi_2}\hat{\Phi}_{0,m}^{(N)}(\tilde{\boldsymbol{\phi}})\big\rangle\right.\\
&-\left.\big\langle\big[1-P_0^{(\Lambda,N)}(\tilde{\boldsymbol{\phi}})\big]
\frac{\partial}{\partial \phi_2}\hat{\Phi}_{0,m}^{(N)}(\tilde{\boldsymbol{\phi}}),
\big[1-P_0^{(\Lambda,N)}(\tilde{\boldsymbol{\phi}})\big]
\frac{\partial}{\partial \phi_1}\hat{\Phi}_{0,m}^{(N)}(\tilde{\boldsymbol{\phi}})\big\rangle\right\}\\
&=\sum_{m=1}^q \left\{\biggl\langle
\frac{\partial}{\partial \phi_1}\hat{\Phi}_{0,m}^{(N)}(\tilde{\boldsymbol{\phi}}),
\big[1-P_0^{(\Lambda,N)}(\tilde{\boldsymbol{\phi}})\big]
\frac{\partial}{\partial \phi_2}\hat{\Phi}_{0,m}^{(N)}(\tilde{\boldsymbol{\phi}})\biggr\rangle\right.\\
&-\left.\biggl\langle
\frac{\partial}{\partial \phi_2}\hat{\Phi}_{0,m}^{(N)}(\tilde{\boldsymbol{\phi}}),
\big[1-P_0^{(\Lambda,N)}(\tilde{\boldsymbol{\phi}})\big]
\frac{\partial}{\partial \phi_1}\hat{\Phi}_{0,m}^{(N)}(\tilde{\boldsymbol{\phi}})\biggr\rangle\right\}\\
&=\sum_{m=1}^q \left\{\biggl\langle
\frac{\partial}{\partial \phi_1}\hat{\Phi}_{0,m}^{(N)}(\tilde{\boldsymbol{\phi}}),
\frac{\partial}{\partial \phi_2}\hat{\Phi}_{0,m}^{(N)}(\tilde{\boldsymbol{\phi}})\biggr\rangle\right.
-\left.\biggl\langle
\frac{\partial}{\partial \phi_2}\hat{\Phi}_{0,m}^{(N)}(\tilde{\boldsymbol{\phi}}),
\frac{\partial}{\partial \phi_1}\hat{\Phi}_{0,m}^{(N)}(\tilde{\boldsymbol{\phi}})\biggr\rangle\right\},
\end{align*}
where we have used the identity, 
\begin{align*}
0&=\frac{\partial}{\partial \phi_j}\langle\hat{\Phi}_{0,m}^{(N)}(\tilde{\boldsymbol{\phi}}),
\hat{\Phi}_{0,m'}^{(N)}(\tilde{\boldsymbol{\phi}})\rangle\\
&=
\Big\langle\frac{\partial}{\partial \phi_j}\hat{\Phi}_{0,m}^{(N)}(\tilde{\boldsymbol{\phi}}),
\hat{\Phi}_{0,m'}^{(N)}(\tilde{\boldsymbol{\phi}})\Big\rangle+
\Big\langle\hat{\Phi}_{0,m}^{(N)}(\tilde{\boldsymbol{\phi}}),
\frac{\partial}{\partial \phi_j}\hat{\Phi}_{0,m'}^{(N)}(\tilde{\boldsymbol{\phi}})\Big\rangle,
\end{align*}
for $j=1,2$, and $m,m'=1,2,\ldots,q$, in order to get the third equality. 
Consequently, the Hall conductance ${\hat\sigma}_{12}^{(\Lambda,N)}(\tilde{\boldsymbol{\phi}})$ 
of (\ref{hatsigma12tildephiTrP}) is written \cite{Koma1}
\begin{multline}
\label{expHallconduTopo}
{\hat\sigma}_{12}^{(\Lambda,N)}(\tilde{\boldsymbol{\phi}})\\
=\frac{i}{q}\sum_{m=1}^q
\left[\frac{\partial}{\partial\phi_1}\bigl\langle\hat{\Phi}_{0,m}^{(N)}(\tilde{\boldsymbol{\phi}}),
\frac{\partial}{\partial\phi_2}\hat{\Phi}_{0,m}^{(N)}(\tilde{\boldsymbol{\phi}})\bigr\rangle
-\frac{\partial}{\partial\phi_2}\bigl\langle\hat{\Phi}_{0,m}^{(N)}(\tilde{\boldsymbol{\phi}}),
\frac{\partial}{\partial\phi_1}\hat{\Phi}_{0,m}^{(N)}(\tilde{\boldsymbol{\phi}})\bigr\rangle\right].
\end{multline}

Next, following \cite{Koma1}, we show that the Hall conductance averaged over the phase $\boldsymbol{\phi}$ 
is fractionally quantized. We define the averaged Hall conductance as  
\begin{equation}
\label{avHallcond}
\overline{{\hat\sigma}_{12}^{(\Lambda,N)}(\tilde{\boldsymbol{\phi}})}:=
\frac{1}{(2\pi)^2}\iint_{[0,2\pi]\times[0,2\pi]}d\phi_1d\phi_2\; 
{\hat\sigma}_{12}^{(\Lambda,N)}(\tilde{\boldsymbol{\phi}}).
\end{equation}
Using the expression (\ref{expHallconduTopo}) of the Hall conductance, the integral in the right-hand side 
is written 
\begin{equation}
\label{intHallcond}
\iint_{[0,2\pi]\times[0,2\pi]}d\phi_1d\phi_2\; 
{\hat\sigma}_{12}^{(\Lambda,N)}(\tilde{\boldsymbol{\phi}})=\frac{i}{q}\left[I^{(1)}-I^{(2)}\right],
\end{equation}
where 
\begin{multline*}
I^{(1)}=\sum_{m=1}^q \int_0^{2\pi}d\phi_2\left[\Big\langle\hat{\Phi}_{0,m}^{(N)}(2\pi,\phi_2), 
\frac{\partial}{\partial\phi_2}\hat{\Phi}_{0,m}^{(N)}(2\pi,\phi_2)\Big\rangle\right.\\
-\left.\Big\langle\hat{\Phi}_{0,m}^{(N)}(0,\phi_2), 
\frac{\partial}{\partial\phi_2}\hat{\Phi}_{0,m}^{(N)}(0,\phi_2)\Big\rangle\right]
\end{multline*}
and
\begin{multline*}
I^{(2)}=\sum_{m=1}^q \int_0^{2\pi}d\phi_1\left[\Big\langle\hat{\Phi}_{0,m}^{(N)}(\phi_1,2\pi), 
\frac{\partial}{\partial\phi_1}\hat{\Phi}_{0,m}^{(N)}(\phi_1,2\pi)\Big\rangle\right.\\
-\left.\Big\langle\hat{\Phi}_{0,m}^{(N)}(\phi_1,0), 
\frac{\partial}{\partial\phi_1}\hat{\Phi}_{0,m}^{(N)}(\phi_1,0)\Big\rangle\right].
\end{multline*}
Here, we have dropped the $k,\ell$-dependence of the vectors $\hat{\Phi}_{0,m}^{(N)}(\phi_1,k;\phi_2,\ell)$ 
for short. Clearly, the set of the ground-state vectors, $\hat{\Phi}_{0,m}^{(N)}(\phi_1,2\pi)$, 
$m=1,2,\ldots,q$, connects with that of $\hat{\Phi}_{0,m}^{(N)}(\phi_1,0)$, 
$m=1,2,\ldots,q$, through a $q\times q$ unitary matrix $C^{(2)}(\phi_1)$ as   
\begin{equation}
\label{PhiC(2)Phi}
\hat{\Phi}_{0,m}^{(N)}(\phi_1,2\pi)=\sum_{m'=1}^q C_{m,m'}^{(2)}(\phi_1)
\hat{\Phi}_{0,m'}^{(N)}(\phi_1,0)\quad \mbox{for\ } m=1,2,\ldots,q, 
\end{equation}
where $C_{m,m'}^{(2)}(\phi_1)$ are the matrix elements of the unitary matrix $C^{(2)}(\phi_1)$. 
By differentiating this, one has 
\begin{multline*}
\frac{\partial}{\partial\phi_1}\hat{\Phi}_{0,m}^{(N)}(\phi_1,2\pi)\\
=\sum_{m'=1}^q \Biggl[\frac{\partial C_{m,m'}^{(2)}(\phi_1)}{\partial\phi_1}
\hat{\Phi}_{0,m'}^{(N)}(\phi_1,0)
+C_{m,m'}^{(2)}(\phi_1)
\frac{\partial}{\partial\phi_1}\hat{\Phi}_{0,m'}^{(N)}(\phi_1,0)\Biggr] 
\end{multline*}
for $m=1,2,\ldots,q$. By using these, one obtains
\begin{align*}
&\sum_{m=1}^q\Big\langle\hat{\Phi}_{0,m}^{(N)}(\phi_1,2\pi), 
\frac{\partial}{\partial\phi_1}\hat{\Phi}_{0,m}^{(N)}(\phi_1,2\pi)\Big\rangle\\
&=\sum_{m=1}^q\sum_{m'=1}^q\sum_{m''=1}^q \Big[C_{m,m'}^{(2)}(\phi_1)^\ast
\frac{\partial C_{m,m''}^{(2)}(\phi_1)}{\partial\phi_1}
\big\langle\hat{\Phi}_{0,m'}^{(N)}(\phi_1,0),\hat{\Phi}_{0,m''}^{(N)}(\phi_1,0)\big\rangle\\
&+C_{m,m'}^{(2)}(\phi_1)^\ast C_{m,m''}^{(2)}(\phi_1)
\big\langle\hat{\Phi}_{0,m'}^{(N)}(\phi_1,0),\frac{\partial}{\partial\phi_1}
\hat{\Phi}_{0,m''}^{(N)}(\phi_1,0)\big\rangle\Big]\\
&=\sum_{m'=1}^q \Big[\sum_{m=1}^q C_{m,m'}^{(2)}(\phi_1)^\ast
\frac{\partial C_{m,m'}^{(2)}(\phi_1)}{\partial\phi_1}
+\big\langle\hat{\Phi}_{0,m'}^{(N)}(\phi_1,0),\frac{\partial}{\partial\phi_1}
\hat{\Phi}_{0,m'}^{(N)}(\phi_1,0)\big\rangle\Big].
\end{align*}
Substituting this into the expression of $I^{(2)}$, we have 
$$
I^{(2)}=\int_0^{2\pi}d\phi_1\; {\rm Tr}\; C^{(2)}(\phi_1)^\dagger 
\frac{\partial}{\partial\phi_1}C^{(2)}(\phi_1). 
$$
Since the unitary matrix $C^{(2)}(\phi_1)$ can be expressed 
in terms of a Hermitian matrix $\Theta^{(2)}(\phi_1)$ as 
$$
C^{(2)}(\phi_1)=\exp\big[i\Theta^{(2)}(\phi_1)\big], 
$$
we obtain\footnote{The differentiability of $\Theta^{(2)}(\phi_1)$ is justified in Appendix~\ref{DiffTheta}.} 
\begin{equation}
\label{I(2)TrTheta}
I^{(2)}=i\int_0^{2\pi}d\phi_1\; {\rm Tr}\;\frac{\partial}{\partial\phi_1}\Theta^{(2)}(\phi_1)
=i\big[{\rm Tr}\;\Theta^{(2)}(2\pi)-{\rm Tr}\;\Theta^{(2)}(0)\big].
\end{equation}

Similarly, one has 
\begin{equation}
\label{PhiC(1)Phi}
\hat{\Phi}_{0,m}^{(N)}(2\pi,\phi_2)=\sum_{m'=1}^q C_{m,m'}^{(1)}(\phi_2)
\hat{\Phi}_{0,m'}^{(N)}(0,\phi_2)\quad \mbox{for\ } m=1,2,\ldots,q, 
\end{equation}
with a unitary matrix $C^{(1)}(\phi_2)$. Therefore, in the same way, we obtain  
$$
I^{(1)}=i\big[{\rm Tr}\;\Theta^{(1)}(2\pi)-{\rm Tr}\;\Theta^{(1)}(0)\big],
$$
where we have written 
$$
C^{(1)}(\phi_2)=\exp\big[i\Theta^{(1)}(\phi_2)\big] 
$$
in terms of the Hermitian matrix $\Theta^{(1)}(\phi_1)$. 
{From} (\ref{avHallcond}) and (\ref{intHallcond}), we obtain 
$$
\overline{{\hat\sigma}_{12}^{(\Lambda,N)}(\tilde{\boldsymbol{\phi}})}
=-\frac{1}{(2\pi)^2q}\big[{\rm Tr}\;\Theta^{(2)}(2\pi)-{\rm Tr}\;\Theta^{(2)}(0)
-{\rm Tr}\;\Theta^{(1)}(2\pi)+{\rm Tr}\;\Theta^{(1)}(0)\big].
$$
Therefore, it is sufficient to show 
\begin{equation}
\label{sumTrTheta}
{\rm Tr}\;\Theta^{(2)}(2\pi)-{\rm Tr}\;\Theta^{(2)}(0)
-{\rm Tr}\;\Theta^{(1)}(2\pi)+{\rm Tr}\;\Theta^{(1)}(0)=-2\pi p
\end{equation}
with some integer $p$. 

For this purpose, we denote by $\hat{\Phi}_0^{(N)}(\phi_1,\phi_2)$ the $q$-component vector 
with the $m$-th component, $\hat{\Phi}_{0,m}^{(N)}(\phi_1,\phi_2)$, $m=1,2,\ldots,q$. 
Then, the relation (\ref{PhiC(2)Phi}) is written  
$$
\hat{\Phi}_0^{(N)}(\phi_1,2\pi)=C^{(2)}(\phi_1)\hat{\Phi}_0^{(N)}(\phi_1,0).
$$
Similarly, the relation (\ref{PhiC(1)Phi}) is written 
$$
\hat{\Phi}_0^{(N)}(2\pi,\phi_2)=C^{(1)}(\phi_2)\hat{\Phi}_0^{(N)}(0,\phi_2).
$$
For $\phi_1=\phi_2=2\pi$, one has 
$$
\hat{\Phi}_0^{(N)}(2\pi,2\pi)=C^{(2)}(2\pi)\hat{\Phi}_0^{(N)}(2\pi,0)
=C^{(1)}(2\pi)\hat{\Phi}_0^{(N)}(0,2\pi).
$$
For $\phi_1=\phi_2=0$, 
$$
\hat{\Phi}_0^{(N)}(0,2\pi)=C^{(2)}(0)\hat{\Phi}_0^{(N)}(0,0)
$$
and 
$$
\hat{\Phi}_0^{(N)}(2\pi,0)=C^{(1)}(0)\hat{\Phi}_0^{(N)}(0,0).
$$
{From} these equations, the following relation must hold:  
$$
C^{(2)}(2\pi)C^{(1)}(0)=C^{(1)}(2\pi)C^{(2)}(0). 
$$
Taking the determinant for both sides of this equation 
and using the formula ${\rm det}\exp[i\Theta^{(j)}(\phi)]=\exp[i{\rm Tr}\;\Theta^{(j)}(\phi)]$ 
for $j=1,2$, one has (\ref{sumTrTheta}). 

Consequently, one has 
\begin{equation}
\label{FAQHC}
\overline{{\hat\sigma}_{12}^{(\Lambda,N)}(\tilde{\boldsymbol{\phi}})}:=
\frac{1}{(2\pi)^2}\iint_{[0,2\pi]\times[0,2\pi]}d\phi_1d\phi_2\; 
{\hat\sigma}_{12}^{(\Lambda,N)}(\tilde{\boldsymbol{\phi}})=\frac{1}{2\pi}\frac{p}{q}
\end{equation}
with an integer $p$. Since ${\hat\sigma}_{12}^{(\Lambda,N)}(\boldsymbol{\phi})
={\hat\sigma}_{12}^{(\Lambda,N)}(\tilde{\boldsymbol{\phi}})$ as mentioned above, we obtain 
\begin{equation}
\label{avHallformula}
\frac{1}{(2\pi)^2}\iint_{[0,2\pi]\times[0,2\pi]}d\phi_1d\phi_2\; 
{\hat\sigma}_{12}^{(\Lambda,N)}(\boldsymbol{\phi})=\frac{1}{2\pi}\frac{p}{q}
\end{equation}
for the Hall conductance ${\hat\sigma}_{12}^{(\Lambda,N)}(\boldsymbol{\phi})$ with the simple 
twisted boundary conditions.

\section{Topological Currents}
\label{sec:topocurr}

We write $H_0^{(\Lambda)}(\tilde{\boldsymbol{\phi}})$ for the Hamiltonian 
$H_0^{(\Lambda)}(\phi_1,k;\phi_2,\ell)$ in Sec.~\ref{sec:TP}. 
Consider a further deformation of the Hamiltonian $H_0^{(\Lambda)}(\tilde{\boldsymbol{\phi}})$ as 
\[
H_0^{(\Lambda)}(\tilde{\boldsymbol{\phi}},\tilde{\alpha}):=
\exp[-i\alpha\phi_1n_x]H_0^{(\Lambda)}(\tilde{\boldsymbol{\phi}})
\exp[i\alpha\phi_1n_x],
\]
where we have written $\tilde{\alpha}=(\alpha,x)$ for short.
Then, the relation (\ref{H0J}) is modified as 
\begin{multline*}
\frac{\partial}{\partial \phi_1}H_0^{(\Lambda)}(\tilde{\boldsymbol{\phi}},\tilde{\alpha})\\
=\exp[-i\alpha\phi_1n_x]\left\{J^{(1)}(k;\tilde{\boldsymbol{\phi}})
+i\alpha[H_0^{(\Lambda)}(\tilde{\boldsymbol{\phi}}),n_x]\right\} 
\exp[i\alpha\phi_1n_x].
\end{multline*}
We write $J^{(1)}(k;\tilde{\boldsymbol{\phi}},\tilde{\alpha})$ for this right-hand side. 
Namely, we have 
\begin{equation}
\frac{\partial}{\partial \phi_1}H_0^{(\Lambda)}(\tilde{\boldsymbol{\phi}},\tilde{\alpha})\\
=J^{(1)}(k;\tilde{\boldsymbol{\phi}},\tilde{\alpha}). 
\end{equation} 
Clearly, this operator $J^{(1)}(k;\tilde{\boldsymbol{\phi}},\tilde{\alpha})$ is the local current operator 
which is obtained by twisting the phases of 
\begin{equation}
\label{localcurrentalpha}
J^{(1)}(k,\tilde{\alpha}):=J^{(1)}(k)+i\alpha[H_0^{(\Lambda)},n_x]
\end{equation}
with the angles $\phi_1$ and $\phi_2$. In the infinite volume, the operator is formally written 
\begin{align*}
J^{(1)}(k,\tilde{\alpha})&=i[H_0^{(\ze^2)},\theta^{(1)}(k)]
+i\alpha[H_0^{(\ze^2)},n_x]\\ 
&=i[H_0^{(\ze^2)},\theta^{(1)}(k)+\alpha n_x].
\end{align*}
That is to say, the step potential is slightly deformed with $\alpha$ at the site $x$. 

In the same way, the projection operator onto the sector of the ground state is transformed as 
\begin{align*}
P_0^{(\Lambda,N)}(\tilde{\boldsymbol{\phi}},\tilde{\alpha})
&:=\exp[-i\alpha\phi_1n_x]P_0^{(\Lambda,N)}(\tilde{\boldsymbol{\phi}})\exp[i\alpha\phi_1n_x]\\
&=\frac{1}{2\pi i}\oint \frac{dz}{z-H_0^{(\Lambda,N)}(\tilde{\boldsymbol{\phi}},\tilde{\alpha})}. 
\end{align*}
We write
\[
P_{0,1}^{(\Lambda,N)}(\tilde{\boldsymbol{\phi}},\tilde{\alpha})
:=\frac{\partial}{\partial \phi_1}P_0^{(\Lambda,N)}(\tilde{\boldsymbol{\phi}},\tilde{\alpha})
\]
and
\[
P_{0,2}^{(\Lambda,N)}(\tilde{\boldsymbol{\phi}},\tilde{\alpha})
:=\frac{\partial}{\partial \phi_2}P_0^{(\Lambda,N)}(\tilde{\boldsymbol{\phi}},\tilde{\alpha}).
\]
Then we have:
\begin{lem}
\label{lem:TopoInvUnch}
\begin{align*}
&\iint_{[0,{2\pi}]\times[0,2\pi]}d\phi_1d\phi_2\;
{\rm Tr}\;P_0^{(\Lambda,N)}(\tilde{\boldsymbol{\phi}},\tilde{\alpha})
\left[P_{0,1}^{(\Lambda,N)}(\tilde{\boldsymbol{\phi}},\tilde{\alpha}),
P_{0,2}^{(\Lambda,N)}(\tilde{\boldsymbol{\phi}},\tilde{\alpha})\right]\\
&=\iint_{[0,{2\pi}]\times[0,2\pi]}d\phi_1d\phi_2\;
{\rm Tr}\;P_0^{(\Lambda,N)}(\tilde{\boldsymbol{\phi}})
\left[P_{0,1}^{(\Lambda,N)}(\tilde{\boldsymbol{\phi}}),
P_{0,2}^{(\Lambda,N)}(\tilde{\boldsymbol{\phi}})\right].
\end{align*}
\end{lem}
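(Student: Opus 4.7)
The plan is to view $P_0^{(\Lambda,N)}(\tilde{\boldsymbol{\phi}},\tilde{\alpha})=V(\phi_1)P_0^{(\Lambda,N)}(\tilde{\boldsymbol{\phi}})V(\phi_1)^{-1}$ with $V(\phi_1):=\exp[-i\alpha\phi_1 n_x]$ as a $\phi_1$-dependent unitary conjugation of the undeformed ground-state projection, and to show that, after taking the trace, this ``gauge transformation'' changes the integrand only by an exact $\phi_2$-derivative of a $2\pi$-periodic quantity; integration over the torus $[0,2\pi]^2$ then kills the discrepancy.

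Since $[n_x,V]=0$, one has $V^{-1}\partial_{\phi_1}V=-i\alpha n_x$ and $\partial_{\phi_2}V=0$, so differentiating the conjugation gives
\begin{align*}
P_{0,1}^{(\Lambda,N)}(\tilde{\boldsymbol{\phi}},\tilde{\alpha})
&=V\bigl(P_{0,1}^{(\Lambda,N)}(\tilde{\boldsymbol{\phi}})-i\alpha\bigl[n_x,P_0^{(\Lambda,N)}(\tilde{\boldsymbol{\phi}})\bigr]\bigr)V^{-1},\\
P_{0,2}^{(\Lambda,N)}(\tilde{\boldsymbol{\phi}},\tilde{\alpha})
&=V\,P_{0,2}^{(\Lambda,N)}(\tilde{\boldsymbol{\phi}})\,V^{-1}.
\end{align*}
Substituting into ${\rm Tr}\,P_0^{(\Lambda,N)}(\tilde{\boldsymbol{\phi}},\tilde{\alpha})\bigl[P_{0,1}^{(\Lambda,N)}(\tilde{\boldsymbol{\phi}},\tilde{\alpha}),P_{0,2}^{(\Lambda,N)}(\tilde{\boldsymbol{\phi}},\tilde{\alpha})\bigr]$ and cancelling $V$ and $V^{-1}$ by cyclicity of the trace, I obtain
\[
{\rm Tr}\,P_0(\tilde{\boldsymbol{\phi}},\tilde{\alpha})\bigl[P_{0,1}(\tilde{\boldsymbol{\phi}},\tilde{\alpha}),P_{0,2}(\tilde{\boldsymbol{\phi}},\tilde{\alpha})\bigr]
={\rm Tr}\,P_0(\tilde{\boldsymbol{\phi}})\bigl[P_{0,1}(\tilde{\boldsymbol{\phi}}),P_{0,2}(\tilde{\boldsymbol{\phi}})\bigr]
-i\alpha\,{\rm Tr}\,P_0\bigl[[n_x,P_0],P_{0,2}\bigr],
\]
with the $(\Lambda,N)$ superscripts suppressed on the right.

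I would then reduce the correction term using $P_0^2=P_0$ and its consequence $P_0(\partial_{\phi_2}P_0)P_0=0$. Expanding the double commutator into four terms and applying cyclicity, two of them vanish by $P_0(\partial_{\phi_2}P_0)P_0=0$ while the remaining two combine, via $\partial_{\phi_2}P_0=P_0(\partial_{\phi_2}P_0)+(\partial_{\phi_2}P_0)P_0$, into
\[
{\rm Tr}\,P_0\bigl[[n_x,P_0],P_{0,2}^{(\Lambda,N)}(\tilde{\boldsymbol{\phi}})\bigr]
=-{\rm Tr}\,n_x\,\partial_{\phi_2}P_0^{(\Lambda,N)}(\tilde{\boldsymbol{\phi}})
=-\frac{\partial}{\partial\phi_2}{\rm Tr}\,n_xP_0^{(\Lambda,N)}(\tilde{\boldsymbol{\phi}}).
\]
Finally, since $\theta^{(j)}(k_j)=\sum_{x^{(j)}\ge k_j}n_x$ has integer spectrum, $\exp[\pm i2\pi\theta^{(j)}(k_j)]=I$, so $H_0^{(\Lambda)}(\tilde{\boldsymbol{\phi}})$ and hence the spectral projection $P_0^{(\Lambda,N)}(\tilde{\boldsymbol{\phi}})$ are $2\pi$-periodic in each $\phi_j$. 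In particular $\phi_2\mapsto{\rm Tr}\,n_xP_0^{(\Lambda,N)}(\tilde{\boldsymbol{\phi}})$ is $2\pi$-periodic, whence $\int_0^{2\pi}\partial_{\phi_2}{\rm Tr}\,n_xP_0\,d\phi_2=0$ for every $\phi_1$, and integrating the identity above over $[0,2\pi]^2$ yields the lemma.

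The main obstacle I foresee is the algebraic collapse to a pure $\phi_2$-derivative of ${\rm Tr}\,n_xP_0$; it demands careful bookkeeping of the four terms produced by expanding the double commutator, using $P_0^2=P_0$ and cyclicity at the right moments. Once the total-derivative structure is identified, the periodicity of $P_0^{(\Lambda,N)}(\tilde{\boldsymbol{\phi}})$ (inherited from the integer-valued generator $\theta^{(j)}(k_j)$) closes the argument immediately.
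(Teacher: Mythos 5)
Your proof is correct, and it follows essentially the same route as the paper's: start from $P_0^{(\Lambda,N)}(\tilde{\boldsymbol{\phi}},\tilde{\alpha})=V(\phi_1)P_0^{(\Lambda,N)}(\tilde{\boldsymbol{\phi}})V(\phi_1)^{-1}$, differentiate, cancel $V$ by cyclicity, and reduce the extra term to a total $\phi_2$-derivative of ${\rm Tr}\,n_xP_0$ (noting that $\sum_m\langle\hat\Phi_{0,m},n_x\hat\Phi_{0,m}\rangle={\rm Tr}\,n_xP_0$, your quantity and the paper's are literally the same). The one place you genuinely streamline is in the final step: the paper passes through Proposition~\ref{prop:Kato} (smooth orthonormal frame $\hat\Phi_{0,m}$), the identity (\ref{diffPhiid}), and the unitary $C^{(2)}(\phi_1)$ relating the frames at $\phi_2=0$ and $\phi_2=2\pi$ to show the boundary term cancels, whereas you work entirely at the projection level using $P_0^2=P_0$, $P_0P_{0,2}P_0=0$, and cyclicity, and then invoke the $2\pi$-periodicity of $H_0^{(\Lambda)}(\tilde{\boldsymbol{\phi}})$ (hence of $P_0^{(\Lambda,N)}$) in $\phi_2$ directly. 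This avoids the Kato frame and the explicit $C^{(2)}$ matrix altogether, which is cleaner, though the underlying fact exploited — that $P_0$ returns to itself as $\phi_2$ runs from $0$ to $2\pi$ — is exactly what the unitarity of $C^{(2)}$ encodes in the paper.
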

The proof is given in Appendix~\ref{ProofTopoInvUnch}. 
This lemma implies that the topological invariant, the Chern number, does not change 
for the local deformation of the phases of the Hamiltonian. 
But the local current operator in the first direction changes from (\ref{localcurrent}) 
to (\ref{localcurrentalpha}). 
Even when starting from this deformed local current operator $J^{(1)}(k,\tilde{\alpha})$, 
our argument holds in the same way. In consequence, we can prove 
the same statement as in Theorem~\ref{thm:main} for the deformed local current operator. 
Since we can repeatedly apply local deformations, we can define more generic local 
currents in the following way. 

Consider generic functions $\vartheta^{(j)}(x;k,\ell)$, $j=1,2$, which satisfy  
\[
\vartheta^{(1)}(x;k,\ell)=\theta^{(1)}(x;k), \quad 
\vartheta^{(2)}(x;k,\ell)=\theta^{(2)}(x,\ell) \quad\text{for \ } {\rm dist}(x,(k,\ell))\ge R_0
\]
with a large positive constant $R_0$. Namely, the functions $\vartheta^{(j)}(x;k,\ell)$ 
coincide with the step functions at the large distances. At short distances, 
the functions $\vartheta^{(j)}(x;k,\ell)$ can take any real values. 
We define  
\[
\vartheta^{(j)}(k,\ell):=\sum_{x\in\ze^2}\vartheta^{(j)}(x;k,\ell)\; n_x.
\]
Then, the twisted Hamiltonian is formally given by 
\[
H_0^{(\ze^2)}(\boldsymbol{\phi}):=\exp\bigl[-\sum_{j=1,2}i\phi_j\vartheta^{(j)}(k,\ell)\bigr]
H_0^{(\ze^2)}
\exp\bigl[\sum_{j=1,2}i\phi_j\vartheta^{(j)}(k,\ell)\bigr]. 
\]
The generic local current operators are 
\[
\mathcal{J}^{(j)}(k,\ell):=i[H_0^{(\ze^2)},\vartheta^{(j)}(k,\ell)]. 
\]
The potential operator of (\ref{chiGamma}) can be also replaced by a generic 
electric potential such that, except the boundary ribbon region,  
the potential takes the unit inside the finite large region and zero outside region. 
Namely, the potential coincides with the characteristic function of the finite region 
at the large distances from the boundary of the region, 
while the potential takes any values at the short distances from the boundary of the region. 

\section{Estimates of the Current-Current Correlations}
\label{CurrentCurrentCorr}

The conductance (\ref{tildesigma12}) is written 
\begin{align*}
&{\tilde\sigma}_{12}^{(\Lambda,N)}(\eta,T,\mathcal{N},\mathcal{M})=
i\int_{-T}^0ds\; se^{\eta s}\\
&\times\frac{1}{q}\sum_{m=1}\sum_{n\ge 1}
\left[\langle\Phi_{0,m}^{(N)},J_\mathcal{N}^{(1)}(k,\ell)\Phi_n^{(N)}\rangle
\langle\Phi_n^{(N)},J(\Gamma_\mathcal{M}(k,\ell))\Phi_{0,m}^{(N)}\rangle e^{i(E_n^{(N)}-E_{0,m}^{(N)})s}\right.\\
&-\left.\langle \Phi_{0,m}^{(N)}, J(\Gamma_\mathcal{M}(k,\ell))\Phi_n^{(N)}\rangle
\langle\Phi_n^{(N)},J_\mathcal{N}^{(1)}(k,\ell)\Phi_{0,m}^{(N)}\rangle e^{-i(E_n^{(N)}-E_{0,m}^{(N)})s}\right]
\end{align*}
in terms of the ground-state vectors $\Phi_{0,m}^{(N)}$ with the eigenvalue $E_{0,m}^{(N)}$, 
$m=1,2,\ldots,q$, and 
the excited-state vectors $\Phi_n^{(N)}$ with the eigenvalue $E_n^{(N)}$, $n\ge1$. 
Here, we have dropped the twisted phase dependence for short.  
Consider the limit 
\[
{\tilde\sigma}_{12}^{(\Lambda,N)}(0,\infty,\mathcal{N},\mathcal{M})
:=\lim_{\eta\rightarrow 0}\lim_{T\rightarrow\infty}
{\tilde\sigma}_{12}^{(\Lambda,N)}(\eta,T,\mathcal{N},\mathcal{M}).
\]
Using the integral formula (\ref{IntsexpE}) in Appendix~\ref{CorrHallcon}, this is computed as  
\begin{multline*}
\tilde{\sigma}_{12}^{(\Lambda,N)}(0,\infty,\mathcal{N},\mathcal{M})\\
=\frac{i}{q}\sum_{m=1}^q 
\sum_{n\ge 1}\left[
\frac{
\langle \Phi_{0,m}^{(N)}, J_\mathcal{N}^{(1)}(k,\ell)\Phi_n^{(N)}\rangle
\langle\Phi_n^{(N)},J(\Gamma_\mathcal{M}(k,\ell))\Phi_{0,m}^{(N)}\rangle}
{(E_{0,m}^{(N)}-E_n^{(N)})^2}\right.\\
-\left.\frac{
\langle \Phi_{0,m}^{(N)}, J(\Gamma_\mathcal{M}(k,\ell))\Phi_n^{(N)}\rangle
\langle\Phi_n^{(N)},J_\mathcal{N}^{(1)}(k,\ell)\Phi_{0,m}^{(N)}\rangle}
{(E_{0,m}^{(N)}-E_n^{(N)})^2}\right],
\end{multline*}
where we have written 
\[
J(\Gamma_\mathcal{M}(k,\ell)):=J^{(\Lambda)}(\Gamma_\mathcal{M}(k,\ell);0) 
\]
for short. In the same way as in Appendix~\ref{CorrHallcon}, we have 
\begin{multline}
\label{sigmaetaTdepend}
\left|{\tilde\sigma}_{12}^{(\Lambda,N)}(\eta,T,\mathcal{N},\mathcal{M})
-{\tilde\sigma}_{12}^{(\Lambda,N)}(0,\infty,\mathcal{N},\mathcal{M})\right|\\
\le 2\left[\frac{2\Delta E+\eta}{\Delta E^4}\eta+\frac{1+T\Delta E}{\Delta E^2}e^{-\eta T}\right]
\big\Vert J_\mathcal{N}^{(1)}(k,\ell)\big\Vert\; \big\Vert J(\Gamma_\mathcal{M}(k,\ell))\big\Vert.
\end{multline}

Both of the current operators, $J_\mathcal{N}^{(1)}(k,\ell)$ and 
$J(\Gamma_\mathcal{M}(k,\ell))$, are written as a sum of local operators. 
We want to show that the dominant contributions to the conductance in the double sum are given by 
local operators in the neighborhood of the point $(k,\ell)$. 
In other words, if the distance between two local operators in the sums is 
sufficiently large, then the corresponding contribution are negligible.  

We define 
\[
\langle\!\langle A; B\rangle\!\rangle:=\frac{1}{q}\sum_{q=1}^q\sum_{n\ge 1}
\langle \Phi_{0,m}^{(N)},A\Phi_n^{(N)}\rangle\frac{1}{(E_{0,m}^{(N)}-E_n^{(N)})^2}
\langle \Phi_n^{(N)}, B\Phi_{0,m}^{(N)}\rangle
\]
for local observables $A$ and $B$. 

\begin{prop}
\label{prop:corrABdecay}
The following bound holds: 
\[
\left|\langle\!\langle A; B\rangle\!\rangle\right|\le Ce^{-\kappa r},
\]
where $r={\rm dist}({\rm supp}\> A,{\rm supp}\> B)$, and $C$ and $\kappa$ are positive constants. 
\end{prop}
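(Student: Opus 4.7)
My plan is to combine the uniform spectral gap from Theorem~\ref{thm:twistDegeneGap} with a Lieb--Robinson bound (available because the hopping amplitudes and interactions are of finite range) and the Hastings--Koma exponential clustering theorem for gapped, possibly degenerate, ground states. As a first step I would insert the integral representation
\[
\frac{1}{(E_{0,m}^{(N)}-E_{n}^{(N)})^{2}}=\int_{0}^{\infty}dt\; t\, e^{-t(E_{n}^{(N)}-E_{0,m}^{(N)})}
\]
and carry out the sum over $n\ge 1$, obtaining the compact form
\[
\langle\!\langle A;B\rangle\!\rangle=\int_{0}^{\infty}dt\; t\cdot\frac{1}{q}\sum_{m=1}^{q}\bigl\langle\Phi_{0,m}^{(N)},\,A\,(1-P_{0}^{(\Lambda,N)})\,e^{-t(H_{0}^{(\Lambda,N)}-E_{0,m}^{(N)})}\,B\,\Phi_{0,m}^{(N)}\bigr\rangle.
\]
The uniform gap yields $\|(1-P_{0}^{(\Lambda,N)})e^{-t(H_{0}^{(\Lambda,N)}-E_{0,m}^{(N)})}\|\le e^{-t\Delta E}$, so the $t$-integral converges absolutely; but this alone provides no decay in the spatial separation $r$.

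To generate the spatial decay I would follow the filtering strategy of Hastings: represent the imaginary-time operator $(1-P_{0}^{(\Lambda,N)})e^{-t(H_{0}^{(\Lambda,N)}-E_{0,m}^{(N)})}B$ as a convolution against a smooth spectral weight of real-time Heisenberg evolutions $B(s)=e^{isH_{0}^{(\Lambda)}}Be^{-isH_{0}^{(\Lambda)}}$, with the weight supported on frequencies $\ge \Delta E/2$ and having (sub)exponentially decaying Fourier transform in $s$. The Lieb--Robinson bound then localises $B(s)$, up to an operator-norm remainder $\le C\|B\|e^{-\mu(R-v_{\mathrm{LR}}|s|)}$, to an effective support of radius $R\sim v_{\mathrm{LR}}|s|$ around $\supp B$. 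Splitting the resulting $s$-convolution at the scale $|s|\sim r/(2v_{\mathrm{LR}})$, the near-light-cone piece has support separated from $\supp A$ by at least $r/2$, to which the generalized Hastings--Koma exponential clustering theorem for the gapped $q$-fold degenerate ground state,
\[
\bigl|\omega_{0}^{(\Lambda,N)}\bigl(A\,(1-P_{0}^{(\Lambda,N)})\,\tilde B\bigr)\bigr|\le C\|A\|\|\tilde B\|\,e^{-\kappa_{0}\dist(\supp A,\supp\tilde B)},
\]
can be applied to the local approximant $\tilde B$; the far-light-cone piece is suppressed by the combined factors $e^{-t\Delta E}$ and the Fourier decay of the filter.

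The main obstacle is the clean execution of this filtering argument in the presence of the $q$-fold quasi-degeneracy: the shift $E_{0,m}^{(N)}$ appearing in the resolvent varies with $m$, and the clustering estimate must be applied to the full projector $1-P_{0}^{(\Lambda,N)}$ rather than a rank-one projector. Both complications are controlled under Definition~\ref{definition}, because $\delta E\to 0$ in the thermodynamic limit while $\Delta E$ stays bounded below uniformly, and the required extension of Hastings--Koma clustering to degenerate gapped ground states is available in the literature. Optimising the cutoff $T_{0}\sim r/v_{\mathrm{LR}}$ balances the ballistic Lieb--Robinson spreading against the gap-induced exponential damping and yields a decay rate $\kappa\sim\min(\kappa_{0},\mu,\Delta E/v_{\mathrm{LR}})$, independent of $|\Lambda|$ and $N$, as required.
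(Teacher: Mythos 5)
Your proposal is correct and follows essentially the same route as the paper's proof: both insert the integral representation $\frac{1}{\Delta\mathcal{E}^2}=\int_0^\infty ds\, s\, e^{-\Delta\mathcal{E}s}$, split the resulting $s$-integral at a scale proportional to $r$ (the gap $\Delta E$ kills the far part, while the near part reduces to estimating the imaginary-time correlator $\omega_0^{(\Lambda,N)}\bigl(A\,\tilde B^{(\Lambda)}(is)\bigr)$ for $0\le s\le cr$), and then invoke the Lieb--Robinson bound combined with the Hastings/Nachtergaele--Sims Gaussian-filter argument to obtain exponential clustering of that imaginary-time correlator. The paper makes the last step by reference to the contour-integral and Gaussian-filter estimates it carries out in detail in Appendix~\ref{subsecTPDEE} around (\ref{intomega0aastP0aB}) (citing Nachtergaele--Sims), which is exactly the filtering strategy you describe; your compact form $\int_0^\infty dt\, t\,\frac{1}{q}\sum_m\langle\Phi_{0,m}^{(N)},A(1-P_0^{(\Lambda,N)})e^{-t(H_0^{(\Lambda,N)}-E_{0,m}^{(N)})}B\Phi_{0,m}^{(N)}\rangle$ is literally identical to the paper's $\int_0^\infty ds\, s\,\omega_0^{(\Lambda,N)}(A\tilde B^{(\Lambda)}(is))$ with $\tilde B=B-P_0^{(\Lambda,N)}BP_0^{(\Lambda,N)}$, and your handling of the $m$-dependent shift $E_{0,m}^{(N)}$ via $\delta E\to 0$ matches the paper's implicit treatment. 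One minor remark: the uniform gap needed here comes directly from Definition~\ref{definition} (or, in the twisted-phase application, from Theorem~\ref{thm:twistDegeneGap}); citing the latter is only required when the proposition is applied at nonzero $\boldsymbol{\phi}$.
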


\begin{proof}
We define 
\[
{\tilde B}:=B-P_0^{(\Lambda,N)}BP_0^{(\Lambda,N)}
\]
and 
\[
{\tilde B}^{(\Lambda)}(z):=e^{iH_0^{(\Lambda)}z}{\tilde B}e^{-iH_0^{(\Lambda)}z}\quad\text{for \ }z\in \co. 
\]
Using the integral identity
\[
\frac{1}{\Delta \mathcal{E}^2}=\int_0^\infty ds \; s\> e^{-\Delta \mathcal{E}s}
\]
with $\Delta \mathcal{E}>0$, we have 
\begin{multline*}
\langle\!\langle A; B\rangle\!\rangle\\
:=\frac{1}{q}\sum_{m=1}^q\sum_{n\ge 1}\int_0^\infty ds\; s
\langle \Phi_{0,m}^{(N)},A\Phi_n^{(N)}\rangle\exp[-(E_n^{(N)}-E_{0,m}^{(N)})s]
\langle \Phi_n^{(N)}, \tilde{B}\Phi_{0,m}^{(N)}\rangle\\
=\frac{1}{q}\sum_{m=1}^q\sum_{n\ge 1}\int_0^{cr} ds\; s
\langle \Phi_{0,m}^{(N)},A\Phi_n^{(N)}\rangle\exp[-(E_n^{(N)}-E_{0,m}^{(N)})s]
\langle \Phi_n^{(N)}, \tilde{B}\Phi_{0,m}^{(N)}\rangle\\
+\frac{1}{q}\sum_{m=1}^q\sum_{n\ge 1}\int_{cr}^\infty ds\; s
\langle \Phi_{0,m}^{(N)},A\Phi_n^{(N)}\rangle\exp[-(E_n^{(N)}-E_{0,m}^{(N)})s]
\langle \Phi_n^{(N)}, \tilde{B}\Phi_{0,m}^{(N)}\rangle,
\end{multline*}
where $c$ is a positive constant. Clearly, the second sum leads to the desired bound 
because of the spectral gap above the ground state. The first sum is written 
\[
\int_0^{cr}ds\; s\;\omega_0^{(\Lambda,N)}(A\tilde{B}^{(\Lambda)}(is)). 
\]
Therefore, it is sufficient to estimate $\omega_0^{(\Lambda,N)}(A\tilde{B}(ib))$ for 
$0\le b\le cr$. By definition, one has 
\[
\omega_0^{(\Lambda,N)}(A\tilde{B}^{(\Lambda)}(ib))=\omega_0^{(\Lambda,N)}(AB^{(\Lambda)}(ib))-
\omega_0^{(\Lambda,N)}(AP_0^{(\Lambda,N)}B^{(\Lambda)}(ib))
\]
where we have written 
\[
B^{(\Lambda)}(z):=e^{iH_0^{(\Lambda)}z}Be^{-iH_0^{(\Lambda)}z}\quad\text{for \ }z\in \co. 
\]
For a large distance $r={\rm dist}({\rm supp}\; A,{\rm supp}\; B)$, we can expect 
the exponential clustering   
\[
\omega_0^{(\Lambda,N)}(AB^{(\Lambda)}(ib))\sim 
\omega_0^{(\Lambda,N)}(AP_0^{(\Lambda,N)}B^{(\Lambda)}(ib))+\mathcal{O}(e^{-\kappa r})
\]
for a small $b$ because of the spectral gap above the ground state. Actually, we can prove this by 
using the Lieb-Robinson bounds in the same way as that for 
evaluating the integrand in (\ref{intomega0aastP0aB}) in Appendix~\ref{subsecTPDEE}. 
(See also Theorem~2 in \cite{NachtergaeleSims}.) 
\end{proof}

Using Proposition~\ref{prop:corrABdecay}, we have 
\begin{equation}
\label{sigmaNMdepend}
\lim_{\mathcal{N}\rightarrow\infty}\lim_{\mathcal{M}\rightarrow\infty}
\lim_{\Lambda\nearrow\ze^2}
\left|\tilde{\sigma}_{12}^{(\Lambda,N)}(0,\infty,\mathcal{N},\mathcal{M},\boldsymbol{\phi})
-{\hat\sigma}_{12}^{(\Lambda,N)}(\boldsymbol{\phi})\right|=0
\end{equation}
for the Hall conductance ${\hat\sigma}_{12}^{(\Lambda,N)}(\boldsymbol{\phi})$ of (\ref{hatsigma12phi}) 
with the twisted boundary phase $\boldsymbol{\phi}$.
 
\section{Twisted Phase Dependence of the Hall Conductance}
\label{sec:TPDHallcond}

We write
\[
{\tilde\sigma}_{12}^{(\Lambda,N)}(\eta,T,\mathcal{N},\mathcal{M},\boldsymbol{\phi})
\]
for the conductance of (\ref{tildesigma12}) in the case of the twisted boundary conditions. 

We write 
\[
A^{(\Lambda)}(t):=e^{itH_0^{(\Lambda)}}Ae^{-itH_0^{(\Lambda)}}
\]
for the time evolution for a local operator $A$. Let $\Omega$ be a subset of 
$\Lambda$, and write $H_0^{(\Omega)}$ for the Hamiltonian $H_0^{(\Lambda)}$ restricted 
to the subset $\Omega$. 
Let $A$ be a local observable with ${\rm supp}\; A\subset\Omega$, and 
define 
\[
A^{(\Omega)}(t):=e^{itH_0^{(\Omega)}}Ae^{-itH_0^{(\Omega)}}.
\]
This is the time evolution of $A$ on the region $\Omega$. 
Then, one has 
\begin{equation}
\bigl\Vert A^{(\Lambda)}(t)-A^{(\Omega)}(t)\bigr\Vert
\le \int_0^t ds\; \bigl\Vert [(H_0^{(\Lambda)}-H_0^{(\Omega)}),A^{(\Omega)}(t-s)]\bigr\Vert  
\label{timeevolbound}
\end{equation}
for $t\ge 0$. When the distance between the supports of 
two observables $(H_0^{(\Lambda)}-H_0^{(\Omega)})$ and $A$ is sufficiently large, 
the norm of the commutator in the integrand in the right-hand side becomes very small for a finite $t$.  
This inequality can be proved by using the Lieb-Robinson bounds in Appendix~\ref{Proofa0tdiffinequal}. 

We define 
\begin{multline}
\label{tildesigma12Omega}
{\tilde\sigma}_{12}^{(\Lambda,N)}(\eta,T,\mathcal{N},\mathcal{M},\Omega,\boldsymbol{\phi})\\
:=i\int_{-T}^0ds \; s e^{\eta s}\omega_0^{(\Lambda,N)}
\bigl(\bigl[J_\mathcal{N}^{(1)}(k,\ell),
J^{(\Omega)}(\Gamma_\mathcal{M}(k,\ell);s)\bigr];\boldsymbol{\phi}\bigr).
\end{multline}
This is given by replacing the current operator $J^{(\Lambda)}(\Gamma_\mathcal{M}(k,\ell);s)$ 
by 
\[
J^{(\Omega)}(\Gamma_\mathcal{M}(k,\ell);s):=e^{itH_0^{(\Omega)}}
J(\Gamma_\mathcal{M}(k,\ell))e^{-itH_0^{(\Omega)}}
\]
in the expression (\ref{tildesigma12}) of the conductance 
${\tilde\sigma}_{12}^{(\Lambda,N)}(\eta,T,\mathcal{N},\mathcal{M},\boldsymbol{\phi})$ 
in the case of the twisted boundary conditions. 
Note that 
\begin{multline}
{\tilde\sigma}_{12}^{(\Lambda,N)}(\eta,T,\mathcal{N},\mathcal{M},\boldsymbol{\phi})
-{\tilde\sigma}_{12}^{(\Lambda,N)}(\eta,T,\mathcal{N},\mathcal{M},\Omega,\boldsymbol{\phi})\\
=i\int_{-T}^0ds \> s e^{\eta s}\omega_0^{(\Lambda,N)}
\bigl(\bigl[J_\mathcal{N}^{(1)}(k,\ell),J^{(\Lambda)}(\Gamma_\mathcal{M}(k,\ell);s)-
J^{(\Omega)}(\Gamma_\mathcal{M}(k,\ell);s)\bigr];\boldsymbol{\phi}\bigr).
\end{multline}
Using the above bound (\ref{timeevolbound}), we can prove that 
the difference between the two conductances becomes very small for a large lattice $\Lambda$.  
Actually, we choose the region $\Omega$ so that 
$$
{\rm dist}(\Omega,\partial\Lambda)=\mathcal{O}(L)
$$
and 
$$
{\rm dist}(\partial\Omega,\Gamma_\mathcal{M}(k,\ell))=\mathcal{O}(L),
$$
where $\partial\Lambda$ and $\partial\Omega$ denote the boundary of the regions $\Lambda$ and 
$\Omega$, respectively, and $L=\min\{L^{(1)},L^{(2)}\}$. Then, we have 
\begin{equation}
\label{diffsigmadiffLambdaOmega}
{\tilde\sigma}_{12}^{(\Lambda,N)}(\eta,T,\mathcal{N},\mathcal{M},\boldsymbol{\phi})
-{\tilde\sigma}_{12}^{(\Lambda,N)}(\eta,T,\mathcal{N},\mathcal{M},\Omega,\boldsymbol{\phi})\rightarrow 
0\quad\mbox{as}\ \Lambda \nearrow\ze^2
\end{equation}
by using the Lieb-Robinson bounds (\ref{LiebRobinson}) below in Appendix~\ref{ConstLEE}.
(See also \cite{LiebRobinson,Hastings,HastingsKoma,NachtergaeleSims}.)   

In the right-hand side of (\ref{tildesigma12Omega}), 
the support of the commutator of the two current operators is a finite subset of $\Lambda$ and 
apart from the boundary of $\Lambda$. Therefore, we can expect that the effect of the twisted 
boundary condition is exponentially small in the size $L$ 
in the approximated Hall conductance (\ref{tildesigma12Omega}). 
Actually, we can prove the inequality (\ref{diffsigmadiffphi}) below. 

To begin with, we note that the expectation value of a local operator $A$ for the ground state of 
the Hamiltonian $H_0^{(\Lambda,N)}(\boldsymbol{\phi})$ with the twisted boundary conditions is 
written 
\[
\omega_0^{(\Lambda,N)}(A;\boldsymbol{\phi})
=\frac{1}{q}{\rm Tr}\; AP_0^{(\Lambda,N)}(\boldsymbol{\phi}),
\]
where $P_0^{(\Lambda,N)}(\boldsymbol{\phi})$ is the projection onto the sector of the ground state.  
Using contour integral, one has 
\[
P_0^{(\Lambda,N)}(\boldsymbol{\phi})=\frac{1}{2\pi i}\oint dz 
\frac{1}{z-H_0^{(\Lambda,N)}(\boldsymbol{\phi})}. 
\]
Further, by relying on the existence of the spectral gap above the sector of the ground-state, 
one obtains 
\[
\frac{\partial}{\partial \phi_j}P_0^{(\Lambda,N)}(\boldsymbol{\phi})=\frac{1}{2\pi i}\oint dz 
\frac{1}{z-H_0^{(\Lambda,N)}(\boldsymbol{\phi})}
B_j(\boldsymbol{\phi})
\frac{1}{z-H_0^{(\Lambda,N)}(\boldsymbol{\phi})}, 
\]
where we have written 
\[
B_j(\boldsymbol{\phi}):=\frac{\partial}{\partial\phi_j}H_0^{(\Lambda)}(\boldsymbol{\phi}).
\]
{From} this, we have  
\[
\frac{\partial}{\partial \phi_j}{\rm Tr}\; AP_0^{(\Lambda,N)}(\boldsymbol{\phi})
=\frac{1}{2\pi i}\oint dz\; {\rm Tr}\; A
\frac{1}{z-H_0^{(\Lambda,N)}(\boldsymbol{\phi})}
B_j(\boldsymbol{\phi})
\frac{1}{z-H_0^{(\Lambda,N)}(\boldsymbol{\phi})}. 
\]
Integrating both side in the case of $j=1$, we obtain 
\begin{multline*}
{\rm Tr}\; AP_0^{(\Lambda,N)}(\phi_1,\phi_2)-{\rm Tr}\; AP_0^{(\Lambda,N)}(0,\phi_2)\\
=\int_0^{\phi_1}d\phi' \frac{1}{2\pi i}\oint dz\; {\rm Tr}\; A
\frac{1}{z-H_0^{(\Lambda,N)}(\boldsymbol{\phi}')}
B_1(\boldsymbol{\phi}')
\frac{1}{z-H_0^{(\Lambda,N)}(\boldsymbol{\phi}')},
\end{multline*}
where we have written $\boldsymbol{\phi}'=(\phi',\phi_2)$.
Therefore, 
\begin{multline*}
\omega_0^{(\Lambda,N)}(A;\boldsymbol{\phi})-\omega_0^{(\Lambda,N)}(A;0,\phi_2)\\
=\int_0^{\phi_1}d\phi' \frac{1}{q}\sum_{m=1}^q\sum_{n\ge 1}
\frac{\langle\Phi_{0,m}^{(N)}(\boldsymbol{\phi}'), A\Phi_n^{(N)}(\boldsymbol{\phi}')\rangle
\langle\Phi_n^{(N)}(\boldsymbol{\phi}'), B_1(\boldsymbol{\phi}')\Phi_{0,m}^{(N)}(\boldsymbol{\phi}')\rangle}
{E_{0,m}^{(N)}(\boldsymbol{\phi}')-E_n^{(N)}(\boldsymbol{\phi}')}\\
+(A\leftrightarrow  B_1(\boldsymbol{\phi}')).
\end{multline*}
This right-hand side can be evaluated 
in the same way as in the preceding Sec.~\ref{CurrentCurrentCorr}. 
Thus, the difference of the two conductances with the different phases is exponentially 
small in the linear size $L=\min\{L^{(1)},L^{(2)}\}$ of the lattice $\Lambda$ as 
\begin{multline*} 
\left|{\tilde\sigma}_{12}^{(\Lambda,N)}(\eta,T,\mathcal{N},\mathcal{M},\Omega,\phi_1,\phi_2)
-{\tilde\sigma}_{12}^{(\Lambda,N)}(\eta,T,\mathcal{N},\mathcal{M},\Omega,0,\phi_2)\right|\\
\le\mathcal{C}(\eta,T)\times \mathcal{O}(\exp[-{\rm Const.}L]), 
\end{multline*}
where  
$$
\mathcal{C}(\eta,T):=\left|\int_{-T}^0ds\; se^{\eta s}\right|. 
$$
In the same way, we obtain  
\begin{align}
\label{diffsigmadiffphi}
&\left|{\tilde\sigma}_{12}^{(\Lambda,N)}(\eta,T,\mathcal{N},\mathcal{M},\Omega,\phi_1,\phi_2)
-{\tilde\sigma}_{12}^{(\Lambda,N)}(\eta,T,\mathcal{N},\mathcal{M},\Omega,0,0)\right|\\
\nonumber &\le \left|{\tilde\sigma}_{12}^{(\Lambda,N)}(\eta,T,\mathcal{N},\mathcal{M},\Omega,\phi_1,\phi_2)
-{\tilde\sigma}_{12}^{(\Lambda,N)}(\eta,T,\mathcal{N},\mathcal{M},\Omega,0,\phi_2)\right|\\
\nonumber &+\left|{\tilde\sigma}_{12}^{(\Lambda,N)}(\eta,T,\mathcal{N},\mathcal{M},\Omega,0,\phi_2)
-{\tilde\sigma}_{12}^{(\Lambda,N)}(\eta,T,\mathcal{N},\mathcal{M},\Omega,0,0)\right|\\
\nonumber &\le\mathcal{C}(\eta,T)\times \mathcal{O}(\exp[-{\rm Const.}L]).  
\end{align}
This is the desired inequality.

\section{Proof of Theorem~\ref{thm:main}}
\label{sec:proofmain}

Now, we shall complete the proof of Theorem~\ref{thm:main}. 
The difference between the two conductances with the twisted phases, $\boldsymbol{\phi}=(\phi_1,\phi_2)$ and 
$(0,0)$, is evaluated as  
\begin{align}
\label{sigmaTphasedepen}
&\left|{\tilde\sigma}_{12}^{(\Lambda,N)}(\eta,T,\mathcal{N},\mathcal{M},\phi_1,\phi_2)
-{\tilde\sigma}_{12}^{(\Lambda,N)}(\eta,T,\mathcal{N},\mathcal{M},0,0)\right|\\
\nonumber &\le \left|{\tilde\sigma}_{12}^{(\Lambda,N)}(\eta,T,\mathcal{N},\mathcal{M},\phi_1,\phi_2)
-{\tilde\sigma}_{12}^{(\Lambda,N)}(\eta,T,\mathcal{N},\mathcal{M},\Omega,\phi_1,\phi_2)\right|\\
\nonumber &+\left|{\tilde\sigma}_{12}^{(\Lambda,N)}(\eta,T,\mathcal{N},\mathcal{M},\Omega,\phi_1,\phi_2)
-{\tilde\sigma}_{12}^{(\Lambda,N)}(\eta,T,\mathcal{N},\mathcal{M},\Omega,0,0)\right|\\
\nonumber &+\left|{\tilde\sigma}_{12}^{(\Lambda,N)}(\eta,T,\mathcal{N},\mathcal{M},\Omega,0,0)
-{\tilde\sigma}_{12}^{(\Lambda,N)}(\eta,T,\mathcal{N},\mathcal{M},0,0)\right|.
\end{align}
{From} (\ref{diffsigmadiffLambdaOmega}) and (\ref{diffsigmadiffphi}), the right-hand side is 
vanishing in the limit $\Lambda\nearrow\ze^2$. 
By using the formula (\ref{avHallformula}) for the averaged Hall conductance, we have  
\begin{multline*}
\left|{\tilde\sigma}_{12}^{(\Lambda,N)}(\eta,T,\mathcal{N},\mathcal{M},0,0)-\frac{1}{2\pi}\frac{p}{q}\right|\\
=\left|\frac{1}{(2\pi)^2}\int_{[0,2\pi)\times[0,2\pi)}d\phi_1d\phi_2\;
{\tilde\sigma}_{12}^{(\Lambda,N)}(\eta,T,\mathcal{N},\mathcal{M},0,0)\right.\\
-\left.\frac{1}{(2\pi)^2}\int_{[0,2\pi)\times[0,2\pi)}d\phi_1d\phi_2\; 
\hat{\sigma}_{12}^{(\Lambda,N)}(\boldsymbol{\phi})\right|\\
\le\frac{1}{(2\pi)^2}\int_{[0,2\pi)\times[0,2\pi)}d\phi_1d\phi_2\;
\left|{\tilde\sigma}_{12}^{(\Lambda,N)}(\eta,T,\mathcal{N},\mathcal{M},0,0)
-\hat{\sigma}_{12}^{(\Lambda,N)}(\boldsymbol{\phi})\right|.
\end{multline*}
Therefore, in order to 
prove Theorem~\ref{thm:main}, it is sufficient to show that the integrand in the right-hand side is 
vanishing in the multiple limit in (\ref{sigma12}). The integrand is estimated as 
\begin{align*}
&\left|{\tilde\sigma}_{12}^{(\Lambda,N)}(\eta,T,\mathcal{N},\mathcal{M},0,0)
-\hat{\sigma}_{12}^{(\Lambda,N)}(\boldsymbol{\phi})\right|\\
&\le \left|{\tilde\sigma}_{12}^{(\Lambda,N)}(\eta,T,\mathcal{N},\mathcal{M},0,0)
-{\tilde\sigma}_{12}^{(\Lambda,N)}(\eta,T,\mathcal{N},\mathcal{M},\phi_1,\phi_2)\right|\\
&+\left|{\tilde\sigma}_{12}^{(\Lambda,N)}(\eta,T,\mathcal{N},\mathcal{M},\phi_1,\phi_2)
-{\tilde\sigma}_{12}^{(\Lambda,N)}(0,\infty,\mathcal{N},\mathcal{M},\phi_1,\phi_2)\right|\\
&+\left|{\tilde\sigma}_{12}^{(\Lambda,N)}(0,\infty,\mathcal{N},\mathcal{M},\phi_1,\phi_2)
-\hat{\sigma}_{12}^{(\Lambda,N)}(\boldsymbol{\phi})\right|.
\end{align*}
As shown in (\ref{sigmaTphasedepen}) above in the present section, the first term in the right-hand side 
is vanishing in the infinite-volume limit $\Lambda\nearrow\ze^2$. 
Relying on the estimate (\ref{sigmaetaTdepend}), we can show that the second term is vanishing 
in the double limit $\eta\rightarrow 0$ and $T\rightarrow\infty$ after taking the infinite-volume limit.   
Finally, the result (\ref{sigmaNMdepend}) implies that the third term in the right-hand side is 
vanishing in the limit. 

Since the above Hall conductance ${\tilde\sigma}_{12}^{(\Lambda,N)}(\eta,T,\mathcal{N},\mathcal{M},0,0)$ 
with the vanishing phases is 
equal to the Hall conductance ${\tilde\sigma}_{12}^{(\Lambda,N)}(\eta,T,\mathcal{N},\mathcal{M})$ 
in the case of the periodic boundary condition 
in the right-hand side of (\ref{sigma12}), we have proved the fractional quantization (\ref{FQHC}) 
for the Hall conductance $\sigma_{12}$ in the limit.  

\appendix

\section{The Correction to the Hall Conductance}
\label{CorrHallcon}

In this appendix, we show that the correction 
\[
I_{\rm cor}:=\int_{-T}^0ds \; \eta s e^{\eta s}\omega_0^{(\Lambda,N)}
\bigl([\chi^{(\Lambda)}(\Gamma_\mathcal{M}(k,\ell);s),J_\mathcal{N}^{(1)}(k,\ell)]\bigr)
\]
in (\ref{corlinearres}) to the Hall conductance is vanishing in the limit $\eta\rightarrow 0$ 
after taking the limit $T\rightarrow \infty$. 

We denote the ground-state vectors of the unperturbed Hamiltonian\break\hfill 
$H_0^{(\Lambda,N)}$ by 
$\Phi_{0,m}^{(N)}$ with the energy eigenvalue $E_{0,m}^{(N)}$, $m=1,2,\ldots,q$, 
and the excited-state vectors by $\Phi_n^{(N)}$ with the energy eigenvalues $E_n^{(N)}$, $n=1,2,\ldots$. 
Using the definition (\ref{defchiGammas}) of $\chi^{(\Lambda)}(\Gamma_\mathcal{M}(k,\ell);s)$, 
one has 
\begin{multline*}
\omega_0^{(\Lambda,N)}
\bigl([\chi^{(\Lambda)}(\Gamma_\mathcal{M}(k,\ell);s),J_\mathcal{N}^{(1)}(k,\ell)]\bigr)\\
=\frac{1}{q}\sum_{m=1}^q\sum_{n\ge 1}\biggl[\langle\Phi_{0,m}^{(N)},\chi^{(\Lambda)}(\Gamma_\mathcal{M}(k,\ell))
\Phi_n^{(N)}\rangle\langle\Phi_n^{(N)},J_\mathcal{N}^{(1)}(k,\ell)\Phi_{0,m}^{(N)}\rangle
e^{i(E_{0,m}^{(N)}-E_n^{(N)})s}\\
-{\rm c.c.}\biggr].
\end{multline*}
Further, one has 
\begin{multline}
\label{IntsexpE}
\int_{-T}^0ds \; s e^{\eta s}e^{i(E_{0,m}^{(N)}-E_n^{(N)})s}\\
=\frac{iT}{E_n^{(N)}-E_{0,m}^{(N)}+i\eta}e^{-\eta T}e^{i(E_n^{(N)}-E_{0,m}^{(N)})T}
+\frac{1}{\left(E_n^{(N)}-E_{0,m}^{(N)}+i\eta\right)^2}\\
-\frac{1}{\left(E_n^{(N)}-E_{0,m}^{(N)}+i\eta\right)^2}e^{-\eta T}e^{i(E_n^{(N)}-E_{0,m}^{(N)})T}.
\end{multline}
{From} these observations, we can evaluate the integral as 
\[
\lim_{T\rightarrow\infty}\left|I_{\rm cor}\right|\le \frac{2\eta}{\Delta E^2}
\bigl\Vert\chi^{(\Lambda)}(\Gamma_\mathcal{M}(k,\ell))\bigr\Vert \> 
\bigl\Vert J_\mathcal{N}^{(1)}(k,\ell)\bigr\Vert, 
\] 
where we have used the Schwarz inequality and the assumption of the spectral gap, 
$E_n^{(N)}-E_{0,m}^{(N)}\ge \Delta E$. Thus, we can obtain the desired result 
in the limit $\eta\rightarrow 0$ for $\mathcal{M}$ and $\mathcal{N}$ which are finite.

\section{Proof of Theorem~\ref{thm:twistDegeneGap}}
\label{ProoftwistDegeneGap}

\subsection{Constructing the low-energy excitation}
\label{ConstLEE}

Let us consider a generic Hamiltonian $H^{(\Lambda)}$ with finite range hopping amplitudes and 
finite range interactions on the lattice $\Lambda$. 
We assume that the Hamiltonian $H^{(\Lambda)}$ commutes with the total number operator of 
the fermions. Let $N$ be the number of fermions, and fix the filling factor $\nu=N/|\Lambda|$. 
We denote by $H^{(\Lambda,N)}$ the restriction of $H^{(\Lambda)}$ onto the eigenspace of 
the total number operator with the eigenvalue $N$. 
Further we assume that 
there exists a uniform spectral gap $\Delta E$ above the lowest $q$ eigenenergies $E_{0,m}^{(N)}$, 
$m=1,2,\ldots,q$. We write $\Phi_{0,m}^{(N)}$ for the corresponding $q$ eigenvectors, and define 
\[
\delta E:=\max_{m,m'}\bigl\{\bigl|E_{0,m}^{(N)}-E_{0,m'}^{(N)}\bigr|\bigl\}.
\]
We do not assume that the $q$ energies $E_{0,m}^{(N)}$ are (quasi)degenerate. 
Therefore, we allow $\delta E>0$ uniformly in the size $|\Lambda|$ of the lattice. 
We write $E_1^{(N)}$ for the $q+1$-th eigenenergy of the Hamiltonian $H^{(\Lambda,N)}$ from 
the bottom of the spectrum. 
Let $P_0^{(\Lambda,N)}$ be the projection onto the sector which is spanned by  
the $q$ vectors $\Phi_{0,m}^{(N)}$, and define the expectation for the sector as   
\[
\omega_0^{(\Lambda,N)}(\cdots)
:=\frac{1}{q}\sum_{m=1}^q\langle \Phi_{0,m}^{(N)},\cdots\Phi_{0,m}^{(N)}\rangle.
\]

\begin{prop}
\label{prop:expexcit}
Suppose that, for any given small $\varepsilon>0$, there exist a local observable $a_0$ 
with a compact support, and a positive constant $c_0$ such that  
\begin{equation}
\label{excitationcond}
\langle\Phi_{0,i}^{(N)},a_0^\ast P_{\varepsilon/2} a_0\Phi_{0,i}^{(N)}\rangle >c_0>0
\end{equation}
for a vector $\Phi_{0,i}^{(N)}$ in the set of the $q$ vectors $\{\Phi_{0,m}^{(N)}\}$, 
where $P_{\varepsilon/2}$ is the projection onto 
the energy interval $[E_1^{(N)},E_1^{(N)}+\varepsilon/2]$. 
Here both the support size of the observable $a_0$ and 
the constant $c$ can be chosen to be independent of the size $|\Lambda|$ of the lattice. 
Then, there exist a local observable $a$ with a compact support and a positive constant $V_0$ such that 
\begin{equation}
\label{excitbound}
\Delta E\le \frac{\omega_0^{(\Lambda,N)}(a^\ast (1-P_0^{(\Lambda,N)})[H^{(\Lambda)},a])}
{\omega_0^{(\Lambda,N)}(a^\ast(1-P_0^{(\Lambda,N)})a)}\le \Delta E+2\delta E+ 2\varepsilon
\end{equation}
for any $\Lambda$ satisfying $|\Lambda|>{V}_0$. Here the size of the support of the observable $a$ 
is independent of the size $|\Lambda|$ of the lattice. 
\end{prop}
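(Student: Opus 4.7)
The plan is to obtain the lower bound directly from the spectral gap, to build the trial observable $a$ by time-filtering $a_0$ so that only low-lying excitations survive, and then to recover strict locality via a Lieb--Robinson cut-off. For the lower bound, I would expand $b\,\Phi_{0,m}^{(N)}=\sum_{n\ge 0}c_{n,m}\Phi_n^{(N)}$ in the eigenbasis of $H^{(\Lambda,N)}$, with $n=0,\dots,q-1$ labelling the ground-state sector. Since $H^{(\Lambda)}$ commutes with $P_0^{(\Lambda,N)}$, a direct computation using $(1-P_0^{(\Lambda,N)})[H^{(\Lambda)},b]\Phi_{0,m}^{(N)}=(H^{(\Lambda)}-E_{0,m}^{(N)})(1-P_0^{(\Lambda,N)})b\Phi_{0,m}^{(N)}$ gives
\[
\omega_0^{(\Lambda,N)}\bigl(b^\ast(1-P_0^{(\Lambda,N)})[H^{(\Lambda)},b]\bigr)
=\frac{1}{q}\sum_{m}\sum_{n\ge q}\bigl(E_n^{(N)}-E_{0,m}^{(N)}\bigr)\,|c_{n,m}|^2,
\]
while the denominator is the same sum without the energy-difference factor. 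Since each $E_n^{(N)}-E_{0,m}^{(N)}\ge\Delta E$ for $n\ge q$, the lower bound $\Delta E$ is automatic for every $b$ with nonvanishing denominator.

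For the upper bound, I introduce the frequency-filtered observable
\[
\tilde a:=\int_{\re} f(t)\,e^{itH^{(\Lambda)}}\,a_0\,e^{-itH^{(\Lambda)}}\,dt,
\]
with $f\in\mathcal{S}(\re)$ chosen so that its Fourier transform $\hat f$ is smooth, supported in $[\Delta E,\Delta E+\delta E+\varepsilon/2]$ (which lies above the range $[-\delta E,\delta E]$ of the ground-to-ground energy differences, provided $\delta E<\Delta E$), and satisfies $|\hat f|\ge 1/2$ on this support apart from a thin margin. The matrix element identity
\[
\langle\Phi_n^{(N)},\tilde a\,\Phi_{0,m}^{(N)}\rangle=\hat f\bigl(E_n^{(N)}-E_{0,m}^{(N)}\bigr)\,\langle\Phi_n^{(N)},a_0\,\Phi_{0,m}^{(N)}\rangle
\]
forces $P_0^{(\Lambda,N)}\tilde a\,P_0^{(\Lambda,N)}=0$, so $\tilde a\,\Phi_{0,m}^{(N)}$ lives entirely in the excited subspace. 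The variational ratio for $\tilde a$ is then a weighted average of excitation energies $E_n^{(N)}-E_{0,m}^{(N)}$ with weights $|\hat f(E_n^{(N)}-E_{0,m}^{(N)})\langle\Phi_n^{(N)},a_0\Phi_{0,m}^{(N)}\rangle|^2$, and is therefore bounded above by the supremum $\Delta E+\delta E+\varepsilon/2$ of the support of $\hat f$. The hypothesis (\ref{excitationcond}) supplies the matching lower bound for the denominator: restricting the $m$-sum to $m=i$ and the $n$-sum to those $n$ with $E_n^{(N)}\in[E_1^{(N)},E_1^{(N)}+\varepsilon/2]$, for which $E_n^{(N)}-E_{0,i}^{(N)}$ lies inside the support of $\hat f$, one gets a uniform denominator of order $c_0/(4q)$.

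The remaining step is to replace $\tilde a$ by a strictly local $a$ without altering the ratio by more than $\varepsilon$. Truncating the time integral to $|t|\le T$ costs only a tail term that the Schwartz decay of $f$ makes arbitrarily small uniformly in $\Lambda$. Within the window, the Lieb--Robinson bound (\ref{LiebRobinson}) allows one to approximate $e^{itH^{(\Lambda)}}a_0e^{-itH^{(\Lambda)}}$ by its analogue with $H^{(\Lambda)}$ replaced by its restriction to a ball of radius $R=v_{\mathrm{LR}}T+r_0$ centred on $\mathrm{supp}\,a_0$, with error exponentially small in $R-v_{\mathrm{LR}}T$. Choosing $T$ and $R$ depending only on $\varepsilon$, $c_0$, $\|a_0\|$, $\Delta E$ and the Lieb--Robinson data, and taking $V_0$ large enough that every $\Lambda$ with $|\Lambda|>V_0$ contains this ball, produces a strictly local $a$ whose numerator and denominator differ from those of $\tilde a$ by at most $\varepsilon\cdot c_0/(4q)$. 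The ratio for $a$ then lies in $[\Delta E,\Delta E+\delta E+\varepsilon/2+O(\varepsilon)]\subset[\Delta E,\Delta E+2\delta E+2\varepsilon]$.

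The main technical obstacle is the Lieb--Robinson localisation, because the norm approximation must be uniform in $\Lambda$ and strong enough to survive all subsequent manipulations: truncation in time, replacement of $H^{(\Lambda)}$ by a ball-restricted Hamiltonian inside the commutator $[H^{(\Lambda)},a]$, and the comparison of the gap projection $(1-P_0^{(\Lambda,N)})$ (which itself is accessed through the contour-integral representation and the spectral gap) against its ideal counterpart. The filter design itself is elementary by comparison, but it must be carried out consistently with the constraint $\delta E<\Delta E$; this condition is the only way the window $[\Delta E,\Delta E+\delta E+\varepsilon/2]$ can be separated from the ground-to-ground interval $[-\delta E,\delta E]$ and is met automatically once $|\Lambda|$ is large, given the quasidegeneracy assumed in Definition~\ref{definition}.
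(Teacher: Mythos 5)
Your lower-bound argument and the frequency-filter strategy are sound, and replacing the Gaussian time-filter by a smooth bump $\hat f$ of compact support in energy is a legitimate (and in some ways cleaner) variant of the Haegeman et al.\ filter \cite{HMNOSV} that the paper uses in Appendix~\ref{ConstLEE}. Because a compactly supported $\hat f$ annihilates high-energy matrix elements exactly before truncation, you bypass the paper's decomposition of $[E_1^{(N)},E_1^{(N)}+\varepsilon/2]$ into $M$ small intervals (\ref{Pvarepsilon/2mat}) and the attendant parameter chain $M\sim\varepsilon\sqrt K$, $\tilde\kappa$, which the paper needs precisely because its Gaussian is narrow (width $\sim K^{-1/2}$) and must therefore be centred on the subinterval where the hypothesis guarantees weight $\ge c_0/M$.

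There is, however, a gap in the filter design as you stated it. You support $\hat f$ on $[\Delta E,\Delta E+\delta E+\varepsilon/2]$ and ask that $|\hat f|\ge 1/2$ ``apart from a thin margin,'' and you then bound the denominator from below by restricting to excitation energies in $[E_1^{(N)}-E_{0,i}^{(N)},\,E_1^{(N)}+\varepsilon/2-E_{0,i}^{(N)}]$. That interval can have $\Delta E$ as its left endpoint (namely when $E_{0,i}^{(N)}=\max_m E_{0,m}^{(N)}$), which is exactly the point where a smooth bump supported on $[\Delta E,\cdot]$ vanishes; if the weight promised by (\ref{excitationcond}) concentrates near $E_1^{(N)}$, your filter kills it and the claimed lower bound $c_0/(4q)$ fails. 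This is precisely what the paper's subinterval selection is for: having found a piece $[E_1^{(N)}+(\kappa-1)\varepsilon/(2M),E_1^{(N)}+\kappa\varepsilon/(2M)]$ carrying weight $\ge c_0/M$, it centres the narrow Gaussian there, so the filter value is $\ge e^{-1}$ on that piece wherever the weight actually sits inside $[E_1^{(N)},E_1^{(N)}+\varepsilon/2]$. Your version is easily repaired by widening the support of $\hat f$ to the left, say to $[\Delta E-\varepsilon/4,\,\Delta E+\delta E+3\varepsilon/4]$, while requiring $|\hat f|\ge 1/2$ on all of $[\Delta E,\Delta E+\delta E+\varepsilon/2]$; this is admissible once $\delta E<\Delta E-\varepsilon/4$, which the quasidegeneracy assumption ensures for $|\Lambda|$ large, and the resulting upper bound $\Delta E+\delta E+3\varepsilon/4$ still lies within $\Delta E+2\delta E+2\varepsilon$. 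A secondary remark: a Schwartz $f$ with compactly supported $\hat f$ decays faster than any polynomial but, by Paley--Wiener, not exponentially in $T$; that still makes the time-truncation error vanish and so suffices for this Proposition, but the paper's Gaussian is what yields the genuinely exponential bounds (\ref{diffa-tildea0}), (\ref{omega0aastPhigha}), (\ref{omega0aastPlowa}) under the coupled scaling (\ref{T1muR}), and those exponential rates are reused later in the Appendix.
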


\begin{rem}
If the condition (\ref{excitationcond}) does not hold, then there must exist  
another infinite-volume ground state which cannot be derived from the $q$ vectors. 
In other words, all of the low energy excited states are derived from a local perturbation 
for the ground state.  
\end{rem}

In the rest of this subsection, we will prove the upper bound of (\ref{excitbound}) 
in Proposition~\ref{prop:expexcit} because the lower bound can be easily obtained from the definitions. 

We decompose the energy interval in Proposition~\ref{prop:expexcit} into $M$ small intervals as 
\[
[E_1^{(N)},E_1^{(N)}+\varepsilon/2]
=\bigcup_{n=1}^M\> [E_1^{(N)}+(n-1)\varepsilon/(2M),E_1^{(N)}+n\varepsilon/(2M)], 
\]
where $M$ is a large positive integer. Then, there exists 
a small interval 
\[
[E_1^{(N)}+(\kappa-1)\varepsilon/(2M),E_1^{(N)}+\kappa\varepsilon/(2M)]
\]  
such that
\begin{equation}
\label{Pvarepsilon/2mat}
\langle\Phi_{0,i}^{(N)},a_0^\ast P_{\varepsilon/2}^\kappa a_0\Phi_{0,i}^{(N)}\rangle \ge\frac{c_0}{M},
\end{equation}
where $\kappa$ is an integer in $\{1,2,\ldots,M\}$, 
and $P_{\varepsilon/2}^\kappa$ is the spectral projection onto the small interval. 

Let $K$ be a large positive number satisfying $\varepsilon\sqrt{K}/4\le M<\varepsilon\sqrt{K}/4-1$, 
and write  
\[
\Delta\tilde{E}:=E_1^{(N)}-E_{0,i}^{(N)},
\]
and
\[
\tilde{\kappa}:=(\kappa-{1}/{2})\frac{\varepsilon}{2M}. 
\]
We use the idea of an energy filter \cite{HMNOSV}. We define three operators as follows: 
\[
\tilde{a}_0:=\frac{1}{\sqrt{2\pi K}}\int_{-\infty}^{+\infty}dt\; e^{itH^{(\Lambda)}}a_0
e^{-itH^{(\Lambda)}}e^{-i(\Delta\tilde{E}+\tilde{\kappa})t}e^{-t^2/(2K)},
\]
\[
\tilde{a}_0(T_1):=\frac{1}{\sqrt{2\pi K}}\int_{-T_1}^{+T_1}dt\; e^{itH^{(\Lambda)}}a_0
e^{-itH^{(\Lambda)}}e^{-i(\Delta\tilde{E}+\tilde{\kappa})t}e^{-t^2/(2K)},
\]
and
\[
a(T_1):=\frac{1}{\sqrt{2\pi K}}\int_{-T_1}^{+T_1}dt\; e^{itH^{(\Omega)}}a_0
e^{-itH^{(\Omega)}}e^{-i(\Delta\tilde{E}+\tilde{\kappa})t}e^{-t^2/(2K)},
\]
where the cutoff $T_1$ is a large positive number, and $H^{(\Omega)}$ is the restriction 
of the Hamiltonian $H^{(\Lambda)}$ to the region $\Omega\subset\Lambda$. 
We choose $\Omega$ so that ${\rm supp}\; a_0\subset \Omega$. 
Clearly, the operator $a(T_1)$ is local and has a compact support in $\Omega$. 
We write $a=a(T_1)$ for short, and we will prove that this observable $a$ satisfies 
the upper bound of (\ref{excitbound}) in Proposition~\ref{prop:expexcit} for appropriately 
choosing the parameters. 

We decompose the projection $(1-P_0^{(\Lambda,N)})$ into two parts as 
\[
1-P_0^{(\Lambda,N)}=P_{\rm low}+P_{\rm high},
\]
where $P_{\rm low}$ is the spectral projection onto the interval 
$[E_1^{(N)},E_1^{(N)}+\varepsilon+\delta E]$, and $P_{\rm high}$ the spectral projection 
onto $(E_1^{(N)}+\varepsilon+\delta E,+\infty)$. Then, we have 
\begin{multline}
\label{decomp1-p0}
\omega_0^{(\Lambda,N)}(a^\ast(1-P_0^{(\Lambda,N)})[H^{(\Lambda)},a])\\
=\omega_0^{(\Lambda,N)}(a^\ast P_{\rm low}[H^{(\Lambda)},a])
+\omega_0^{(\Lambda,N)}(a^\ast P_{\rm high}[H^{(\Lambda)},a]). 
\end{multline}
Since ${\rm supp}\> a\subset \Omega$, we can find a region $\tilde{\Omega}\supset \Omega$ 
so that $[H^{(\Lambda)},a]=[H^{(\tilde{\Omega})},a]$. 
Using this and the Schwarz inequality, the second term in the right-hand side of (\ref{decomp1-p0}) 
is evaluated as 
\begin{align}
\label{omega0a*P>[H,a]}
\omega_0^{(\Lambda,N)}(a^\ast P_{\rm high}[H^{(\Lambda)},a])&=
\bigl|\omega_0^{(\Lambda,N)}(a^\ast P_{\rm high}[H^{(\tilde{\Omega})},a])\bigr|\\ \nonumber
&\le \sqrt{\omega_0^{(\Lambda,N)}(a^\ast P_{\rm high}a)
\omega_0^{(\Lambda,N)}([H^{(\tilde{\Omega})},a]^\ast[H^{(\tilde{\Omega})},a])}\\ \nonumber
&\le 2\Vert H^{(\tilde{\Omega})}\Vert \>\Vert a\Vert\sqrt{\omega_0^{(\Lambda,N)}(a^\ast P_{\rm high}a)}. 
\end{align}
{From} the definition of $a=a(T_1)$, we have 
\[
\Vert a\Vert\le 
\frac{1}{\sqrt{2\pi K}}\int_{-T_1}^{+T_1}dt\; \Vert a_0\Vert e^{-t^2/(2K)}\le\Vert a_0\Vert. 
\]
Substituting this into (\ref{omega0a*P>[H,a]}), one has 
\begin{equation}
\label{omega0aastPhighHa}
\omega_0^{(\Lambda,N)}(a^\ast P_{\rm high}[H^{(\Lambda)},a])\le
2\Vert H^{(\tilde{\Omega})}\Vert \>\Vert a_0\Vert\sqrt{\omega_0^{(\Lambda,N)}(a^\ast P_{\rm high}a)}. 
\end{equation}

The first term in the right-hand side of (\ref{decomp1-p0}) is written as 
\begin{align*}
&\omega_0^{(\Lambda,N)}(a^\ast P_{\rm low}[H^{(\Lambda)},a])\\
&=
\frac{1}{q}\sum_{m=1}^q \langle\Phi_{0,m}^{(N)},a^\ast P_{\rm low} [H^{(\Lambda)},a]\Phi_{0,m}^{(N)}\rangle\\
&=\frac{1}{q}\sum_{m=1}^q\sum_{E_n^{(N)}\in[E_1^{(N)},E_1^{(N)}+\varepsilon+\delta E]} 
\langle\Phi_{0,m}^{(N)},a^\ast \Phi_n^{(N)}\rangle \langle \Phi_n^{(N)},a \Phi_{0,m}^{(N)}\rangle
(E_n^{(N)}-E_{0,m}^{(N)}),
\end{align*}
where we have written $\Phi_n^{(N)}$ for the excited-state vectors with the eigenenergies $E_n^{(N)}$. 
Therefore, we obtain 
\begin{equation}
\label{omega0aastPlowHa}
\omega_0^{(\Lambda,N)}(a^\ast P_{\rm low}[H^{(\Lambda)},a])
\le (\Delta E+2\delta E+\varepsilon)\omega_0^{(\Lambda,N)}(a^\ast P_{\rm low} a). 
\end{equation}

Clearly,  the denominator in the expression (\ref{excitbound}) of the excitation energy can be decomposed 
into two parts as 
\[
\omega_0^{(\Lambda,N)}(a^\ast(1-P_0^{(\Lambda,N)})a)
=\omega_0^{(\Lambda,N)}(a^\ast P_{\rm low} a)+\omega_0^{(\Lambda,N)}(a^\ast P_{\rm high}a). 
\]
{From} these observations, it is sufficient to evaluate the first and the second terms 
in this right-hand side.  

First, we will show that the operator $a$ is approximated by the operator $\tilde{a}_0$. 
Since one has 
\[
a-\tilde{a}_0=a-\tilde{a}_0(T_1)+\tilde{a}_0(T_1)-\tilde{a}_0, 
\]
we will show this right-hand side becomes small. Note that 
\begin{align*}
\tilde{a}_0-\tilde{a}_0(T_1)&=\frac{1}{\sqrt{2\pi K}}\int_{T_1}^\infty dt\; e^{itH^{(\Lambda)}}
a_0e^{-itH^{(\Lambda)}}e^{-i(\Delta\tilde{E}+\tilde{\kappa})t}e^{-t^2/(2K)}\\ 
&+\frac{1}{\sqrt{2\pi K}}\int_{-\infty}^{-T_1} dt\; e^{itH^{(\Lambda)}}
a_0e^{-itH^{(\Lambda)}}e^{-i(\Delta\tilde{E}+\tilde{\kappa})t}e^{-t^2/(2K)}.
\end{align*}
Therefore, we have 
\begin{equation}
\label{difftildea0tildea0T1}
\Vert\tilde{a}_0-\tilde{a}_0(T_1)\Vert\le \frac{2\Vert a_0\Vert}{\sqrt{2\pi K}}
\int_{T_1}^\infty e^{-t^2/(2K)}\le \Vert a_0\Vert e^{-T_1^2/(2K)}.
\end{equation}

We write 
\[
a_0^{(\Lambda)}(t):=e^{itH^{(\Lambda)}}a_0e^{-itH^{(\Lambda)}}
\]
and 
\[
a_0^{(\Omega)}(t):=e^{itH^{(\Omega)}}a_0e^{-itH^{(\Omega)}}.
\]
We have 
\[
\tilde{a}_0(T_1)-a
=\frac{1}{\sqrt{2\pi K}}\int_{-T_1}^{+T_1} dt\; 
[a_0^{(\Lambda)}(t)-a_0^{(\Omega)}(t)]
e^{-i(\Delta\tilde{E}+\tilde{\kappa})t}e^{-t^2/(2K)}. 
\]
The norm is estimated as 
\begin{equation}
\label{tildea0T1-a}
\Vert\tilde{a}_0(T_1)-a\Vert \le\frac{1}{\sqrt{2\pi K}}\int_{-T_1}^{+T_1} dt\;  
e^{-t^2/(2K)}\bigl\Vert a_0^{(\Lambda)}(t)-a_0^{(\Omega)}(t)\bigr\Vert.
\end{equation}
The norm in the integrand is estimated as \cite{NOS} 
\begin{equation}
\label{a0tdiff}
\bigl\Vert a_0^{(\Lambda)}(t)-a_0^{(\Omega)}(t)\bigr\Vert
\le {\rm sgn}\>t\int_0^t ds\; \bigl\Vert [(H^{(\Lambda)}-H^{(\Omega)}),a_0^{(\Omega)}(t-s)]\bigl\Vert,
\end{equation}
where 
\[
{\rm sgn}\>t:=\begin{cases} \ \ 1, & \text{$t>0$};\\
\ \ 0, & \text{$t=0$};\\
-1, & \text{$t<0$}.
\end{cases}
\]
The proof of the above inequality (\ref{a0tdiff}) is given in Appendix~\ref{Proofa0tdiffinequal}. 

In terms of local operators $h_X$ with the support $X$, 
the difference between the two Hamiltonians are written as  
\[
H^{(\Lambda)}-H^{(\Omega)}
=\sum_{\substack{X\\ X\cap(\Lambda\backslash\Omega)\ne \emptyset}} h_X
\]
Therefore, we have
\begin{equation} 
\bigl\Vert [(H^{(\Lambda)}-H^{(\Omega)}),a_0^{(\Omega)}(t-s)]\bigl\Vert
\le\sum_{\substack{X\\ X\cap(\Lambda\backslash\Omega)\cap\Omega\ne \emptyset}}
\bigl\Vert [h_X,a_0^{(\Omega)}(t-s)]\bigl\Vert.
\label{diffHamcomma0}
\end{equation}
In order to estimate the summand in the right-hand side, 
we recall the Lieb-Robinson bounds \cite{LiebRobinson,Hastings,HastingsKoma,NachtergaeleSims}: 
Let $A_Y, B_Z$ be a pair of observables with the compact supports, $Y,Z$, respectively. 
Then, the following bound is valid \cite{HastingsKoma}: 
\begin{equation}
\label{LiebRobinson}
\left\Vert[A_Y(t),B_Z]\right\Vert\le {\rm C}\Vert A_Y\Vert \Vert B_Z\Vert |Y||Z|
\exp[-\mu\; {\rm dist}(Y,Z)][e^{v|t|}-1],
\end{equation}
where $A_Y(t)=e^{itH^{(\Lambda)}}A_Y e^{-itH^{(\Lambda)}}$ for 
the Hamiltonian $H^{(\Lambda)}$ with finite-range interactions, and  
the positive constants, $C$, $v$ and $\mu$, depend only on the interactions of the Hamiltonian and 
the metric of the lattice. 
Using the Lieb-Robinson bounds, the right-hand side of (\ref{diffHamcomma0}) can be evaluated as 
\begin{align*}
&\bigl\Vert [(H^{(\Lambda)}-H^{(\Omega)}),a_0^{(\Omega)}(t-s)]\bigl\Vert\\
&\le{\rm Const.}\sum_{\substack{X\\ X\cap(\Lambda\backslash\Omega)\cap\Omega\ne \emptyset}}
|{\rm supp}\> a_0|\>\Vert a_0\Vert (e^{v|t-s|}-1)e^{-\mu r}\\
&\le {\rm Const.} |{\rm supp}\> a_0|\>\Vert a_0\Vert (e^{v|t-s|}-1)Re^{-\mu R},
\end{align*}
where $r={\rm dist}(X, {\rm supp}\> a_0)$, 
$R={\rm dist}({\rm supp}\>(H^{(\Lambda)}-H^{(\Omega)}), {\rm supp}\> a_0)$, 
and the positive constants, $\mu$ and $v$, depend only on the parameters of the present model. 
Combining this, (\ref{tildea0T1-a}) and (\ref{a0tdiff}), we have 
\[
\Vert\tilde{a}_0(T_1)-a\Vert
\le 
{\rm Const.} |{\rm supp}\> a_0|\>\Vert a_0\Vert R\exp[vT_1-\mu R]. 
\]
Combining this with the inequality (\ref{difftildea0tildea0T1}), we obtain 
\begin{align}
\label{diffa-tildea0}
\Vert a-\tilde{a}_0\Vert&\le \Vert a-\tilde{a}_0(T_1)\Vert
+\Vert\tilde{a}_0(T_1)-\tilde{a}_0\Vert\\ \nonumber
&\le{\rm Const.} |{\rm supp}\> a_0|\>\Vert a_0\Vert R\exp[vT_1-\mu R]+\Vert a_0\Vert e^{-T_1^2/(2K)}.
\end{align}
Thus, one can make the difference between $a$ and $\tilde{a}_0$ small 
by appropriately choosing the parameters, $T_1$, $R$ and $K$. 

Next, we prepare some estimates about the operator $\tilde{a}_0$. 
We denote by $\Phi_n^{(N)}$ the excited-state vector with the energy eigenvalue $E_n^{(N)}\ge E_1^{(N)}$ 
for $n\ge 1$. The matrix elements of $\tilde{a}_0$ between the excited states 
and the lowest energy $q$ states are computed as 
\begin{align*}
\langle \Phi_n^{(N)},\tilde{a}_0\Phi_{0,m}^{(N)}\rangle&=\frac{1}{\sqrt{2\pi K}}
\int_{-\infty}^{+\infty}dt\>\langle \Phi_n^{(N)},a_0\Phi_{0,m}^{(N)}\rangle 
e^{i\mathcal{E}_{n,m}^\kappa}e^{-t^2/(2K)}\\
&=\langle \Phi_n^{(N)},a_0\Phi_{0,m}^{(N)}\rangle\exp\bigl[-K(\mathcal{E}_{n,m}^\kappa)^2/2\bigr],
\end{align*}
where we have written 
\[
\mathcal{E}_{n,m}^\kappa:=E_n^{(N)}-E_{0,m}^{(N)}-\Delta\tilde{E}-\tilde{\kappa}.
\]
Using this relations, one has 
\begin{align*}
&\hspace{-1.3cm}\langle\Phi_{0,m}^{(N)},\tilde{a}_0^\ast P_{\rm high}\tilde{a}_0\Phi_{0,m}^{(N)}\rangle\\
&=\sum_{E_n^{(N)}>E_1^{(N)}+\varepsilon+\delta E}
\langle\Phi_{0,m}^{(N)},a_0^\ast \Phi_n^{(N)}\rangle \langle\Phi_n^{(N)},a_0\Phi_{0,m}^{(N)}\rangle
e^{-K(\mathcal{E}_{n,m}^\kappa)^2}\\
&\le \sum_{E_n^{(N)}>E_1^{(N)}+\varepsilon+\delta E}
\langle\Phi_{0,m}^{(N)},a_0^\ast \Phi_n^{(N)}\rangle \langle\Phi_n^{(N)},a_0\Phi_{0,m}^{(N)}\rangle
e^{-K\varepsilon^2/4}\\
&=\langle\Phi_{0,m}^{(N)},a_0^\ast P_{\rm high}a_0\Phi_{0,m}^{(N)}\rangle
e^{-K\varepsilon^2/4}.
\end{align*}
Immediately, 
\begin{equation}
\label{omega0a0Phigha0}
\omega_0^{(\Lambda,N)}(\tilde{a}_0^\ast P_{\rm high}\tilde{a}_0)
\le \omega_0^{(\Lambda,N)}(a_0^\ast P_{\rm high}a_0)e^{-K\varepsilon^2/4}
\le \Vert a_0\Vert^2 e^{-K\varepsilon^2/4}.
\end{equation}
Similarly, we have 
\begin{align*}
&\hspace{-0.5cm}q\times \omega_0^{(\Lambda,N)}(\tilde{a}_0^\ast P_{\rm low}\tilde{a}_0)\\
&=\sum_{m=1}^q \langle\Phi_{0,m}^{(N)},\tilde{a}_0^\ast P_{\rm low}\tilde{a}_0\Phi_{0,m}^{(N)}\rangle\\
&=\sum_{m=1}^q\sum_{E_n^{(N)}\in[E_1^{(N)},E_1^{(N)}+\varepsilon+\delta E]}
\langle\Phi_{0,m}^{(N)},\tilde{a}_0^\ast \Phi_n^{(N)}\rangle 
\langle \Phi_n^{(N)},\tilde{a}_0\Phi_{0,m}^{(N)}\rangle\\
&=\sum_{m=1}^q\sum_{E_n^{(N)}\in[E_1^{(N)},E_1^{(N)}+\varepsilon+\delta E]}
\langle\Phi_{0,m}^{(N)},a_0^\ast \Phi_n^{(N)}\rangle \langle\Phi_n^{(N)},a_0\Phi_{0,m}^{(N)}\rangle
e^{-K(\mathcal{E}_{n,m}^\kappa)^2}\\
&\ge \sum_{E_n^{(N)}\in[E_1^{(N)},E_1^{(N)}+\varepsilon+\delta E]}
\langle\Phi_{0,i}^{(N)},a_0^\ast \Phi_n^{(N)}\rangle \langle\Phi_n^{(N)},a_0\Phi_{0,i}^{(N)}\rangle
e^{-K(\mathcal{E}_{n,i}^\kappa)^2}.
\end{align*}
For  
\[
E_n^{(N)}\in[E_1^{(N)}+(\kappa-1)\varepsilon/(2M),E_1^{(N)}+\kappa\varepsilon/(2M)], 
\]
we have 
\[ 
-\frac{\varepsilon}{4M}\le\mathcal{E}_{n,i}^\kappa\le\frac{\varepsilon}{4M}.
\]
Combining these observations, the inequality (\ref{Pvarepsilon/2mat}) 
and the assumption $M\ge \varepsilon\sqrt{K}/4$, we obtain
\begin{equation}
\label{omega0tildea0Plowtildea0}
\omega_0^{(\Lambda,N)}(\tilde{a}_0^\ast P_{\rm low}\tilde{a}_0)\ge 
\langle\Phi_{0,i}^{(N)},a_0^\ast P_{\varepsilon/2}^\kappa a_0\Phi_{0,i}^{(N)}\rangle \frac{e^{-1}}{q}
\ge \frac{e^{-1}c_0}{qM}.
\end{equation}

We choose the parameters, $T_1$, $K$ and $R$ so that they satisfy 
\begin{equation}
T_1=\frac{1}{\sqrt{2}}K\varepsilon, \quad \text{and}\quad 
\mu R=\frac{1}{4}(2\sqrt{2}v+\varepsilon)K\varepsilon.
\label{T1muR}
\end{equation}
Then, the inequalities, (\ref{diffa-tildea0}) and (\ref{omega0a0Phigha0}), are, respectively, written 
\[
\Vert a-\tilde{a}_0\Vert\le{\rm Const.} |{\rm supp}\> a_0|\>\Vert a_0\Vert Re^{-\tilde{\mu}\varepsilon R}
\]
and
\[
\omega_0^{(\Lambda,N)}(\tilde{a}_0^\ast P_{\rm high}\tilde{a}_0)
\le \Vert a_0\Vert^2 e^{-\tilde{\mu}\varepsilon R}, 
\]
where we have written 
\[
\tilde{\mu}:=\frac{\mu}{2\sqrt{2}v+\varepsilon}.
\]
Using these inequalities, we have 
\begin{align}
\label{omega0aastPhigha}
\omega_0^{(\Lambda,N)}(a^\ast P_{\rm high}a)&= 
\omega_0^{(\Lambda,N)}((a-\tilde{a}_0)^\ast P_{\rm high}a)
+\omega_0^{(\Lambda,N)}(\tilde{a}_0^\ast P_{\rm high}(a-\tilde{a}_0))\\ \nonumber
&+\omega_0^{(\Lambda,N)}(\tilde{a}_0^\ast P_{\rm high}\tilde{a}_0)\\ \nonumber
&\le 2\Vert a_0\Vert \Vert a-\tilde{a}_0\Vert 
+\omega_0^{(\Lambda,N)}(\tilde{a}_0^\ast P_{\rm high}\tilde{a}_0)\\ \nonumber
&\le{\rm Const.} |{\rm supp}\> a_0|\>\Vert a_0\Vert^2 Re^{-\tilde{\mu}\varepsilon R}. 
\end{align}
Similarly, 
\begin{align}
\label{omega0aastPlowa}
&|\omega_0^{(\Lambda,N)}(a^\ast P_{\rm low}a)-\omega_0^{(\Lambda,N)}(\tilde{a}_0 P_{\rm low}\tilde{a}_0)|\\ 
\nonumber
&\le|\omega_0^{(\Lambda,N)}((a-\tilde{a}_0)^\ast P_{\rm low}a)|
+|\omega_0^{(\Lambda,N)}(\tilde{a}_0 P_{\rm low}(a-\tilde{a}_0))|\\ \nonumber
&\le 2\Vert a_0\Vert \Vert a-\tilde{a}_0\Vert\\ \nonumber
&\le {\rm Const.} |{\rm supp}\> a_0|\>\Vert a_0\Vert^2 Re^{-\tilde{\mu}\varepsilon R}. 
\end{align}

Now let us estimate the excitation energy. Using the inequalities, (\ref{omega0aastPhighHa}) and 
(\ref{omega0aastPlowHa}), we obtain 
\begin{align*}
&\frac{\omega_0^{(\Lambda,N)}(a^\ast (1-P_0^{(\Lambda,N)})[H^{(\Lambda)},a])}
{\omega_0^{(\Lambda,N)}(a^\ast(1-P_0^{(\Lambda,N)})a)}\\
&=\frac{\omega_0^{(\Lambda,N)}(a^\ast P_{\rm low}[H^{(\Lambda)},a])
+\omega_0^{(\Lambda,N)}(a^\ast P_{\rm high}[H^{(\Lambda)},a])}
{\omega_0^{(\Lambda,N)}(a^\ast P_{\rm low}a)
+\omega_0^{(\Lambda,N)}(a^\ast P_{\rm high}a)}\\
&\le \frac{(\Delta E+2\delta E+\varepsilon)\omega_0^{(\Lambda,N)}(a^\ast P_{\rm low} a)
+2\Vert H^{(\tilde{\Omega})}\Vert \>\Vert a_0\Vert\sqrt{\omega_0^{(\Lambda,N)}(a^\ast P_{\rm high}a)}}
{\omega_0^{(\Lambda,N)}(a^\ast P_{\rm low}a)
+\omega_0^{(\Lambda,N)}(a^\ast P_{\rm high}a)}.
\end{align*}
We can choose the region $\tilde{\Omega}$ so that 
\[ 
\Vert H^{(\tilde{\Omega})}\Vert\le {\rm Const.}R^2 
\]
for a large $R$. 
Combining these observations with the inequalities (\ref{omega0aastPhigha}) and (\ref{omega0aastPlowa}), 
we obtain
\begin{multline}
\frac{\omega_0^{(\Lambda,N)}(a^\ast (1-P_0^{(\Lambda,N)})[H^{(\Lambda)},a])}
{\omega_0^{(\Lambda,N)}(a^\ast(1-P_0^{(\Lambda,N)})a)}\\
\le 
\frac{(\Delta E+2\delta E+\varepsilon)\omega_0^{(\Lambda,N)}(\tilde{a}_0^\ast P_{\rm low}\tilde{a}_0)
+\mathcal{O}(R^{5/2}e^{-\tilde{\mu}\varepsilon R/2})}
{\omega_0^{(\Lambda,N)}(\tilde{a}_0^\ast P_{\rm low}\tilde{a}_0)
+\mathcal{O}(Re^{-\tilde{\mu}\varepsilon R})}
\label{excitomega0a}
\end{multline}
for a large $R$. In order to estimate the right-hand side, we write 
the inequality (\ref{omega0tildea0Plowtildea0}) as  
\[
\omega_0^{(\Lambda,N)}(\tilde{a}_0^\ast P_{\rm low}\tilde{a}_0)M\ge \frac{e^{-1}c_0}{q}>0.
\]
Combining this bound, the assumption $M<\varepsilon\sqrt{K}/4-1$ and (\ref{T1muR}), 
the right-hand side of (\ref{excitomega0a}) with a large $R$ yields the desired upper bound 
(\ref{excitbound}) for the excitation energy.  

\subsection{Twisted phase dependence of the excitation energy}
\label{subsecTPDEE}

We denote by $H_0^{(\Lambda)}(\phi)$ the unperturbed Hamiltonian of the present system 
with the twisted boundary condition in the first direction with the angle $\phi$. 
We write $H_0^{(\Lambda,N)}(\phi)$ for the Hamiltonian restricted onto the sector of $N$ fermions. 
We denote by $P_0^{(\Lambda,N)}(\phi)$ the spectral projection onto the sector which are spanned by 
the $q$ vectors with the lowest $q$ eigenenergies. 
We denote by $\Phi_{0,m}^{(N)}(\phi)$ the low-energy vector with the eigenenergy $E_{0,m}^{(N)}(\phi)$, 
$m=1,2,\ldots,q$. The expectation value in the low-energy sector is given by  
\[
\omega_0^{(\Lambda,N)}(\cdots;\phi):=\frac{1}{q}{\rm Tr}\>(\cdots)P_0^{(\Lambda,N)}(\phi).
\]
Although we have assumed that the ground state for $\phi=0$ is $q$-fold degenerate, we 
do not necessarily assume that the low-energy sector for $\phi\ne 0$ is degenerate. 
 
Let $a$ be a local observable with a compact support. The excitation energy 
due to the local perturbation $a$ is given by 
\[
\frac{\omega_0^{(\Lambda,N)}\bigl(a^\ast(1-P_0^{(\Lambda,N)}(\phi))[H_0^{(\Lambda)}(\phi),a];\phi\bigr)}
{\omega_0^{(\Lambda,N)}\bigl(a^\ast(1-P_0^{(\Lambda,N)}(\phi))a;\phi\bigr)}.
\]
We want to show that this excitation energy is almost independent of the phase $\phi$ 
for a large lattice $\Lambda$ 
under the assumption of the nonvanishing spectral gap above the low-energy sector.  

The numerator of the excitation energy is written 
\begin{multline*}
\omega_0^{(\Lambda,N)}\bigl(a^\ast(1-P_0^{(\Lambda,N)}(\phi))[H_0^{(\Lambda)}(\phi),a];\phi\bigr)\\
=\omega_0^{(\Lambda,N)}\bigl(a^\ast[H_0^{(\Lambda)}(\phi),a];\phi\bigr)
-\omega_0^{(\Lambda,N)}\bigl(a^\ast P_0^{(\Lambda,N)}(\phi)[H_0^{(\Lambda)}(\phi),a];\phi\bigr)
\end{multline*}
The second term in the right-hand side is written 
\begin{align*}
&\omega_0^{(\Lambda,N)}\bigl(a^\ast P_0^{(\Lambda,N)}(\phi)[H_0^{(\Lambda)}(\phi),a];\phi\bigr)\\
&=\frac{1}{q}\sum_{m=1}^q\sum_{m'=1}^q\langle \Phi_{0,m}^{(N)}(\phi),a^\ast\Phi_{0,m'}^{(N)}(\phi)\rangle
\langle\Phi_{0,m'}^{(N)}(\phi),[H_0^{(\Lambda)},a]\Phi_{0,m}^{(N)}(\phi)\rangle\\
&=\frac{1}{q}\sum_{m,m'}
\langle \Phi_{0,m}^{(N)}(\phi),a^\ast\Phi_{0,m'}^{(N)}(\phi)\rangle
\langle \Phi_{0,m'}^{(N)}(\phi), a\Phi_{0,m}^{(N)}(\phi)\rangle
(E_{0,m'}^{(N)}(\phi)-E_{0,m}^{(N)}(\phi))
\end{align*}
in terms of the low-energy vectors $\Phi_{0,m}^{(N)}(\phi)$. Therefore, we have 
\begin{equation}
\label{omegaH0commudeltaEbound}
\bigl|\omega_0^{(\Lambda,N)}(a^\ast
P_0^{(\Lambda,N)}(\phi)[H_0^{(\Lambda)}(\phi),a];\phi)\bigr|\le 
\delta E(\phi)\>\omega_0^{(\Lambda,N)}\bigl(a^\ast P_0^{(\Lambda,N)}(\phi)a;\phi\bigr),
\end{equation}
where 
\[
\delta E(\phi):=\max_{m,m'}\{|E_{0,m}^{(N)}(\phi)-E_{0,m'}^{(N)}(\phi)|\}. 
\] 

Consider the excited states with the energies which are larger than the maximum value, 
$\max_{1\le i\le q}\{E_{0,i}^{(N)}(\phi)\}$, of the lowest $q$ eigenenergies. 
We denote by $\Phi_n^{(N)}(\phi)$ the excitation vector with the eigenenergy $E_n^{(N)}(\phi)$ 
for $n\ge 1$. We take the subscript $n$ of the eigenenergies $\{E_n^{(N)}(\phi)\}$ 
to satisfy $E_1^{(N)}(\phi)\le E_2^{(N)}(\phi)\le \ldots$.  
We assume that there exists $\phi_0\in[0,2\pi]$ such that  
\begin{equation}
\label{assumpGapphi0}
\min_{\phi\in[0,\phi_0]}\{E_1^{(N)}(\phi)\}
-\max_{\phi\in[0,\phi_0]}\max\{E_{0,1}^{(N)}(\phi),\ldots,E_{0,q}^{(N)}(\phi)\}
\ge \Delta E^{\rm min}>0,
\end{equation}
where $\Delta E^{\rm min}$ is independent of the size $|\Lambda|$ of the lattice. 
That is to say, there is a uniform lower bound for the spectral gap above the low-energy sector. 

The denominator of the excitation energy is written 
\[
\omega_0^{(\Lambda,N)}\bigl(a^\ast(1-P_0^{(\Lambda,N)}(\phi))a;\phi\bigr)
=\omega_0^{(\Lambda,N)}\bigl(a^\ast a;\phi\bigr)
-\omega_0^{(\Lambda,N)}\bigl(a^\ast P_0^{(\Lambda,N)}(\phi)a;\phi\bigr).
\]
The second term in the right-hand side is written 
\begin{equation}
\label{omega0aastP0phia}
\omega_0^{(\Lambda,N)}\bigl(a^\ast P_0^{(\Lambda,N)}(\phi)a;\phi\bigr)
=\frac{1}{q}{\rm Tr}\> a^\ast P_0^{(\Lambda,N)}(\phi)a P_0^{(\Lambda,N)}(\phi).
\end{equation}
Under the above assumption on the spectral gap, the projection $P_0^{(\Lambda,N)}(\phi)$ can be 
written 
\[
P_0^{(\Lambda,N)}(\phi)=\frac{1}{2\pi i}\oint dz \frac{1}{z-H_0^{(\Lambda)}(\phi)}
\]
on the sector of $N$ fermions for $\phi\in(0,\phi_0]$. 
This is also differentiable with respect to $\phi$ as 
\[
\frac{d}{d\phi}P_0^{(\Lambda,N)}(\phi)
=\frac{1}{2\pi i}\oint dz \frac{1}{z-H_0^{(\Lambda)}(\phi)}
\biggl[\frac{d H_0^{(\Lambda)}(\phi)}{d\phi}\biggr]
\frac{1}{z-H_0^{(\Lambda)}(\phi)}
\]
As we showed in Sec.~\ref{sec:TP}, the operator 
\[
B(\phi):=\frac{d H_0^{(\Lambda)}(\phi)}{d\phi}
\]
has a support which is localized near the boundary. 
By differentiating and integrating (\ref{omega0aastP0phia}), one has 
\begin{multline*}
\omega_0^{(\Lambda,N)}\bigl(a^\ast P_0^{(\Lambda,N)}(\phi_0)a;\phi_0\bigr)
-\omega_0^{(\Lambda,N)}\bigl(a^\ast P_0^{(\Lambda,N)}(0)a;0\bigr)\\
=\frac{1}{q}\int_0^{\phi_0} d\phi\frac{d}{d\phi}
{\rm Tr}\> a^\ast P_0^{(\Lambda,N)}(\phi)a P_0^{(\Lambda,N)}(\phi).
\end{multline*}
We will show that the integrand in the right-hand side is small for a large volume. 
In consequence, the difference of the expectation values in 
the left-hand side is almost independent of $\phi_0$ for a large volume. 

Relying on the differentiability of the projection operator, one has  
\begin{align}
\label{diffTraastP0aP0}
&\frac{d}{d\phi}{\rm Tr}\> a^\ast P_0^{(\Lambda,N)}(\phi)a P_0^{(\Lambda,N)}(\phi)\\ \nonumber
&=
{\rm Tr}\> a^\ast \biggl[\frac{d}{d\phi}P_0^{(\Lambda,N)}(\phi)\biggr]a P_0^{(\Lambda,N)}(\phi)
+{\rm Tr}\> a^\ast P_0^{(\Lambda,N)}(\phi)a \frac{d}{d\phi}P_0^{(\Lambda,N)}(\phi)\\ \nonumber
&={\rm Tr}\> a P_0^{(\Lambda,N)}(\phi)a^\ast \frac{d}{d\phi}P_0^{(\Lambda,N)}(\phi)
+{\rm Tr}\> a^\ast P_0^{(\Lambda,N)}(\phi)a \frac{d}{d\phi}P_0^{(\Lambda,N)}(\phi). 
\end{align}
Since the first term in the right-hand side is obtained by interchanging $a^\ast$ and $a$ 
in the second term, we will treat the second term only. The second term is written 
\begin{align*}
&{\rm Tr}\> a^\ast P_0^{(\Lambda,N)}(\phi)a \frac{d}{d\phi}P_0^{(\Lambda,N)}(\phi)\\
&=
\frac{1}{2\pi i}\oint dz\> {\rm Tr}\> a^\ast P_0^{(\Lambda,N)}(\phi)a 
\frac{1}{z-H_0^{(\Lambda)}(\phi)}B(\phi)\frac{1}{z-H_0^{(\Lambda)}(\phi)}\\
&=\sum_{m=1}^q\sum_{n\ge 1}\frac{\langle\Phi_{0,m}^{(N)}(\phi),a^\ast P_0^{(\Lambda,N)}(\phi)a
\Phi_n^{(N)}(\phi)\rangle
\langle\Phi_n^{(N)}(\phi),B(\phi)\Phi_{0,m}^{(N)}(\phi)\rangle}{E_{0,m}^{(N)}(\phi)-E_n^{(N)}(\phi)}
+{\rm c.c.}
\end{align*}
Using the integral identity 
\[
\frac{1}{\Delta \mathcal{E}}=\int_0^\infty ds\> e^{-\Delta\mathcal{E}s},
\]
the first sum in the right-hand side is written 
\begin{align*}
&-\int_0^\infty ds\sum_{m=1}^q\sum_{n\ge 1}\langle\Phi_{0,m}^{(N)}(\phi),a^\ast P_0^{(\Lambda,N)}(\phi)a
\Phi_n^{(N)}(\phi)\rangle\langle\Phi_n^{(N)}(\phi),B(\phi)\Phi_{0,m}^{(N)}(\phi)\rangle\\
&\times \exp[-(E_n^{(N)}(\phi)-E_{0,m}^{(N)}(\phi))s]\\
&=-\int_0^{cr} ds\sum_{m=1}^q\sum_{n\ge 1}\langle\Phi_{0,m}^{(N)}(\phi),a^\ast P_0^{(\Lambda,N)}(\phi)a
\Phi_n^{(N)}(\phi)\rangle\langle\Phi_n^{(N)}(\phi),B(\phi)\Phi_{0,m}^{(N)}(\phi)\rangle\\
&\times \exp[-(E_n^{(N)}(\phi)-E_{0,m}^{(N)}(\phi))s]\\
&-\int_{cr}^\infty ds\sum_{m=1}^q\sum_{n\ge 1}\langle\Phi_{0,m}^{(N)}(\phi),a^\ast P_0^{(\Lambda,N)}(\phi)a
\Phi_n^{(N)}(\phi)\rangle\langle\Phi_n^{(N)}(\phi),B(\phi)\Phi_{0,m}^{(N)}(\phi)\rangle\\
&\times \exp[-(E_n^{(N)}(\phi)-E_{0,m}^{(N)}(\phi))s],
\end{align*}
where $c$ is a positive constant, and $r={\rm dist}({\rm supp}\; a,{\rm supp}\; B(\phi))$. 
The second integral in the right-hand side is estimated as 
\begin{align*}
&\biggl|\int_{cr}^\infty ds\>\sum_{m=1}^q\sum_{n\ge 1}\langle\Phi_{0,m}^{(N)}(\phi),a^\ast P_0^{(\Lambda,N)}(\phi)a
\Phi_n^{(N)}(\phi)\rangle\langle\Phi_n^{(N)}(\phi),B(\phi)\Phi_{0,m}^{(N)}(\phi)\rangle\\
&\times \exp\bigl[-(E_n^{(N)}(\phi)-E_{0,m}^{(N)}(\phi))s\bigr]\biggr|
\le \frac{q}{\Delta E^{\rm min}}\Vert a\Vert^2\Vert B(\phi)\Vert \exp[-\Delta E^{\rm min}cr],
\end{align*}
where we have used the assumption (\ref{assumpGapphi0}) on the spectral gap. 
Since $\Vert B(\phi)\Vert=\mathcal{O}(L^{(2)})$ and 
$r={\rm dist}({\rm supp}\> a,{\rm supp}\> B(\phi))=\mathcal{O}(L^{(1)})$, 
the corresponding contribution is exponentially small in the linear size of the system. 

The first integral in the right-hand side is written 
\begin{equation}
\label{intomega0aastP0aB}
-q\int_0^{cr}ds\>\omega_0^{(\Lambda,N)}(a^\ast P_0^{(\Lambda,N)}(\phi)a\tilde{B}^{(\Lambda)}
(is;\phi);\phi),
\end{equation}
where 
\[
\tilde{B}^{(\Lambda)}(z;\phi):=e^{iH_0^{(\Lambda)}(\phi)z}\tilde{B}(\phi)e^{-iH_0^{(\Lambda)}(\phi)z}
\quad\text{for \ } z\in \co
\]
with
\[
\tilde{B}(\phi):=B(\phi)-P_0^{(\Lambda,N)}(\phi)B(\phi)P_0^{(\Lambda,N)}(\phi).
\]
Since $r=\mathcal{O}(L^{(1)})$, it is sufficient to show that the integrand of (\ref{intomega0aastP0aB}) 
is exponentially small in the size $L^{(1)}$. 
For the integrand of (\ref{intomega0aastP0aB}), we write 
\[
f(z)=\omega_0^{(\Lambda,N)}(a^\ast P_0^{(\Lambda,N)}(\phi)a\tilde{B}^{(\Lambda)}(z;\phi);\phi)
\quad\text{for \ } z\in \co.
\]
Using the contour integral in the complex plane, one has 
\begin{multline}
\label{omega0contour}
\omega_0^{(\Lambda,N)}(a^\ast P_0^{(\Lambda,N)}(\phi)a\tilde{B}^{(\Lambda)}
(is;\phi);\phi)=\frac{1}{2\pi i}\oint\frac{f(z)}{z-is}dz \\
=\frac{1}{2\pi i}\int_{-T_2}^{T_2} \frac{f(t)}{t-is}dt 
+\frac{1}{2\pi i}\int \frac{f(z)}{z-is}dz
\end{multline}
for $s>0$. Here, the first term in the right-hand side of the second equality
is the integral along the real axis from $-T_2$ to $T_2$ with a large positive number $T_2$, 
and the second term is the integral along the semi-circle 
$z=T_2e^{i\theta}$ for $\theta\in[0,\pi]$. 
The second integral is estimated as 
\begin{equation}
\label{intf(z)semicirclebound}
\biggl|\frac{1}{2\pi i}\int \frac{f(z)}{z-is}dz\biggr|
\le \frac{\Vert a\Vert^2\Vert B(\phi)\Vert}{2\Delta E^{\rm min}} 
\frac{1-e^{-T_2\Delta E^{\rm min}}}{T_2\sqrt{1-2s/T_2}}
\end{equation}
for $2s<T_2$. The proof is given in Appendix~\ref{prooff(z)semicirclebound}.
We recall that $\Vert B(\phi)\Vert=\mathcal{O}(L^{(2)})$ 
and that $0<s\le cr$ with $r=\mathcal{O}(L^{(1)})$. 
Therefore, the corresponding contribution is vanishing in the limit $T_2\rightarrow\infty$.

In order to evaluate the first integral in the right-hand side of the second equality 
of (\ref{omega0contour}), we use the technique in \cite{Hastings,HastingsKoma,NachtergaeleSims}. 
Following them, we write 
\[
f(t)e^{\alpha s^2}=f(t)e^{-\alpha t^2}+f(t)(e^{\alpha s^2}-e^{-\alpha t^2}).
\]
By definition, 
\[
f(t)=\omega_0^{(\Lambda,N)}(a^\ast P_0^{(\Lambda,N)}(\phi)a \tilde{B}^{(\Lambda)}(t;\phi);\phi).
\]
Using the identity, 
\[
a\tilde{B}^{(\Lambda)}(t;\phi)=\tilde{B}^{(\Lambda)}(t;\phi)a+[a,\tilde{B}^{(\Lambda)}(t;\phi)],
\]
one has 
\begin{align*}
f(t)&=\omega_0^{(\Lambda,N)}(a^\ast P_0^{(\Lambda,N)}(\phi)\tilde{B}^{(\Lambda)}(t;\phi)a;\phi)\\
&+\omega_0^{(\Lambda,N)}(a^\ast P_0^{(\Lambda,N)}(\phi)
[a,\tilde{B}^{(\Lambda)}(t;\phi)];\phi)\\
&=\omega_0^{(\Lambda,N)}(\tilde{B}^{(\Lambda)}(t;\phi)a P_0^{(\Lambda,N)}(\phi)a^\ast;\phi)\\
&+\omega_0^{(\Lambda,N)}(a^\ast P_0^{(\Lambda,N)}(\phi)
[a,B^{(\Lambda)}(t;\phi)];\phi)\\
&-\omega_0^{(\Lambda,N)}(a^\ast P_0^{(\Lambda,N)}(\phi)
[a,P_0^{(\Lambda,N)}(\phi)B^{(\Lambda)}(t;\phi)P_0^{(\Lambda,N)}(\phi)];\phi),
\end{align*}
where 
\[
B^{(\Lambda)}(t;\phi):=e^{iH_0^{(\Lambda)}t}B(\phi)e^{-iH_0^{(\Lambda)}t}
\]
for $t\in \re$. Consider the third term in the right-hand side of the second equality. 
Except for the factor $q^{-1}$, the term is written 
\begin{align*}
&{\rm Tr}\> a^\ast P_0^{(\Lambda,N)}(\phi)
[a,P_0^{(\Lambda,N)}(\phi)B^{(\Lambda)}(t;\phi)P_0^{(\Lambda,N)}(\phi)]P_0^{(\Lambda,N)}(\phi)\\ 
&={\rm Tr}\> a^\ast P_0^{(\Lambda,N)}(\phi)aP_0^{(\Lambda,N)}(\phi)B^{(\Lambda)}(t;\phi)P_0^{(\Lambda,N)}(\phi)\\
&-{\rm Tr}\> a^\ast P_0^{(\Lambda,N)}(\phi)B^{(\Lambda)}(t;\phi)P_0^{(\Lambda,N)}(\phi)aP_0^{(\Lambda,N)}(\phi)\\
&={\rm Tr}\> a^\ast P_0^{(\Lambda,N)}(\phi)aP_0^{(\Lambda,N)}(\phi)B^{(\Lambda)}(t;\phi)P_0^{(\Lambda,N)}(\phi)\\
&-{\rm Tr}\> aP_0^{(\Lambda,N)}(\phi)a^\ast P_0^{(\Lambda,N)}(\phi)B^{(\Lambda)}(t;\phi)P_0^{(\Lambda,N)}(\phi).
\end{align*}
Clearly, by interchanging $a^\ast$ and $a$, this contribution changes the sign.   
Therefore, the contribution is canceled by the corresponding contribution coming from 
the first term in the right-hand side of the second equality of (\ref{diffTraastP0aP0}) 
because the first term is obtained by interchanging $a^\ast$ and $a$ in the second term in (\ref{diffTraastP0aP0}).  

{From} these observations,   
in order to evaluate the first integral in the right-hand side of the second equality 
of (\ref{diffTraastP0aP0}), it is sufficient to estimate the following three integrals: 
\begin{equation}
\label{I1}
I_1:=\frac{1}{2\pi i}\int_{-T_2}^{T_2}dt\>\frac{1}{t-is}
\omega_0^{(\Lambda,N)}(\tilde{B}^{(\Lambda)}(t;\phi)a P_0^{(\Lambda,N)}(\phi)a^\ast;\phi)e^{-\alpha t^2}
\end{equation}
\begin{equation}
\label{I2}
I_2:=\frac{1}{2\pi i}\int_{-T_2}^{T_2}dt\>\frac{1}{t-is}
\omega_0^{(\Lambda,N)}(a^\ast P_0^{(\Lambda,N)}(\phi)[a,B^{(\Lambda)}(t;\phi)];\phi)e^{-\alpha t^2}
\end{equation}
\begin{equation}
\label{I3}
I_3:=\frac{1}{2\pi i}\int_{-T_2}^{T_2}dt\>
\frac{f(t)}{t-is}(e^{\alpha s^2}-e^{-\alpha t^2})
\end{equation}
The first and second integrals can be estimated as follows: 
\begin{equation}
\label{I1bound}
|I_1|\le \frac{\Vert a\Vert^2\Vert B(\phi)\Vert}{2}\exp[-(\Delta E^{\rm min})^2/(4\alpha)]
\end{equation}
in the limit $T_2\rightarrow\infty$, and 
\begin{equation}
\label{I2bound}
|I_2|\le\Vert a\Vert^2\Vert B(\phi)\Vert\biggl(\frac{1}{\sqrt{\pi\alpha}}\frac{1}{S}e^{-\alpha S^2}
+{\rm Const.}|{\rm supp}\> a|e^{vS-\mu r}\biggr).
\end{equation}
Here, $S$ is a positive number; $v$ and $\mu$ are positive constants 
which are determined by the model's parameters.  
The proofs of (\ref{I1bound}) and (\ref{I2bound}) are given in Appendices~\ref{estI1} and \ref{estI2}, respectively. 
As to the third integral with $0<s\le cr$, we choose $\alpha=\Delta E^{\rm min}/(2cr)$. Then, one has  
\begin{equation}
\label{I3bound}
|I_3|\le\frac{1}{2}\Vert a\Vert^2\Vert B(\phi)\Vert e^{-\Delta E^{\rm min}cr/2}
\end{equation}
in the limit $T_2\rightarrow\infty$. 
The proof is given in Appendix~\ref{estI3}.  

For the same $\alpha=\Delta E^{\rm min}/(2cr)$, 
the upper bound (\ref{I1bound}) for $|I_1|$ becomes the same as that for $|I_3|$ as   
\[ 
|I_1|\le\frac{1}{2}\Vert a\Vert^2\Vert B(\phi)\Vert e^{-\Delta E^{\rm min}cr/2}. 
\]

As to the upper bound (\ref{I2bound}) for $|I_2|$, we choose $S=cr$ and 
\[
c=\frac{\mu}{v+\Delta E^{\rm min}/2}
\]
with the same $\alpha=\Delta E^{\rm min}/(2cr)$ as in the above. Then, we have 
\[
|I_2|\le\Vert a\Vert^2\Vert B(\phi)\Vert\biggl(\frac{1}{\sqrt{\pi\tilde{\mu}r}}
+{\rm Const.}|{\rm supp}\> a|\biggr)e^{-\tilde{\mu}r}, 
\]
where 
\[
\tilde{\mu}=\frac{\Delta E^{\rm min}}{2v+\Delta E^{\rm min}}\mu.
\]
Since $\Delta E^{\rm min}c/2=\tilde{\mu}$, we have 
\[
|I_1|+|I_3|\le\Vert a\Vert^2\Vert B(\phi)\Vert e^{-\tilde{\mu}r}.
\] 
We recall $\Vert B(\phi)\Vert =\mathcal{O}(L^{(2)})$ and 
$r={\rm dist}({\rm supp}\; a,{\rm supp}\; B(\phi))=\mathcal{O}(L^{(1)})$. 
Therefore, all of the above integrals are exponentially small in the linear size $L^{(1)}$ of the system.  
As a result, we obtain 
\begin{multline}
\left|\omega_0^{(\Lambda,N)}\bigl(a^\ast P_0^{(\Lambda,N)}(\phi_0)a;\phi_0\bigr)
-\omega_0^{(\Lambda,N)}\bigl(a^\ast P_0^{(\Lambda,N)}(0)a;0\bigr)\right|\\
\le{\rm Const.}|{\rm supp}\> a|\Vert a\Vert^2\Vert B(\phi)\Vert e^{-\tilde{\mu}r}. 
\end{multline}
Clearly, we can write 
\begin{equation}
\label{difomega1}
\omega_0^{(\Lambda,N)}\bigl(a^\ast P_0^{(\Lambda,N)}(\phi_0)a;\phi_0\bigr)
=\omega_0^{(\Lambda,N)}\bigl(a^\ast P_0^{(\Lambda,N)}(0)a;0\bigr)+\epsilon_1(\Lambda),
\end{equation}
where $\epsilon_1(\Lambda)$ is  the exponentially small correction. 

Since only the hopping terms near the boundaries in the Hamiltonian $H_0^{(\Lambda)}(\phi)$   
depend on the phase $\phi$, 
the commutator $[H_0^{(\Lambda)}(\phi),a]$ is independent of the phase $\phi$ 
and becomes a local operator with a compact support 
which is independent of the lattice $\Lambda$. Therefore, we obtain 
\begin{equation}
\label{difomega2}
\omega_0^{(\Lambda,N)}\bigl(a^\ast [H_0^{(\Lambda)}(\phi_0),a];\phi_0\bigr)
=\omega_0^{(\Lambda,N)}\bigl(a^\ast [H_0^{(\Lambda)}(0),a];0\bigr)+\epsilon_2(\Lambda)
\end{equation}
with the small correction $\epsilon_2(\Lambda)$ in the same way. Furthermore, we have 
\begin{equation}
\label{difomega3}
\omega_0^{(\Lambda,N)}\bigl(a^\ast a;\phi_0\bigr)
=\omega_0^{(\Lambda,N)}\bigl(a^\ast a;0\bigr)+\epsilon_3(\Lambda)
\end{equation}
with the small correction $\epsilon_3(\Lambda)$. 

We define 
\[
\delta E^{\rm max}:=\max_{\phi\in[0,\phi_0]}\{\delta E(\phi)\}, 
\]
and 
\[
\Delta E(\phi):=E_1^{(N)}(\phi)-\max\{E_{0,1}^{(N)}(\phi),\ldots,E_{0,q}^{(N)}(\phi)\}.
\]
We assume that there exists a small $\varepsilon>0$ such that 
\begin{equation}
\Delta E^{\rm min}-\delta E^{\rm max}\ge \varepsilon.
\end{equation}

Let us consider the Hamiltonian $H_0^{(\Lambda)}(\phi_0)$ with a fixed $\phi_0$. 
We require the same assumption as in Proposition~\ref{prop:expexcit}. 
For the same $\varepsilon$ as in the above, we can find a local operator $a$ which satisfies 
the excitation energy bound 
\begin{equation}
\label{upperbound excitphi0}
\frac{\omega_0^{(\Lambda,N)}\bigl(a^\ast(1-P_0^{(\Lambda,N)}(\phi_0))[H_0^{(\Lambda)}(\phi_0),a];\phi_0\bigr)}
{\omega_0^{(\Lambda,N)}\bigl(a^\ast(1-P_0^{(\Lambda,N)}(\phi_0))a;\phi_0\bigr)}
\le \Delta E(\phi_0)+2\delta E(\phi_0)+2\varepsilon
\end{equation}
as in Proposition~\ref{prop:expexcit}. We write 
\[
\epsilon_4(R):=\omega_0^{(\Lambda,N)}\bigl(a^\ast P_0^{(\Lambda,N)}(\phi_0)[H_0^{(\Lambda)}(\phi_0),a];\phi_0\bigr)
\]
for short. The same argument in the subsection~\ref{ConstLEE} yields
\begin{equation}
\label{omega0aastP0phi0aO}
\omega_0^{(\Lambda,N)}\bigl(a^\ast P_0^{(\Lambda,N)}(\phi_0)a;\phi_0\bigr)
=\mathcal{O}(Re^{-\tilde{\mu}\varepsilon R}). 
\end{equation}
Combining this with the bound (\ref{omegaH0commudeltaEbound}), we have  
\[
\left|\epsilon_4(R)\right|
\le \delta E(\phi_0)\omega_0^{(\Lambda,N)}\bigl(a^\ast P_0^{(\Lambda,N)}(\phi_0)a;\phi_0\bigr)
=\mathcal{O}(Re^{-\tilde{\mu}\varepsilon R}). 
\]
{From} (\ref{difomega1}), (\ref{difomega2}) and (\ref{difomega3}), the excitation energy 
for the Hamiltonian $H_0^{(\Lambda)}(\phi_0)$ is written 
\begin{align}
\label{exciteneomega0phi0}
&\frac{\omega_0^{(\Lambda,N)}\bigl(a^\ast(1-P_0^{(\Lambda,N)}(\phi_0))[H_0^{(\Lambda)}(\phi_0),a];\phi_0\bigr)}
{\omega_0^{(\Lambda,N)}\bigl(a^\ast(1-P_0^{(\Lambda,N)}(\phi_0))a;\phi_0\bigr)}\\ \nonumber
&=\frac{\omega_0^{(\Lambda,N)}\bigl(a^\ast[H_0^{(\Lambda)}(\phi_0),a];\phi_0\bigr)
-\omega_0^{(\Lambda,N)}\bigl(a^\ast P_0^{(\Lambda,N)}(\phi_0)[H_0^{(\Lambda)}(\phi_0),a];\phi_0\bigr)}
{\omega_0^{(\Lambda,N)}\bigl(a^\ast a;\phi_0\bigr)
-\omega_0^{(\Lambda,N)}\bigl(a^\ast P_0^{(\Lambda,N)}(\phi_0)a;\phi_0\bigr)}\\ \nonumber
&=\frac{\omega_0^{(\Lambda,N)}\bigl(a^\ast[H_0^{(\Lambda)}(0),a];0\bigr)+\epsilon_2(\Lambda)
-\epsilon_4(R)}
{\omega_0^{(\Lambda,N)}\bigl(a^\ast a;0\bigr)+\epsilon_3(\Lambda)
-\omega_0^{(\Lambda,N)}\bigl(a^\ast P_0^{(\Lambda,N)}(0)a;0\bigr)-\epsilon_1(\Lambda)}.
\end{align}
We assume that the ground state of the Hamiltonian $H_0^{(\Lambda,N)}(0)$ with $\phi=0$ 
is (quasi)degenerate, i.e.,  
\[
\delta E(0)\rightarrow 0 \ \mbox{as\ } |\Lambda|\rightarrow \infty.
\]
Under this assumption, from (\ref{omegaH0commudeltaEbound}), we have 
\[
\omega_0^{(\Lambda,N)}\bigl(a^\ast P_0^{(\Lambda,N)}(0)[H_0^{(\Lambda)}(0),a];0\bigr)
\rightarrow 0 \ \mbox{as\ } |\Lambda|\rightarrow \infty.
\]
Further, from the same argument in the subsection~\ref{ConstLEE},  
(\ref{difomega1}), (\ref{difomega3}) and (\ref{omega0aastP0phi0aO}), one can show that 
there exist a sufficiently large $|\Lambda|$, a sufficiently large $R$ and a positive constant $c_0'$ such that  
\[
\omega_0^{(\Lambda,N)}\bigl(a^\ast(1- P_0^{(\Lambda,N)}(0))a;0\bigr)\ge \frac{c_0'}{M}, 
\]
where the positive number $M$ is defined in the subsection~\ref{ConstLEE}. From these observations, 
the excitation energy of (\ref{exciteneomega0phi0}) can be estimated as
\begin{align*}
&\frac{\omega_0^{(\Lambda,N)}\bigl(a^\ast(1-P_0^{(\Lambda,N)}(\phi_0))[H_0^{(\Lambda)}(\phi_0),a];\phi_0\bigr)}
{\omega_0^{(\Lambda,N)}\bigl(a^\ast(1-P_0^{(\Lambda,N)}(\phi_0))a;\phi_0\bigr)}\\
&=\frac{\omega_0^{(\Lambda,N)}\bigl(a^\ast(1-P_0^{(\Lambda,N)}(0))[H_0^{(\Lambda)}(0),a];0\bigr)}
{\omega_0^{(\Lambda,N)}\bigl(a^\ast(1-P_0^{(\Lambda,N)}(0))a;0\bigr)}-\varepsilon\\
&\ge \Delta E(0)-\varepsilon
\end{align*}
for a sufficiently large $|\Lambda|$ and a sufficiently large $R$. 
Combining this with the upper bound (\ref{upperbound excitphi0}) of the excitation energy,  
we obtain
\begin{equation}
\label{Delta Ephi0Lbound}
\Delta E(\phi_0)\ge \Delta E(0)-2\delta E(\phi_0)-3\varepsilon. 
\end{equation}
Thus, if $\Delta E(0)>2\delta E(\phi_0)$ in the infinite-volume limit,  
then the excitation energy for $\phi_0$ is strictly positive, i.e., $\Delta E(\phi_0)>0$.  
Further, if the splitting $\delta E(\phi_0)$ of the energies in the low-energy sector 
is vanishing in the infinite-volume limit, then we have the bound $\Delta E(\phi_0)\ge \Delta E(0)$.  

\subsection{Twisted phase dependence of the averaged energy}
\label{subSec:TPDavE}

Consider the same situation as in the preceding Section~\ref{subsecTPDEE}. 
But, instead of the energy gap condition (\ref{assumpGapphi0}), we consider more generic condition, 
\begin{equation}
\label{gapassumpphi1phi2}
\min_{\phi\in[\phi_1,\phi_2]}\{E_1^{(N)}(\phi)\}
-\max_{\phi\in[\phi_1,\phi_2]}\max\{E_{0,1}^{(N)}(\phi),\ldots,E_{0,q}^{(N)}(\phi)\}
\ge \Delta E^{\rm min}>0,
\end{equation}
for the interval $[\phi_1,\phi_2]$ of the phase $\phi$. Here, we assume that 
the positive constant $\Delta E^{\rm min}$ is independent of the size $|\Lambda|$ of the lattice. 

We decompose the Hamiltonian $H_0^{(\Lambda)}(\phi)$ into two parts as  
$$
H_0^{(\Lambda)}(\phi)=H_0^{(\Lambda^+)}(\phi)+H_0^{(\Lambda^-)}(\phi),
$$
where the supports of the operators $H_0^{(\Lambda^\pm)}(\phi)$ are included 
in the regions, 
$$
\Lambda^+=\left([-L^{(1)}/2,-L^{(1)}/2+R_0]\cup[-R_0,L^{(1)}/2]\right)\times[-L^{(2)}/2,L^{(2)}/2] 
$$
and 
$$
\Lambda^-=\left([-L^{(1)}/2,R_0]\cup[-R_0+L^{(1)}/2,L^{(1)}/2]\right)\times[-L^{(2)}/2,L^{(2)}/2], 
$$
respectively. Here, $R_0$ is a positive constant which is independent of the size $|\Lambda|$ of 
the lattice. Clearly, we have 
\begin{equation}
\label{omega0H0decompo}
\omega_0^{(\Lambda,N)}\big(H_0^{(\Lambda)}(\phi);\phi\big)=
\omega_0^{(\Lambda,N)}\big(H_0^{(\Lambda^+)}(\phi);\phi\big)+
\omega_0^{(\Lambda,N)}\big(H_0^{(\Lambda^-)}(\phi);\phi\big).
\end{equation}
Consider the first term in the right-hand side. 
As shown in Sec.~\ref{sec:TP}, one can change the position of the twisted hopping amplitudes 
of the local Hamiltonians in the total Hamiltonian by using the unitary transformation. 
Therefore, we can find the unitary operator $U_-$ such that in the transformed Hamiltonian 
$$
H_{0,-}^{(\Lambda)}(\phi):=U_-^\dagger H_0^{(\Lambda)}(\phi)U_-,
$$
the local Hamiltonians with the twisted hopping amplitudes 
are arranged along the center line with the first coordinate $x^{(1)}\approx -L^{(1)}/4$ 
in the region $\Lambda^-$. We write 
$$
H_{0,-}^{(\Lambda^+)}:=U_-^\dagger H_0^{(\Lambda^+)}(\phi) U_-.
$$
By definition, this Hamiltonian $H_{0,-}^{(\Lambda^+)}$ is independent of the phase $\phi$. 
Further, one has 
\begin{equation}
\label{distH0-B-}
{\rm dist}\big({\rm supp}\; H_{0,-}^{(\Lambda^+)},{\rm supp}\; B_-(\phi)\big)=\mathcal{O}(L^{(1)}),
\end{equation}
where the operator 
$$
B_-(\phi):=\frac{d}{d\phi}H_{0,-}^{(\Lambda)}(\phi)
$$
comes from the twisted hopping terms in the Hamiltonian. 
In other words, we choose the unitary transformation $U_-$ so that 
the condition (\ref{distH0-B-}) is satisfied. 

The first term in the right-hand side of (\ref{omega0H0decompo}) is written 
\begin{align*}
\omega_0^{(\Lambda,N)}\big(H_0^{(\Lambda^+)}(\phi);\phi\big)&=
\frac{1}{q}{\rm Tr}\; H_{0,-}^{(\Lambda^+)}P_0^{(\Lambda,N)}(\phi)\\
&=\frac{1}{q}\frac{1}{2\pi i}\oint dz\;{\rm Tr}\;H_{0,-}^{(\Lambda^+)}\frac{1}{z-H_{0,-}^{(\Lambda)}(\phi)}.
\end{align*}
Relying on the assumption (\ref{gapassumpphi1phi2}) on the spectral gap, one has 
\begin{multline*}
\frac{d}{d\phi}\omega_0^{(\Lambda,N)}\big(H_0^{(\Lambda^+)}(\phi);\phi\big)\\
=\frac{1}{q}\frac{1}{2\pi i}\oint dz\;{\rm Tr}\;H_{0,-}^{(\Lambda^+)}
\frac{1}{z-H_{0,-}^{(\Lambda)}(\phi)}B_-(\phi)\frac{1}{z-H_{0,-}^{(\Lambda)}(\phi)}.
\end{multline*}
By integrating the left-hand side, one obtains
\begin{multline*}
\omega_0^{(\Lambda,N)}\big(H_0^{(\Lambda^+)}(\phi_2);\phi_2\big)
-\omega_0^{(\Lambda,N)}\big(H_0^{(\Lambda^+)}(\phi_1);\phi_1\big)\\
=\int_{\phi_1}^{\phi_2}d\phi\;
\frac{d}{d\phi}\omega_0^{(\Lambda,N)}\big(H_0^{(\Lambda^+)}(\phi);\phi\big).
\end{multline*}
Combining these with (\ref{distH0-B-}), in the same way as in the preceding Section~\ref{subsecTPDEE}, 
we can prove that this left-hand side is exponentially small in the linear size $L^{(1)}$ of the system. 
Since the second term in the right-hand side of (\ref{omega0H0decompo}) can be handled in the same way, 
we can prove that the difference 
$$
\omega_0^{(\Lambda,N)}(H_0^{(\Lambda)}(\phi_2);\phi_2)
-\omega_0^{(\Lambda,N)}(H_0^{(\Lambda)}(\phi_1);\phi_1)
$$
is exponentially small in the linear size $L^{(1)}$ of the system. 

Next consider the situation that the degeneracy of the energies in the low-energy sector is lifted. 
Namely, we consider the situation that there is a spectral gap between their energies 
in addition to the above assumption (\ref{gapassumpphi1phi2}) on the spectral gap.   

Let $m_1,m_2$ be integers satisfying $1\le m_1<m_2\le q$. We assume the existence of the nonvanishing 
spectral gaps as  
\begin{multline*}
\min_{\phi\in[\phi_1,\phi_2]}\min\{E_{0,m_1+1}^{(N)}(\phi),\ldots,E_{0,m_2}^{(N)}(\phi)\}\\
-\max_{\phi\in[\phi_1,\phi_2]}\max\{E_{0,1}^{(N)}(\phi),\ldots,E_{0,m_1}^{(N)}(\phi)\}
\ge \Delta E_-^{\rm min}>0
\end{multline*}
and 
\begin{multline*}
\min_{\phi\in[\phi_1,\phi_2]}\min\{E_{0,m_2+1}^{(N)}(\phi),\ldots,E_{0,q}^{(N)}(\phi)\}\\
-\max_{\phi\in[\phi_1,\phi_2]}\max\{E_{0,m_1+1}^{(N)}(\phi),\ldots,E_{0,m_2}^{(N)}(\phi)\}
\ge \Delta E_+^{\rm min}>0,
\end{multline*}
where $\Delta E_-^{\rm min}$ and $\Delta E_+^{\rm min}$ are independent of the size $|\Lambda|$ of 
the lattice. When $m_2=q$, the second condition is replaced with the condition (\ref{gapassumpphi1phi2}). 
Let $P_{0,-}^{(\Lambda,N)}(\phi)$ be the projection onto the sector spanned by the eigenvectors 
with the eigenenergies $\{E_{0,1}^{(N)}(\phi),\ldots,E_{0,m_1}^{(N)}(\phi)\}$, 
and $P_{0,+}^{(\Lambda,N)}(\phi)$ the projection onto that with 
$\{E_{0,m_1+1}^{(N)}(\phi),\ldots,E_{0,m_2}^{(N)}(\phi)\}$. 
We write 
$$
\omega_{0,-}^{(\Lambda,N)}(\cdots;\phi)=\frac{1}{m_1}{\rm Tr}\;(\cdots)P_{0,-}^{(\Lambda,N)}(\phi), 
$$
$$
\omega_{0,+}^{(\Lambda,N)}(\cdots;\phi)=\frac{1}{m_2-m_1}{\rm Tr}\;(\cdots)P_{0,+}^{(\Lambda,N)}(\phi) 
$$
and 
$$
\omega_{0,{\rm tot}}^{(\Lambda,N)}(\cdots;\phi)
=\frac{1}{m_2}{\rm Tr}\;(\cdots)\big[P_{0,+}^{(\Lambda,N)}(\phi)+P_{0,-}^{(\Lambda,N)}(\phi)\big]. 
$$
Then, one has 
\begin{multline*}
\omega_{0,+}^{(\Lambda,N)}(H_0^{(\Lambda)}(\phi);\phi)\\
=\frac{m_2}{m_2-m_1}\omega_{0,{\rm tot}}^{(\Lambda,N)}(H_0^{(\Lambda)}(\phi);\phi)
-\frac{m_1}{m_2-m_1}\omega_{0,-}^{(\Lambda,N)}(H_0^{(\Lambda)}(\phi);\phi).
\end{multline*}
Clearly, 
\begin{align*}
&\omega_{0,+}^{(\Lambda,N)}(H_0^{(\Lambda)}(\phi_2);\phi_2)
-\omega_{0,+}^{(\Lambda,N)}(H_0^{(\Lambda)}(\phi_1);\phi_1)\\
&=\frac{m_2}{m_2-m_1}\big[\omega_{0,{\rm tot}}^{(\Lambda,N)}(H_0^{(\Lambda)}(\phi_2);\phi_2)
-\omega_{0,{\rm tot}}^{(\Lambda,N)}(H_0^{(\Lambda)}(\phi_1);\phi_1)\big]\\
&-\frac{m_1}{m_2-m_1}\big[\omega_{0,-}^{(\Lambda,N)}(H_0^{(\Lambda)}(\phi_2);\phi_2)
-\omega_{0,-}^{(\Lambda,N)}(H_0^{(\Lambda)}(\phi_1);\phi_1)\big]. 
\end{align*}
{From} the above argument, this right-hand side is exponentially small in the linear size $L^{(1)}$ of 
the system. Thus, in the sector of isolated eigenenergies from the rest of the spectrum with finite energy gaps, 
the twisted phase dependence of the arithmetic mean of the energies in the sector is exponentially small 
in the linear size $L^{(1)}$.

\subsection{Degeneracy of the sector of the ground state} 

By assuming that the degeneracy of the eigenenergies in the sector of the ground state is lifted 
when changing the value of the twisted phase $\phi$, we will deduce a contradiction. 
In consequence, we prove that the degeneracy of the sector of the ground state is not lifted 
for any value of the twisted phase $\phi\in[0,2\pi)$. 
In other words, the sector of the ground state is (quasi)degenerate irrespective of the twisted phase $\phi$. 
Clearly, by combining this with the result (\ref{Delta Ephi0Lbound}) in Sec.~\ref{subsecTPDEE}, 
the proof of Theorem~\ref{thm:twistDegeneGap} is completed. 

Consider again the same situation as in Sec.~\ref{subsecTPDEE}. 
To begin with, we note that all of the eigenenergies are a continuous function of the twisted phase $\phi$. 
Relying on this fact, we can assume that 
there exist a sequence of lattices, $\Lambda_1\subset\Lambda_2\subset\cdots$, 
and a sufficiently small phase $\phi'\in(0,2\pi)$ such that for all the lattices $\Lambda_n$, 
the phase $\phi'$ satisfies the following two conditions:
$$
\delta E(\phi')=\max_{\phi\in(0,\phi']}\{\delta E(\phi)\}
$$
and 
$$
0<\delta E^{\rm min}\le \delta E(\phi')\le \frac{1}{4}\Delta E(0).
$$
Here, the phase $\phi'$ may depend on the sizes $|\Lambda_n|$ of the lattices $\Lambda_n$, 
and the positive constant $\delta E^{\rm min}$ is independent of the sizes $|\Lambda_n|$ of the lattices. 
Clearly, the second condition implies that the degeneracy of the sector of the ground state is lifted at 
the twisted phase $\phi=\phi'$.

First, fix $\phi=\phi'$. 
Then, there exists a subsequence of lattices $\{\Lambda_n^{(0)}\}_n\subset\{\Lambda_n\}_n$ such that 
the sector of the $q$-fold ground state splits into $\ell$ sectors each of which is $q_i$-fold degenerate 
for $i=1,2,\ldots,\ell$, and that there exist a nonvanishing spectral gap $\gamma_m$ between two adjacent sectors 
for $m=1,2,\ldots,\ell-1$. Clearly, the degeneracy and the gaps must satisfy 
$$
\sum_{i=1}^\ell q_i=q \quad \mbox{and} \quad \sum_{m=1}^{\ell-1}\gamma_m\le \frac{1}{4}\Delta E(0),
$$
respectively. 
We denote by $P_{0,i}^{(\Lambda,N)}(\phi)$ the projection onto the $i$-th sector, 
and define    
$$
\omega_{0,i}^{(\Lambda,N)}(\cdots;\phi):=\frac{1}{q_i}{\rm Tr}\;(\cdots)P_{0,i}^{(\Lambda,N)}(\phi),
\quad i=1,2,\ldots,\ell.
$$
Then, the above conditions about the spectral gaps are written 
\begin{equation}
\label{avEgap}
\omega_{0,i+1}^{(\Lambda,N)}(H_0^{(\Lambda)}(\phi');\phi')
-\omega_{0,i}^{(\Lambda,N)}(H_0^{(\Lambda)}(\phi');\phi')\ge\gamma_i>0,
\quad i=1,2,\ldots,\ell-1.
\end{equation}

Consider changing the twisted phase $\phi$ from $\phi'$ to $\phi_1'\in[0,\phi')$. Then, 
the deviation of the energy from the averaged energy for the $i$-th sector is given by  
\begin{multline*}
\Delta E_{0,i}(\phi_1'):=\max_{1\le j\le q_i}
\big\{\big|E_{0,m_{i-1}+j}^{(N)}(\phi_1')-\omega_{0,i}^{(\Lambda,N)}(H_0^{(\Lambda)}(\phi_1');\phi_1')\big|\big\},\\
\quad i=1,2,\ldots,\ell,
\end{multline*}
where $m_0=0$ and 
$$
 m_{i-1}=\sum_{j=1}^{i-1} q_j \quad \mbox{for\ } i=2,3,\ldots,\ell.
$$
We can find $\phi_1'\in[0,\phi')$ so that the spectral gaps between two adjacent sectors 
remain open when the twisted phase $\phi$ continuously varies from $\phi'$ to $\phi_1'\in[0,\phi')$.  
Further, for a sufficiently small deviation $\phi'-\phi_1'$ of the twisted phase, 
we can assume that the deviation of the energy in the $i$-th sector satisfies  
\begin{equation}
\label{deviEboundgamma}
\Delta E_{0,i}(\phi_1')\le \frac{1}{4}\gamma,
\end{equation}
where $\gamma:=\min\{\gamma_1,\gamma_2,\ldots,\gamma_{\ell-1}\}$. 
Unless these spectral gaps close for varying the twisted phase, 
the deviation of the averaged energy is exponentially small in the linear size $L^{(1)}$ of the lattice as 
$$
\omega_{0,i}^{(\Lambda,N)}(H_0^{(\Lambda)}(\phi_1');\phi_1')
-\omega_{0,i}^{(\Lambda,N)}(H_0^{(\Lambda)}(\phi');\phi')=\mathcal{O}(\exp[-{\rm Const.}L^{(1)}])
$$
which is the result in the preceding Sec.~\ref{subSec:TPDavE}. 
Therefore, by combining this with the bounds (\ref{avEgap}) and (\ref{deviEboundgamma}), one notices the following: 
If the condition (\ref{deviEboundgamma}) is always retained during continuously varying the twisted phase 
from $\phi=\phi'$ to $\phi=0$ for any fixed lattice $\Lambda_n^{(1)}$, 
then the degeneracy of the low-energy sector must be lifted at $\phi=0$.
This contradicts the assumption that the energies of the low-energy sector are 
(quasi)degenerate at $\phi=0$. 
Thus, there must exists a phase $\phi_2'\in[0,\phi')$ such that the condition (\ref{deviEboundgamma}) 
does not hold at the phase $\phi=\phi_2'$.  

{From} these observations, we can assume that there exist 
a subsequence $\{\Lambda_n^{(1)}\}_n$ of the sequence $\{\Lambda_n^{(0)}\}_n$ of the lattices, 
and the minimum value of the phase $\phi_1'\in(0,\phi')$ such that 
the condition (\ref{deviEboundgamma}) holds for all the sectors and that, in particular, 
there exists a sector $i^{(1)}$ that the deviation of the energy in the $i^{(1)}$-th sector satisfies   
\begin{equation}
\label{deviEphipgamma}
\Delta E_{0,i^{(1)}}(\phi_1')=\frac{1}{4}\gamma
\end{equation}
for all the lattices $\Lambda_n^{(1)}$. Actually, since the eigenenergies are a continuous function of 
the twisted phase $\phi$ for a fixed lattice, there exists $\phi_1'\in(0,\phi')$ 
for each lattice $\Lambda_n^{(0)}$ such that 
$$
\max_{1\le i\le \ell}\{\Delta E_{0,i}(\phi_1')\}=\frac{1}{4}\gamma.
$$
Therefore, there exists a subsequence $\{\Lambda_n^{(1)}\}_n$ of the sequence $\{\Lambda_n^{(0)}\}_n$
that the condition (\ref{deviEphipgamma}) holds for a sector $i^{(1)}$. 

The condition (\ref{deviEphipgamma}) implies that the degeneracy of the $i^{(1)}$-th sector is lifted 
at the twisted phase $\phi=\phi_1'$.  Appropriately choosing the subsequence $\{\Lambda_n^{(1)}\}_n$ 
of the sequence $\{\Lambda_n^{(0)}\}_n$ of the lattices, 
the $i^{(1)}$-th sector splits into $\ell(i^{(1)})$ sectors each of which is $q_{i^{(1)},i}$-fold (quasi)degenerate 
for $i=1,2,\ldots,\ell(i^{(1)})$, and the spectral gaps $\gamma_{i^{(1)},m}$ between two adjacent sectors  
appear for $m=1,2,\ldots,\ell(i^{(1)})-1$. Therefore,  
we can define the spectral projection $P_{0,i^{(1)},i}^{(\Lambda,N)}(\phi)$ and the expectation value  
$\omega_{0,i^{(1)},i}^{(\Lambda,N)}(\cdots;\phi)$ for the $(i^{(1)},i)$-th sector for $i=1,2,\ldots,\ell(i^{(1)})$ 
in the same way as in the above. 
If another sector $j^{(1)}\ne i^{(1)}$ satisfies the same condition $\Delta E_{0,j^{(1)}}(\phi_1')=\gamma/4$ 
for an infinite number of the lattices in the sequence $\{\Lambda_n^{(1)}\}_n$, 
then we make the same procedure as the $i^{(1)}$-th sector.  

Further decrease the phase to $\phi=\phi_2'$ satisfying $0\le \phi_2'<\phi_1'<\phi'$. 
Then, we can define the deviation $\Delta E_{0,i^{(1)},i}(\phi_2')$ of the energy 
for the $(i^{(1)},i)$-th sector in the same way, and $\Delta E_{0,j^{(1)},j}(\phi_2')$ 
for the $(j^{(1)},j)$-th sector. 
For a sufficiently small deviation $\phi_1'-\phi_2'$, we can assume that 
\begin{equation}
\label{DeltaE0kphi2gamma}
\Delta E_{0,k}(\phi_2')\le \frac{1}{4}\gamma
\end{equation}
for the $k$-th sector satisfying 
$$
\Delta E_{0,k}(\phi_1')<\frac{1}{4}\gamma 
$$
and 
\begin{equation}
\label{DeltaE0j1jphi2gamma}
\Delta E_{0,j^{(1)},j}(\phi_2')\le \frac{1}{4}\gamma(j^{(1)})
\end{equation}
for the $(j^{(1)},j)$-th sector satisfying 
$$
\Delta E_{0,j^{(1)}}(\phi_1')=\frac{1}{4}\gamma,
$$
where 
$$
\gamma(j^{(1)}):=
\min\{\gamma_{j^{(1)},1},\gamma_{j^{(1)},2},\ldots,\gamma_{j^{(1)},\ell(j^{(1)})-1},\gamma/4\}
$$
If these conditions are always retained during continuously varying the twisted phase from $\phi=\phi_1'$ to 
$\phi=0$ for any fixed lattices $\Lambda_n^{(1)}$, then the degeneracy of the low-energy sector must be 
degenerate at $\phi=0$. This contradicts the assumption again.  

If there exist a subsequence of the lattices and $\phi_2'$ such that the condition 
$$
\Delta E_{0,k}(\phi_2')=\frac{1}{4}\gamma
$$
holds for a sector $k$ in the first case (\ref{DeltaE0kphi2gamma}), 
then we make the same procedure as in the above $i^{(1)}$-th sector.

In the second case (\ref{DeltaE0j1jphi2gamma}), if there exist a subsequence $\{\Lambda_n^{(2)}\}_n$ 
of the sequence $\{\Lambda_n^{(1)}\}_n$ of the lattices and the twisted phase $\phi_2'$ such that the condition
$$
\label{Delta E0j1jphi2gamma}
\Delta E_{0,j^{(1)},j}(\phi_2')=\frac{1}{4}\gamma(j^{(1)})
$$
holds for the $(j^{(1)},j)$-th sector, this condition implies that the degeneracy of the sector is lifted again.  
Therefore, we can make the same procedure. 

Since the initial degeneracy $q$ is finite, the process must stop in finite times. 
Besides, once a spectral gap appears in the spectrum, it never closes 
because the total deviation of energy is bounded by 
$$
\frac{1}{4}\gamma'+\frac{1}{4^2}\gamma'+\cdots=\frac{1}{3}\gamma'<\frac{1}{2}\gamma',    
$$
where $\gamma'$ is the initial gap when the degeneracy of the sector is lifted. 
Thus, the degeneracy of the spectrum at the twisted phase $\phi=0$ is not recovered  
when continuously varying the twisted phase from $\phi=\phi'$ to $\phi=0$.
This is a contradiction. Theorem~\ref{thm:twistDegeneGap} has been proved.

\section{Proof of the inequality (\ref{a0tdiff})}
\label{Proofa0tdiffinequal}

Consider a Hamiltonian $H^{(\Lambda)}$ on the lattice $\Lambda$. 
We denote by $H^{(\Omega)}$ the restriction of the Hamiltonian $H^{(\Lambda)}$ 
onto the subset $\Omega$ of the lattice $\Lambda$. Let $A$ be a local observable, 
and define the time evolutions as 
\[
A^{(\Lambda)}(t):=e^{itH^{(\Lambda)}}Ae^{-itH^{(\Lambda)}}
\]
and 
\[
A^{(\Omega)}(t):=e^{itH^{(\Omega)}}Ae^{-itH^{(\Omega)}}.
\]
Then, one has \cite{NOS} 
\begin{equation}
\label{Atdiff}
\bigl\Vert A^{(\Lambda)}(t)-A^{(\Omega)}(t)\bigr\Vert
\le {\rm sgn}\>t\int_0^t ds\; \bigl\Vert [(H^{(\Lambda)}-H^{(\Omega)}),A^{(\Omega)}(t-s)]\bigl\Vert.
\end{equation}
In order to prove this inequality, we use the following identity:  
\[
A^{(\Lambda)}(t)-A^{(\Omega)}(t)=\int_0^t ds\> \frac{d}{ds}
e^{isH^{(\Lambda)}}A^{(\Omega)}(t-s)e^{-isH^{(\Lambda)}}. 
\]
The derivative of the integrand in the right-hand side is computed as  
\begin{align*}
\frac{d}{ds}
e^{isH^{(\Lambda)}}A^{(\Omega)}(t-s)e^{-isH^{(\Lambda)}}&=
\frac{d}{ds}
e^{isH^{(\Lambda)}}e^{i(t-s)H^{(\Omega)}}Ae^{-i(t-s)H^{(\Omega)}}e^{-isH^{(\Lambda)}}\\
&= ie^{isH^{(\Lambda)}}[(H^{(\Lambda)}-H^{(\Omega)}),A^{(\Omega)}(t-s)]e^{-isH^{(\Lambda)}}. 
\end{align*}
Combining these, the desired inequality (\ref{Atdiff}) is obtained. 

\section{Proof of the inequality (\ref{intf(z)semicirclebound})}
\label{prooff(z)semicirclebound}

The function $f(z)$ in the integrand in the left-hand side of (\ref{intf(z)semicirclebound}) 
is written 
\begin{align*}
f(T_2e^{i\theta})&=\frac{1}{q}\sum_{m=1}^q \sum_{n\ge 1}
\langle\Phi_{0,m}^{(N)}(\phi),A(\phi)\Phi_n^{(N)}(\phi)\rangle
\langle\Phi_n^{(N)}(\phi),B(\phi)\Phi_{0,m}^{(N)}(\phi)\rangle\\
&\times \exp[i(E_n^{(N)}(\phi)-E_{0,m}^{(N)}(\phi))T_2e^{i\theta}],
\end{align*}
where we have written 
\[
A(\phi)=a^\ast P_0^{(\Lambda,N)}(\phi)a
\]
for short. Using $iT_2e^{i\theta}=iT_2\cos\theta-T_2\sin\theta$, the Schwarz inequality 
and the assumption on the spectral gap, one has 
\[
\left|f(T_2e^{i\theta})\right|\le \Vert a\Vert^2\Vert B(\phi)\Vert 
\exp[-T_2\Delta E^{\rm min}\sin\theta].
\]
Therefore, the integral is evaluated as 
\begin{align*}
\left|\int \frac{f(z)}{z-is}dz\right|&\le \int_0^\pi d\theta \left|f(T_2e^{i\theta})\right|
\frac{T_2}{\sqrt{T_2^2\cos^2\theta+(T_2\sin\theta-s)^2}}\\
&\le \Vert a\Vert^2\Vert B(\phi)\Vert \int_0^\pi d\theta 
\exp[-T_2\Delta E^{\rm min}\sin\theta]\frac{1}{\sqrt{1-2s/T_2}}\\
&\le \Vert a\Vert^2\Vert B(\phi)\Vert\frac{\pi}{\Delta E^{\rm min}}
\frac{1-e^{-T_2\Delta E^{\rm min}}}{T_2}
\frac{1}{\sqrt{1-2s/T_2}}.
\end{align*}
This is the desired bound (\ref{intf(z)semicirclebound}). 

\section{Proof of the inequality (\ref{I1bound})}
\label{estI1}

Note that 
\begin{align}
\label{omegaBphiAphicomp}
&\omega_0^{(\Lambda,N)}(\tilde{B}^{(\Lambda)}(t;\phi)A'(\phi);\phi)\\ \nonumber
&=\frac{1}{q}\sum_{m=1}^q \sum_{n\ge 1}\langle\Phi_{0,m}^{(N)}(\phi),B(\phi)\Phi_n^{(N)}(\phi)\rangle 
\langle\Phi_n^{(N)}(\phi),A'(\phi)\Phi_{0,m}^{(N)}(\phi)\rangle \\ \nonumber
&\times \exp[i(E_{0,m}^{(N)}(\phi)-E_n^{(N)}(\phi))t],
\end{align}
where we have written 
\[
A'(\phi)=a P_0^{(\Lambda,N)}(\phi) a^\ast 
\]
for short. Therefore, the integral with respect to time $t$ in (\ref{I1}) is written 
\begin{equation}
\frac{1}{2\pi i}\int_{-T_2}^{+T_2}dt\; \frac{1}{t-is}\exp[-i(E_n^{(N)}(\phi)-E_{0,m}^{(N)}(\phi))t]
e^{-\alpha t^2}. 
\end{equation}
In order to estimate this integral, we prepare a tool \cite{Hastings,HastingsKoma,NachtergaeleSims}. 

For $E\in\re$, one has 
\[
e^{iEt}e^{-\alpha t^2}=\frac{1}{2\sqrt{\pi \alpha}}\int_{-\infty}^{+\infty}dw\; e^{iwt}
\exp[-(w-E)^2/(4\alpha)].
\]
Therefore, 
\begin{align}
\label{inteiEtalphat2}
\frac{1}{2\pi i}\int_{-T_2}^{+T_2}\frac{e^{iEt}e^{-\alpha t^2}}{t-z}dt
&=\frac{1}{2\sqrt{\pi \alpha}}\int_{-\infty}^{+\infty}dw\; \exp[-(w-E)^2/(4\alpha)]\\ \nonumber
&\times\frac{1}{2\pi i}\int_{-T_2}^{+T_2}dt\; \frac{e^{iwt}}{t-z}
\end{align}
for $z\in\co$ satisfying ${\rm Im}\; z>0$. For the integral with respect to time $t$, we write 
\begin{equation}
\label{defFzT2}
F_{z,T_2}(w)=\frac{1}{2\pi i}\int_{-T_2}^{+T_2}dt\; \frac{e^{iwt}}{t-z}
\end{equation}
with ${\rm Im}\; z>0$. Then, the integral (\ref{inteiEtalphat2}) is written 
\begin{align}
\label{intexpiEtalphat2}
\frac{1}{2\pi i}\int_{-T_2}^{+T_2}\frac{e^{iEt}e^{-\alpha t^2}}{t-z}dt
&=\frac{1}{2\sqrt{\pi \alpha}}\int_{-\infty}^{+\infty}dw\; \exp[-(w-E)^2/(4\alpha)]F_{z,T_2}(w)\\ \nonumber
&=\frac{1}{2\sqrt{\pi \alpha}}\int_0^{+\infty}dw\; e^{iwz}\exp[-(w-E)^2/(4\alpha)]\\ \nonumber
&+R_1+R_2
\end{align}
with
\[
R_1=\frac{1}{2\sqrt{\pi \alpha}}\int_0^{+\infty}dw\; \exp[-(w-E)^2/(4\alpha)]
\left[F_{z,T_2}(w)-e^{iwz}\right]
\]
and
\[
R_2=\frac{1}{2\sqrt{\pi \alpha}}\int_{-\infty}^0dw\; \exp[-(w-E)^2/(4\alpha)]F_{z,T_2}(w).
\]
One can easily obtain 
\begin{equation}
\label{FzT2+bound}
\left|F_{z,T_2}(w)-e^{iwz}\right|\le \frac{1}{wT_2}\left[1-e^{-wT_2}\right],\quad \mbox{for\ $w>0$}
\end{equation}
and
\[
\left|F_{z,T_2}(w)\right|\le \frac{1}{|w|T_2}\left[1-e^{-|w|T_2}\right],\quad \mbox{for\ $w<0$},
\]
where we have assumed ${\rm Im}\; z>0$. Using these bounds, one has 
\[
|R_1|\le\frac{1}{2\sqrt{\pi \alpha}}\int_0^{+\infty}dw\; \exp[-(w-E)^2/(4\alpha)]
\frac{1}{wT_2}\left[1-e^{-wT_2}\right] 
\]
and
\[
|R_2|\le \frac{1}{2\sqrt{\pi \alpha}}\int_{-\infty}^0dw\; \exp[-(w-E)^2/(4\alpha)]
\frac{1}{|w|T_2}\left[1-e^{-|w|T_2}\right].
\]
Further, applying the Schwarz inequality yields  
\begin{align*}
|R_1|+|R_2|&\le \frac{1}{2\sqrt{\pi \alpha}}\int_{-\infty}^{+\infty}dw\; \exp[-(w-E)^2/(4\alpha)]
\frac{1}{|w|T_2}\left[1-e^{-|w|T_2}\right]\\
&\le {\rm Const.}\frac{1}{\alpha^{1/4}T_2^{1/2}}. 
\end{align*}
Thus, $R_1$ and $R_2$ are vanishing in the limit $T_2\rightarrow\infty$. 

{From} (\ref{omegaBphiAphicomp}) and (\ref{intexpiEtalphat2}), the integral $I_1$ is written  
\begin{align*}
I_1&=\frac{1}{2\pi i}\int_{-T_2}^{+T_2}dt\; \frac{1}{t-is}
\omega_0^{(\Lambda,N)}(\tilde{B}^{(\Lambda)}(t;\phi)A'(\phi);\phi)e^{-\alpha t^2}\\
&=\frac{1}{q}\sum_{m=1}^q\sum_{n\ge 1}\;
\langle\Phi_{0,m}^{(N)}(\phi),B(\phi)\Phi_n^{(N)}(\phi)\rangle 
\langle\Phi_n^{(N)}(\phi),A'(\phi)\Phi_{0,m}^{(N)}(\phi)\rangle \\
&\times\frac{1}{2\pi i}\int_{-T_2}^{+T_2}dt\; \frac{1}{t-is}\exp[-i(E_n^{(N)}(\phi)-E_{0,m}^{(N)}(\phi))]
e^{-\alpha t^2}\\
&=\frac{1}{q}\sum_{m=1}^q\sum_{n\ge 1}\; \langle\Phi_{0,m}^{(N)}(\phi),B(\phi)\Phi_n^{(N)}(\phi)\rangle 
\langle\Phi_n^{(N)}(\phi),A'(\phi)\Phi_{0,m}^{(N)}(\phi)\rangle \\
&\times\biggl[\frac{1}{2\sqrt{\pi \alpha}}\int_0^{\infty}dw\; e^{-ws}
\exp\bigl[-\{w+(E_n^{(N)}(\phi)-E_{0,m}^{(N)}(\phi))\}^2/(4\alpha)\bigr]\\
&\qquad\qquad\qquad\qquad +R_1+R_2\biggr]. 
\end{align*}
Therefore, we have 
\[
|I_1|\le\frac{1}{2}\Vert B(\phi)\Vert \Vert A'(\phi)\Vert e^{-(\Delta E^{\rm min})^2/(4\alpha)}
\le\frac{1}{2}\Vert B(\phi)\Vert \Vert a \Vert^2 e^{-(\Delta E^{\rm min})^2/(4\alpha)}
\]
in the limit $T_2\rightarrow\infty$. Here, we have used the assumption (\ref{assumpGapphi0})  
on the spectral gap.

\section{Proof of the inequality (\ref{I2bound})}
\label{estI2}

Let $S$ be a large positive number which satisfies $S<T_2$. We decompose the integral $I_2$ into three parts as  
\begin{align*}
I_2&=\frac{1}{2\pi i}\int_{-T_2}^{+T_2}dt\frac{1}{t-is}
\omega_0^{(\Lambda,N)}(a^\ast P_0^{(\Lambda,N)}(\phi)[a,B^{(\Lambda)}(t;\phi)];\phi)e^{-\alpha t^2}\\
&=I_2^{-}+I_2^{0}+I_2^{+}
\end{align*}
with
\[
I_2^{-}=\frac{1}{2\pi i}\int_{-T_2}^{-S}dt\frac{1}{t-is}
\omega_0^{(\Lambda,N)}(a^\ast P_0^{(\Lambda,N)}(\phi)[a,B^{(\Lambda)}(t;\phi)];\phi)e^{-\alpha t^2},
\]
\[
I_2^{0}=\frac{1}{2\pi i}\int_{-S}^{+S}dt\frac{1}{t-is}
\omega_0^{(\Lambda,N)}(a^\ast P_0^{(\Lambda,N)}(\phi)[a,B^{(\Lambda)}(t;\phi)];\phi)e^{-\alpha t^2}
\]
and
\[
I_2^{+}=\frac{1}{2\pi i}\int_{+S}^{+T_2}dt\frac{1}{t-is}
\omega_0^{(\Lambda,N)}(a^\ast P_0^{(\Lambda,N)}(\phi)[a,B^{(\Lambda)}(t;\phi)];\phi)e^{-\alpha t^2}.
\]
The first and third integral are estimated as 
\[
|I_2^{\pm}|\le \frac{1}{2\pi}\frac{1}{\sqrt{S^2+s^2}}\sqrt{\frac{\pi}{\alpha}}\; \Vert a\Vert^2 
\Vert B(\phi)\Vert e^{-\alpha S^2}
\le \frac{\Vert a\Vert^2 
\bigl\Vert B(\phi)\bigr\Vert}{2\sqrt{\pi\alpha}}\frac{e^{-\alpha S^2}}{S}.
\]

For the second integral $I_2^0$, we use the Lieb-Robinson bounds \cite{HastingsKoma,NachtergaeleSims}. 
Note that the operator $B(\phi)$ is a sum of the local operators $b_j(\phi)$ as 
\[
B(\phi)=\sum_j b_j(\phi). 
\] 
{From} this and the Lieb-Robinson bounds, one has    
\begin{align*}
\bigl\Vert [a,B^{(\Lambda)}(t;\phi)]\bigr\Vert 
&\le \sum_j\bigl\Vert [a,b_j^{(\Lambda)}(t;\phi)]\bigr\Vert\\ 
&\le{\rm Const.}|{\rm supp}\; a|\Vert a\Vert 
\sum_j|{\rm supp}\; b_j(\phi)|\Vert b_j(\phi)\Vert e^{-\mu r}(e^{v|t|}-1)\\
&\le {\rm Const.}|{\rm supp}\; a|\Vert a\Vert \bigl\Vert B(\phi)\bigr\Vert e^{-\mu r}(e^{v|t|}-1),
\end{align*}
where the operator $b_j^{(\Lambda)}(t;\phi)$ is the time evolution of $b_j(\phi)$. 
Using this bound, the integral $I_2^0$ can be estimated as 
\begin{align*}
|I_2^0|&\le{\rm Const.}|{\rm supp}\; a|\Vert a\Vert^2 \bigl\Vert B(\phi)\bigr\Vert e^{-\mu r}
\int_{-S}^{+S}dt\; \frac{e^{v|t|}-1}{|t|}e^{-\alpha t^2}\\ 
&\le {\rm Const.}|{\rm supp}\; a|\Vert a\Vert^2 \bigl\Vert B(\phi)\bigr\Vert e^{-\mu r}
\int_0^{+S}dt\; \frac{e^{vt}-1}{t}\\
&\le {\rm Const.}|{\rm supp}\; a|\Vert a\Vert^2 \bigl\Vert B(\phi)\bigr\Vert\exp[vS-\mu r]. 
\end{align*}
Combining this with the bounds for $I_2^{\pm}$, one obtains
\[
|I_2|\le \Vert a\Vert^2 
\bigl\Vert B(\phi)\bigr\Vert\left\{
\frac{1}{\sqrt{\pi\alpha}}\frac{e^{-\alpha S^2}}{S}
+{\rm Const.}|{\rm supp}\; a|\exp[vS-\mu r]\right\}. 
\]
\section{Proof of the inequality (\ref{I3bound})}
\label{estI3}

Note that 
\begin{align}
\label{f(t)inI3}
f(t)&=\omega_0^{(\Lambda,N)}(A(\phi)\tilde{B}^{(\Lambda)}(t;\phi);\phi)\\ \nonumber
&=\frac{1}{q}\sum_{m=1}^q\sum_{n\ge 1}\langle\Phi_{0,m}^{(N)}(\phi),A(\phi)\Phi_n^{(N)}(\phi)\rangle
\langle \Phi_n^{(N)}(\phi),B(\phi)\Phi_{0,m}^{(N)}(\phi)\rangle \\ \nonumber
&\qquad\qquad \times \exp[i(E_n^{(N)}(\phi)-E_{0,m}^{(N)}(\phi))t]. 
\end{align}
Therefore, the integral with respect to time $t$ in $I_3$ is 
\begin{equation}
\label{tintinI3}
\frac{1}{2\pi i}\int_{-T_2}^{+T_2}dt\; \frac{1}{t-is}\exp[i\Delta E_{n,m}t]
(e^{\alpha s^2}-e^{-\alpha t^2}), 
\end{equation}
where we have written $\Delta E_{n,m}=E_n^{(N)}(\phi)-E_{0,m}^{(N)}(\phi)$ for short. 
{From} (\ref{defFzT2}) and (\ref{FzT2+bound}), one has 
\begin{align*}
&\frac{1}{2\pi i}\int_{-T_2}^{+T_2}dt\; \frac{1}{t-is}\exp[i\Delta E_{n,m}t]
e^{\alpha s^2}\\
&=\exp[\alpha s^2-\Delta E_{n,m}s]+R_3,
\end{align*}
where the correction $R_3$ is vanishing in the limit $T_2\rightarrow\infty$ and 
we have used the assumption (\ref{assumpGapphi0}) on the spectral gap. 
Since 
\[
\frac{1}{2\sqrt{\pi\alpha}}\int_{-\infty}^{+\infty}e^{-sw}\exp[-(w-\Delta E_{n,m})^2/(4\alpha)]
=\exp[\alpha s^2-\Delta E_{n,m}s], 
\]
the above integral is written 
\begin{align*}
&\frac{1}{2\pi i}\int_{-T_2}^{+T_2}dt\; \frac{1}{t-is}\exp[i\Delta E_{n,m}t]
e^{\alpha s^2}\\
&=\frac{1}{2\sqrt{\pi\alpha}}\int_{-\infty}^{+\infty}e^{-sw}\exp[-(w-\Delta E_{n,m})^2/(4\alpha)]+R_3.
\end{align*}
Combining this with (\ref{intexpiEtalphat2}), one obtains
\begin{align*}
&\frac{1}{2\pi i}\int_{-T_2}^{+T_2}dt\; \frac{1}{t-is}\exp[i\Delta E_{n,m}t]
(e^{\alpha s^2}-e^{-\alpha t^2})\\
&=\frac{1}{2\sqrt{\pi\alpha}}\int_{-\infty}^{+\infty}e^{-sw}\exp[-(w-\Delta E_{n,m})^2/(4\alpha)]+R_3\\
&-\frac{1}{2\sqrt{\pi \alpha}}\int_0^{+\infty}dw\; e^{-sw}\exp[-(w-\Delta E_{n,m})^2/(4\alpha)] 
-R_1-R_2\\
&=\frac{1}{2\sqrt{\pi\alpha}}\int_{-\infty}^0 e^{-sw}\exp[-(w-\Delta E_{n,m})^2/(4\alpha)]
-R_1-R_2+R_3.
\end{align*}
Using $0<s\le cr$ and $\Delta E_{n,m}\ge \Delta E^{\rm min}$, and choosing $\alpha=\Delta E^{\rm min}/(2cr)$, 
the above integral in the last line can be evaluated as 
\[
\frac{1}{2\sqrt{\pi\alpha}}\int_{-\infty}^0 e^{-sw}\exp[-(w-\Delta E_{n,m})^2/(4\alpha)]
\le \frac{1}{2}e^{-\Delta E^{\rm min}cr/2}. 
\]
Therefore, from (\ref{f(t)inI3}) and (\ref{tintinI3}), the integral $I_3$ of (\ref{I3}) is 
estimated as  
\[
|I_3|\le \frac{1}{2}\Vert a\Vert^2\Vert B(\phi)\Vert e^{-\Delta E^{\rm min}cr/2}
\]
in the limit $T_2\rightarrow\infty$. 

\section{Differentiability of $\Theta^{(2)}(\phi_1)$}
\label{DiffTheta}

Fix $\phi_1^{(0)}\in[0,2\pi)$, and consider $\phi_1$ in the neighborhood of $\phi_1^{(0)}$. 
Then, the product $C^{(2)}(\phi_1)C^{(2)}(\phi_1^{(0)})^\dagger$ of the two unitary matrices 
is nearly equal to the identity matrix. Consider the Hermitian matrix $\Delta\Theta^{(2)}(\phi_1)$ which 
is defined by 
$$
\exp\big[i\Delta\Theta^{(2)}(\phi_1)\big]=C^{(2)}(\phi_1)C^{(2)}(\phi_1^{(0)})^\dagger. 
$$
For a sufficiently small $|\phi_1-\phi_1^{(0)}|$, the matrix $\Delta\Theta^{(2)}(\phi_1)$ is 
well defined and differentiable with respect to $\phi_1$. Clearly, one has 
\begin{equation}
\label{C(2)expDeltaTheta}
C^{(2)}(\phi_1)=\exp\big[i\Delta\Theta^{(2)}(\phi_1)\big]C^{(2)}(\phi_1^{(0)}).
\end{equation}
Further, one obtains 
\begin{align}
\label{intTrCC}
\int_{\phi_1^{(0)}}^{\phi_1^{(1)}}d\phi_1\; {\rm Tr}\; C^{(2)}(\phi_1)^\dagger
\frac{\partial}{\partial\phi_1}C^{(2)}(\phi_1)&=
i\int_{\phi_1^{(0)}}^{\phi_1^{(1)}}d\phi_1\; {\rm Tr}\;\frac{\partial}{\partial\phi_1}\Delta\Theta^{(2)}(\phi_1)\\
\nonumber &=i\big[{\rm Tr}\;\Delta\Theta^{(2)}(\phi_1^{(1)})-{\rm Tr}\Delta\Theta^{(2)}(\phi_1^{(0)})\;\big]\\
\nonumber &=i{\rm Tr}\;\Delta\Theta^{(2)}(\phi_1^{(1)}).  
\end{align} 

On the other hand, using the expression
$$
C^{(2)}(\phi_1)=\exp\big[i\Theta^{(2)}(\phi_1)\big]
$$
and taking the determinant for both sides of (\ref{C(2)expDeltaTheta}), we have 
$$
{\rm Tr}\; \Theta^{(2)}(\phi_1)\equiv {\rm Tr}\;\Delta\Theta^{(2)}(\phi_1)+{\rm Tr}\; \Theta^{(2)}(\phi_1^{(0)})
\quad (\mbox{mod \ }2\pi).
$$
This implies 
$$
{\rm Tr}\;\Delta\Theta^{(2)}(\phi_1^{(1)})\equiv {\rm Tr}\; \Theta^{(2)}(\phi_1^{(1)})
-{\rm Tr}\; \Theta^{(2)}(\phi_1^{(0)})\quad (\mbox{mod \ }2\pi).
$$
Substituting this into the right-hand side of (\ref{intTrCC}), we obtain 
\begin{multline*}
\int_{\phi_1^{(0)}}^{\phi_1^{(1)}}d\phi_1\; {\rm Tr}\; C^{(2)}(\phi_1)^\dagger
\frac{\partial}{\partial\phi_1}C^{(2)}(\phi_1)\\
\equiv i\big[{\rm Tr}\; \Theta^{(2)}(\phi_1^{(1)})
-{\rm Tr}\; \Theta^{(2)}(\phi_1^{(0)})\big]\quad (\mbox{mod \ }2\pi).
\end{multline*}
This result justifies the formula (\ref{I(2)TrTheta}). 

\section{Proof of Lemma~\ref{lem:TopoInvUnch}}
\label{ProofTopoInvUnch}

To begin with, we recall that the relation between the two projections is given by  
$$
P_0^{(\Lambda,N)}(\tilde{\boldsymbol{\phi}},\tilde{\alpha})=\exp[-i\alpha\phi_1n_x]
P_0^{(\Lambda,N)}(\tilde{\boldsymbol{\phi}})
\exp[i\alpha\phi_1n_x]. 
$$
By differentiating with respect to $\phi_1$, one obtains 
\begin{align*}
\frac{\partial}{\partial\phi_1}P_0^{(\Lambda,N)}(\tilde{\boldsymbol{\phi}},\tilde{\alpha})&=
\exp[-i\alpha\phi_1n_x]
P_{0,1}^{(\Lambda,N)}(\tilde{\boldsymbol{\phi}})
\exp[i\alpha\phi_1n_x]\\
&-(i\alpha n_x)\exp[-i\alpha\phi_1n_x]
P_0^{(\Lambda,N)}(\tilde{\boldsymbol{\phi}})
\exp[i\alpha\phi_1n_x]\\
&+\exp[-i\alpha\phi_1n_x]
P_0^{(\Lambda,N)}(\tilde{\boldsymbol{\phi}})
\exp[i\alpha\phi_1n_x](i\alpha n_x). 
\end{align*}
Using these relations, we have 
\begin{align*}
&{\rm Tr}\; P_0^{(\Lambda,N)}(\tilde{\boldsymbol{\phi}},\tilde{\alpha})
[P_{0,1}^{(\Lambda,N)}(\tilde{\boldsymbol{\phi}},\tilde{\alpha}),
P_{0,2}^{(\Lambda,N)}(\tilde{\boldsymbol{\phi}},\tilde{\alpha})]\\ 
&={\rm Tr}\; P_0^{(\Lambda,N)}(\tilde{\boldsymbol{\phi}})
[P_{0,1}^{(\Lambda,N)}(\tilde{\boldsymbol{\phi}}),
P_{0,2}^{(\Lambda,N)}(\tilde{\boldsymbol{\phi}})]\\
&-i\alpha\; {\rm Tr}\; P_0^{(\Lambda,N)}(\tilde{\boldsymbol{\phi}})
[n_x P_0^{(\Lambda,N)}(\tilde{\boldsymbol{\phi}}),
P_{0,2}^{(\Lambda,N)}(\tilde{\boldsymbol{\phi}})]\\
&+i\alpha\; {\rm Tr}\; P_0^{(\Lambda,N)}(\tilde{\boldsymbol{\phi}})
[P_0^{(\Lambda,N)}(\tilde{\boldsymbol{\phi}})n_x,
P_{0,2}^{(\Lambda,N)}(\tilde{\boldsymbol{\phi}})].
\end{align*}
Therefore, it is enough to handle the second and the third terms in the right-hand side.  
Except for the factor $i\alpha$, these two terms are written 
\begin{align*}
&{\rm Tr}\; P_0^{(\Lambda,N)}(\tilde{\boldsymbol{\phi}})
P_{0,2}^{(\Lambda,N)}(\tilde{\boldsymbol{\phi}})n_x 
-{\rm Tr}\; P_0^{(\Lambda,N)}(\tilde{\boldsymbol{\phi}})
n_x P_0^{(\Lambda,N)}(\tilde{\boldsymbol{\phi}})P_{0,2}^{(\Lambda,N)}(\tilde{\boldsymbol{\phi}})\\
&+{\rm Tr}\; P_0^{(\Lambda,N)}(\tilde{\boldsymbol{\phi}})n_x
P_{0,2}^{(\Lambda,N)}(\tilde{\boldsymbol{\phi}})
-{\rm Tr}\; P_0^{(\Lambda,N)}(\tilde{\boldsymbol{\phi}})
P_{0,2}^{(\Lambda,N)}(\tilde{\boldsymbol{\phi}})P_0^{(\Lambda,N)}(\tilde{\boldsymbol{\phi}})n_x\\
&={\rm Tr}\; P_0^{(\Lambda,N)}(\tilde{\boldsymbol{\phi}})
P_{0,2}^{(\Lambda,N)}(\tilde{\boldsymbol{\phi}})
\big[1-P_0^{(\Lambda,N)}(\tilde{\boldsymbol{\phi}})\big]n_x\\
&+{\rm Tr}\; P_0^{(\Lambda,N)}(\tilde{\boldsymbol{\phi}})
n_x\big[1-P_0^{(\Lambda,N)}(\tilde{\boldsymbol{\phi}})\big]P_{0,2}^{(\Lambda,N)}(\tilde{\boldsymbol{\phi}})\\
&=\sum_{m=1}^q \Big[\big\langle\hat{\Phi}_{0,m}^{(N)}(\tilde{\boldsymbol{\phi}}),
P_{0,2}^{(\Lambda,N)}(\tilde{\boldsymbol{\phi}})
\big[1-P_0^{(\Lambda,N)}(\tilde{\boldsymbol{\phi}})\big]n_x
\hat{\Phi}_{0,m}^{(N)}(\tilde{\boldsymbol{\phi}})\big\rangle\\
&+\big\langle\hat{\Phi}_{0,m}^{(N)}(\tilde{\boldsymbol{\phi}}), 
n_x\big[1-P_0^{(\Lambda,N)}(\tilde{\boldsymbol{\phi}})\big]P_{0,2}^{(\Lambda,N)}(\tilde{\boldsymbol{\phi}})
\hat{\Phi}_{0,m}^{(N)}(\tilde{\boldsymbol{\phi}})\big\rangle\Big],
\end{align*}
where we have used Proposition~\ref{prop:Kato}. Further, by using (\ref{diffPhiid}), this is written 
\begin{align*}
&\sum_{m=1}^q \Big[\Big\langle\frac{\partial}{\partial\phi_2}\hat{\Phi}_{0,m}^{(N)}(\tilde{\boldsymbol{\phi}}),
\big[1-P_0^{(\Lambda,N)}(\tilde{\boldsymbol{\phi}})\big]n_x
\hat{\Phi}_{0,m}^{(N)}(\tilde{\boldsymbol{\phi}})\Big\rangle\\
&+\Big\langle\hat{\Phi}_{0,m}^{(N)}(\tilde{\boldsymbol{\phi}}), 
n_x\big[1-P_0^{(\Lambda,N)}(\tilde{\boldsymbol{\phi}})\big]
\frac{\partial}{\partial\phi_2}\hat{\Phi}_{0,m}^{(N)}(\tilde{\boldsymbol{\phi}})\Big\rangle\Big]\\
&=\sum_{m=1}^q \Big[\Big\langle\frac{\partial}{\partial\phi_2}\hat{\Phi}_{0,m}^{(N)}(\tilde{\boldsymbol{\phi}}),
n_x
\hat{\Phi}_{0,m}^{(N)}(\tilde{\boldsymbol{\phi}})\Big\rangle
+\Big\langle\hat{\Phi}_{0,m}^{(N)}(\tilde{\boldsymbol{\phi}}),n_x
\frac{\partial}{\partial\phi_2}\hat{\Phi}_{0,m}^{(N)}(\tilde{\boldsymbol{\phi}})\Big\rangle\Big]\\
&-\sum_{m=1}^q\sum_{m'=1}^q 
\Big[\Big\langle\frac{\partial}{\partial\phi_2}\hat{\Phi}_{0,m}^{(N)}(\tilde{\boldsymbol{\phi}}),
\hat{\Phi}_{0,m'}^{(N)}(\tilde{\boldsymbol{\phi}})\Big\rangle
\Big\langle\hat{\Phi}_{0,m'}^{(N)}(\tilde{\boldsymbol{\phi}}),n_x
\hat{\Phi}_{0,m}^{(N)}(\tilde{\boldsymbol{\phi}})\Big\rangle\\
&+\Big\langle\hat{\Phi}_{0,m}^{(N)}(\tilde{\boldsymbol{\phi}}),n_x
\hat{\Phi}_{0,m'}^{(N)}(\tilde{\boldsymbol{\phi}})\Big\rangle
\Big\langle\hat{\Phi}_{0,m'}^{(N)}(\tilde{\boldsymbol{\phi}}),
\frac{\partial}{\partial\phi_2}\hat{\Phi}_{0,m}^{(N)}(\tilde{\boldsymbol{\phi}})\Big\rangle\Big]\\
&=\sum_{m=1}^q \frac{\partial}{\partial\phi_2}\Big\langle\hat{\Phi}_{0,m}^{(N)}(\tilde{\boldsymbol{\phi}}),
n_x\hat{\Phi}_{0,m}^{(N)}(\tilde{\boldsymbol{\phi}})\Big\rangle\\
&-\sum_{m=1}^q\sum_{m'=1}^q
\Big\langle\hat{\Phi}_{0,m}^{(N)}(\tilde{\boldsymbol{\phi}}),
n_x\hat{\Phi}_{0,m'}^{(N)}(\tilde{\boldsymbol{\phi}})\Big\rangle
\frac{\partial}{\partial\phi_2}\Big\langle\hat{\Phi}_{0,m'}^{(N)}(\tilde{\boldsymbol{\phi}}),
\hat{\Phi}_{0,m}^{(N)}(\tilde{\boldsymbol{\phi}})\Big\rangle.
\end{align*}
Clearly, the double sum in the right-hand side is vanishing. Integrating the first sum with respect to 
$\phi_2$ yields  
\begin{align*}
&\int_0^{2\pi}d\phi_2\; \sum_{m=1}^q \frac{\partial}{\partial\phi_2}
\Big\langle\hat{\Phi}_{0,m}^{(N)}(\tilde{\boldsymbol{\phi}}),
n_x\hat{\Phi}_{0,m}^{(N)}(\tilde{\boldsymbol{\phi}})\Big\rangle\\
&=\sum_{m=1}^q\Big[\Big\langle\hat{\Phi}_{0,m}^{(N)}(\phi_1,2\pi),
n_x\hat{\Phi}_{0,m}^{(N)}(\phi_1,2\pi)\Big\rangle
-\Big\langle\hat{\Phi}_{0,m}^{(N)}(\phi_1,0),
n_x\hat{\Phi}_{0,m}^{(N)}(\phi_1,0)\Big\rangle\Big].
\end{align*}
By using the relation (\ref{PhiC(2)Phi}), we can show that the right-hand side is vanishing. 
Actually, we obtain 
\begin{align*}
&\sum_{m=1}^q\Big\langle\hat{\Phi}_{0,m}^{(N)}(\phi_1,2\pi),
n_x\hat{\Phi}_{0,m}^{(N)}(\phi_1,2\pi)\Big\rangle\\
&=\sum_{m=1}^q\sum_{m'=1}^q\sum_{m''=1}^q C_{m,m'}^{(2)}(\phi_1)^\ast C_{m,m''}^{(2)}(\phi_1)
\Big\langle\hat{\Phi}_{0,m'}^{(N)}(\phi_1,0),
n_x\hat{\Phi}_{0,m''}^{(N)}(\phi_1,0)\Big\rangle\\
&=\sum_{m'=1}^q \Big\langle\hat{\Phi}_{0,m'}^{(N)}(\phi_1,0),
n_x\hat{\Phi}_{0,m'}^{(N)}(\phi_1,0)\Big\rangle. 
\end{align*}
Thus, the corresponding contribution is vanishing.

\bigskip\bigskip
\thanks{\textbf{Acknowledgement:} It is a pleasure to thank the following people for 
discussions and correspondence: Hosho Katsura and Mahito Kohmoto.}


\end{document}